\documentclass[fleqn,thmsb,a4paper,12pt]{article}
\usepackage{amssymb}
\usepackage{amsmath}
\usepackage{color}
\usepackage{amsthm}
\usepackage{amssymb}
\usepackage{amsfonts}
\usepackage{amstext}
\usepackage[USenglish]{babel}

\setcounter{MaxMatrixCols}{10}

\newtheorem{acorollary}{Corollary}

\newtheorem{atheorem}{Theorem}
\theoremstyle{definition}
\newtheorem{adefinition}{Definition}
\newtheorem{aexample}{Example}
\newtheorem{aremark}{Remark}
\setlength{\topmargin}{0cm}
\setlength{\textheight}{22.70cm}
\setlength{\oddsidemargin}{0.60cm}
\setlength{\evensidemargin}{0.60cm}
\setlength{\textwidth}{15.00cm}

\begin{document}

\author{Ata Atay\thanks{Department of Mathematical Economics, Finance and Actuarial Sciences, University of Barcelona, Spain. E-mail: aatay@ub.edu} \and Ana Mauleon\thanks{CEREC and CORE/LIDAM, UCLouvain, Belgium. E-mail: ana.mauleon@usaintlouis.be} \and Vincent Vannetelbosch\thanks{CORE/LIDAM, UCLouvain, Belgium. E-mail: vincent.vannetelbosch@uclouvain.be}}
\title{\textbf{School Choice with Farsighted Students}}
\date{November 21, 2022}
\maketitle

\begin{abstract}
We consider priority-based school choice problems with farsighted students. We show that a singleton set consisting of the matching obtained from the Top Trading Cycles (TTC) mechanism is a farsighted stable set. However, the matching obtained from the Deferred Acceptance (DA) mechanism may not belong to any farsighted stable set. Hence, the TTC mechanism provides an assignment that is not only Pareto efficient but also farsightedly stable. Moreover, looking forward three steps ahead is already sufficient for stabilizing the matching obtained from the TTC. \\

Keywords: school choice;  top trading cycle; stable sets; farsighted students.\newline
JEL classification: C70, C78.
\end{abstract}

\thispagestyle{empty}\newpage

\pagenumbering{arabic}

\section{Introduction}\label{S1}

Abdulkadiro\u{g}lu and Sönmez (2003) formulate the school choice problem of assigning students to schools as a mechanism design problem.\footnote{Abdulkadiro\u{g}lu and Anderson (2022) provide an extensive survey of school choice. See also Roth and Sotomayor (1990) or Haeringer (2017) for an introduction to matching problems.} Each student has strict preferences over all schools and each school has a strict priority ordering imposed by state or local laws of all students. The outcome of a school choice problem is a matching that assigns schools to students such that each student is assigned one school and no school is assigned to more students than its capacity. Two  prominent mechanisms used for priority-based matching are the Gale and Shapley's (1962) Deferred Acceptance (DA) mechanism and the Shapley and Scarf's (1974) Top Trading Cycles (TTC) mechanism. Both mechanisms are strategy-proof: truthful preference revelation is a weakly dominant strategy for students.\footnote{Reny (2022) introduces the Priority-Efficient (PE) mechanism that always selects a Pareto efficient matching that dominates the DA stable matching, but PE is not strategy-proof. Another attempt to improve the efficiency of the DA mechanism can be found in Kesten (2010).} On the one hand, the TTC mechanism is Pareto efficient while the DA mechanism may select an inefficient matching. On the other hand, the DA mechanism is stable while the TTC mechanism may select an unstable matching. 

A stable matching in the context of school choice eliminates justified envy in the sense that there is no unmatched student-school pair $(i,s)$ where student $i$ prefers school $s$ to her assignment and she has higher priority than some other student who is assigned a seat at school $s$. Since only the preferences of students matters in the context of school choice, the stable matching that results from the DA Mechanism Pareto dominates any other matching that eliminates justified envy and is strategy-proof. However, this matching may still be Pareto-dominated.\footnote{Do\u{g}an and Ehlers (2021) characterize the priority profiles for which there exists a Pareto improvement over the DA matching that is minimally unstable among Pareto efficient matchings.}  A Pareto efficient and strategy-proof matching is obtained by the TTC mechanism. There is no mechanism that is both Pareto efficient and stable.\footnote{See e.g. Roth (1982). Che and Tercieux (2019) show that both Pareto efficiency and stability can be achieved asymptotically using DA and TTC mechanisms when agents have uncorrelated preferences.}   

Up to now, it has been assumed that all students are myopic when they decide to join or leave some school. Myopic students do not anticipate that other students may react to their decisions. However, looking forward joining school $s^{\prime}$, a farsighted student $i$ may decide to join some school $s$ to push student $j$ out of school $s$, and later on exchanging her priority at school $s$ with another student $k$ who has priority at school $s^{\prime}$, prefers $s$ to $s^{\prime}$ and is worse ranked than $j$ at $s$.

Does the TTC mechanism lead to a stable matching when students become farsighted? To address this question, we adopt the notion of farsighted stable set for school choice problems to study the matchings that are stable when students farsightedly apply to schools while schools myopically and mechanically enroll students.\footnote{See Chwe (1994), Mauleon, Vannetelbosch and Vergote (2011), Ray and Vohra (2015, 2019), Herings, Mauleon and Vannetelbosch (2019, 2020), Luo, Mauleon and Vannetelbosch (2021) for definitions of the farsighted stable set.}  A farsighted improving path for school choice problems consists of a sequence of matchings that can emerge when farsighted students form or destroy matches based on the improvement the end matching offers them relative to the current one while myopic schools always accept any student on their priority lists unless they have full capacity. In the case of full capacity, a school accepts to replace the current match by another match if each student who leaves is replaced by a newly enrolled student who has a higher priority. A set of matchings is a farsighted stable set if (Internal Stability) for any two matchings belonging to the set, there is no farsighted improving path connecting from one matching to the other one, and (External Stability) there always exists a farsighted improving path from every matching outside the set to some matching within the set.

We show that, once students are farsighted, the matching obtained from the TTC algorithm becomes stable. A singleton set consisting of the TTC matching is a farsighted stable set. In fact, we construct a farsighted improving path from any matching leading to the TTC matching. Along the farsighted improving path, students belonging to cycles sequentially act in the order of the formation of cycles in the TTC algorithm. Looking forward towards the end matching (i.e. the TTC matching), students belonging to a cycle first get a seat at the school they have priority. Second, they leave that school, and by doing so, guaranteeing a free seat at that school. Third, they join the school they match to in the TTC matching.

Thus, the matching obtained from the TTC algorithm is not only Pareto efficient and strategy-proof, it is also farsightedly stable. On the contrary, the matching obtained from the DA algorithm may not belong to any farsightedly stable set. In addition, starting from any matching, students only need to look forward (at least) three steps ahead to have incentives for engaging a move towards the matches they have in the matching obtained from the TTC algorithm. Hence, not much farsightedness is already sufficient for stabilizing the matching obtained from the TTC algorithm.

Morill (2015) and Hakimov and Kesten (2018) introduce variations of the TTC mechanism for selecting a matching that intends to be more equitable or fair by eliminating avoidable justified envy situations. Morill (2015) proposes both the First Clinch and Trade (FCT) mechanism and the Clinch and Trade (CT) mechanism, while Hakimov and Kesten (2018) develop the Equitable Top Trading Cycles (ETTC) mechanism. We show that the matchings obtained from those three variations are farsightedly stable too. That is, a singleton set consisting of the FCT matching (CT matching / ETTC matching) is a farsighted stable set. The TTC algorithm as well as its three variations lead to Pareto efficient matchings. One may be tempted to infer that any Pareto efficient matching can be stabilized once students are farsighted. However, we show that Pareto efficiency is not a sufficient condition for a matching to be farsightedly stable.\footnote{The matching obtained from the Immediate Acceptance (IA) algorithm (i.e. the Boston mechanism) may not belong to any farsighted stable set. The IA mechanism satisfies Pareto efficiency but is not strategy-proof.}

To sum up, farsightedness stabilizes the matching obtained from the TTC algorithm while destabilizes the matching obtained from the DA algorithm, and so may tip the balance in favor of TTC or one of its variations.

In addition, Abdulkadiro\u{g}lu, Che, Pathak, Roth, and Tercieux (2020) provide both theoretical and empirical results supporting the TTC mechanism over alternative mechanisms. The TTC mechanism is justified envy minimal in the class of Pareto efficient and strategy-proof mechanisms in priority-based one-to-one matching problems. Justified envy minimal means that the mechanism satisfies Pareto efficiency with the minimal amount of (myopic) instability. In priority-based many-to-one matching problems, the TTC mechanism admits less justified envy than the Serial Dictatorship mechanism in an average sense. Recently, Do\u{g}an and Ehlers (2022) show that, for any stability comparison satisfying three basic properties, the TTC mechanism is minimally unstable among Pareto efficient and strategy-proof mechanisms when schools have unit capacities. 

The paper is organized as follows. In Section \ref{S2}, we introduce priority-based school choice problems. In Section \ref{S3}, we provide a formal description of the TTC mechanism and its algorithm. In Section \ref{S4}, we introduce the notions of farsighted improving path and farsighted stable set for school choice problems, and we provide our main result. In Section \ref{S5}, we look at how much farsightedness is needed for getting our main result. In Section \ref{S6}, we consider three variations of the TTC mechanism. In Section \ref{S7}, we conclude.

\section{School choice problems}\label{S2} 

A school choice problem is a list $\langle I,S,q,P,F\rangle$ where 
\begin{itemize}
\item[(\textbf{i})] $I=\{i_1,...,i_n\}$ is the set of students,
\item[(\textbf{ii})] $S=\{s_1,...,s_m\}$ is the set of schools,
\item[(\textbf{iii})] $q=(q_{s_1},...,q_{s_m})$ is the quota vector where $q_s$ is the number of available seats at school $s$,
\item[(\textbf{iv})] $P=(P_{i_1},...,P_{i_n})$ is the preference profile where $P_i$ is the strict preference of student $i$ over the schools and her outside option,
\item[(\textbf{v})] $F=(F_{s_1},...,F_{s_m})$ is the strict priority structure of the schools over the students.
\end{itemize}

Let $i$ be a generic student and $s$ be a generic school. We write $i$ for singletons $\{i\} \subseteq I$ and $s$ for singletons $\{s\} \subseteq S$. The preference $P_i$ of student $i$ is a linear order over $S \cup i$. Student $i$ prefers school $s$ to school $s^\prime$ if $s P_i s^\prime $. School $s$ is acceptable to student $i$ if $s P_i i $. We often write $P_i = s,s^\prime,s^{\prime \prime}$ meaning that student $i$'s most preferred school is $s$, her second best is $s^\prime$, her third best is $s^{\prime \prime}$ and any other school is unacceptable for her. Let $R_i$ be the weak preference relation associated with the strict preference relation $P_i$.\footnote{Haeringer and Klijn (2009) investigate constrained school choice problems where students can only rank a fixed number of schools.} 

The priority $F_s$ of school $s$ is a linear order over $I$. That is, $F_s$ assigns ranks to students according to their priority for school $s$. The rank of student $i$ for school $s$ is denoted $F_s(i)$ and $F_s(i) < F_s(j)$ means that student $i$ has higher priority for school $s$ than student $j$. For $s \in S, i \in I$, let $\Phi(s,i) = \{j \in I \mid F_s(j) < F_s(i)\}$ be the set of students who have higher priority than student $i$ for school $s$.

An outcome of a school choice problem is a matching $\mu : I \cup S \rightarrow 2^{I} \cup S$ such that for any $i \in I$ and any $s \in S$,
\begin{itemize}
\item[(\textbf{i})]$\mu(i) \in S \cup i$,
\item[(\textbf{ii})]$\mu(s) \in 2^{I}$,
\item[(\textbf{iii})]$\mu(i)=s \Leftrightarrow i \in \mu(s)$,
\item[(\textbf{iv})]$\# \mu(s) \leq q_s$.
\end{itemize}

Condition (i) means that student $i$ is assigned a seat at school $s$ under $\mu$ if $\mu(i)=s$ and is unassigned under $\mu$ if $\mu(i)=i$. Condition (iv) requires that no school exceeds its quota under $\mu$. That is, for any $s \in S$, we have $\# \mu(s) = \# \{i \in I \mid \mu (i) = s \} \leq q_s$. The set of all matchings is denoted $\mathcal{M}$.\footnote{Throughout the paper we use the notation $\subseteq$ for weak inclusion and $\subset$ for strict inclusion. Finally, $\#$ will refer to the notion of cardinality.} For instance, 
\[\mu=\Big( \begin{array}{cccc} i_1 & i_2 & i_3 & i_4 \\ s_2 & s_1 & s_1 & i_4\end{array}\Big)\] is the matching where student $i_1$ is assigned to school $s_2$, students $i_2$ and $i_3$ are assigned to school $s_1$ and student $i_4$ is unassigned. For convenience, we often write such matching as $\mu = \{(i_1,s_2),(i_2,s_1),(i_3,s_1),(i_4,i_4)\}$.

Given a school choice problem $\langle I,S,q,P,F \rangle$, a matching $\mu$ is stable if 
\begin{itemize}
\item[(\textbf{i})]for all $i \in I$ we have $\mu(i) R_i i$ (individual rationality),
\item[(\textbf{ii})]for all $i \in I$ and all $s \in S$, if $ s P_i \mu(i)$ then $\# \{j \in I \mid \mu (j) = s \} = q_s$ (non-wastefulness),
\item[(\textbf{iii})]for all $i,j \in I$ with $\mu(j)=s$, if $\mu(j) P_i \mu(i)$ then $j \in \Phi(s,i)$ (no justified envy).
\end{itemize}

Let $\mathcal{S}(I,S,q,P,F)$ be the set of stable matchings. A matching $\mu^\prime$ Pareto dominates a matching $\mu$ if $\mu^\prime (i) R_i \mu(i)$ for all $i \in I$ and $\mu^\prime (j) P_j \mu(j)$ for some $j \in I$. A matching is Pareto efficient if it is not Pareto dominated by another matching. Let $\mathcal{E}(I,S,q,P,F)$ be the set of Pareto efficient matchings.

A mechanism systematically selects a matching for any given school choice problem $(I,S,q,P,F)$.  A mechanism is individually rational (non-wasteful / stable / Pareto efficient) if it always selects an individually rational (non-wasteful / stable / Pareto efficient) matching. A mechanism is strategy-proof if no student can ever benefit by unilaterally misrepresenting her preferences.

\section{The Top Trading Cycles algorithm}\label{S3} 

Abdulkadiro\~{g}lu and Sönmez (2003) introduce the Top Trading Cycles (TTC) mechanism for selecting a matching for each school problem. The TTC mechanism finds a matching by means of the following TTC algorithm.

\begin{itemize}
\item[Step $1$.] Set $q_s^1 = q_s$ for all $s \in S$ where $q_s^1$ is equal to the initial capacity of school $s$ at Step $1$. Each student $i \in I$ points to the school that is ranked first in $P_i$. If there is no such school, then student $i$ points to herself and she forms a self-cycle. Each school $s \in S$ points to the student that has the highest priority in $F_s$. Since the number of students and schools are finite, there is at least one cycle. A cycle is an ordered list of distinct schools and distinct students $(s^1,i^1,s^2,...,s^l,i^l)$ where $s^1$ points to $i^1$ (denoted $s^1 \mapsto i^1$), $i^1$ points to $s^2$ ($i^1 \mapsto s^2$), $s^l$ points to $i^l$ ($s^l \mapsto i^l$) and $i^l$ points to $s^1$ ($i^l \mapsto s^1$). Each school (student) can be part of at most one cycle. Every student in a cycle is assigned a seat at the school she points to and she is removed. Similarly, every student in a self-cycle is not assigned to any school and is removed. If a school $s$ is part of a cycle, then its remaining capacity $q_s^2$ is equal to $q_s^1 - 1$. If a school $s$ is not part of any cycle, then its remaining capacity $q_s^2$ remains equal to $q_s^1$. If $q_s^2 = 0$, then school $s$ is removed. Let $C_1=\{c_1^1,c_1^2,...,c_1^{L_1}\}$ be the set of cycles in Step $1$ (where $L_1 \geq 1$ is the number of cycles in Step $1$). Let $I_1$ be the set of students who are assigned to some school at Step $1$. Let $m_1^l$ be all the matches from cycle $c_1^l$ that are formed in Step $1$ of the algorithm:
\begin{equation*}
m_1^l=\left\lbrace
\begin{matrix}
\{(i,s) \mid i,s \in c_1^l \text{ and } i\mapsto s \} & \text{if} & c_1^l \neq (j) \\
\{(j,j) \} & \text{if} & c_1^l = (j)
\end{matrix}\right.
\end{equation*}
where $(j,j)$ simply means that student $j$ who is in a self-cycle ends up being definitely unassigned to any school. Let $M_1 = \cup_{l=1}^{L_1} m_1^l$ be all the matches between students and schools formed in Step $1$ of the algorithm.

\item[Step $k\geq 2$.] Notice that $q_s^k$ keeps track of how many seats are still available at the school at Step $k$ of the algorithm. Each remaining student $i \in I \setminus \cup_{l=1}^{k-1}I_l$ points to the school $s$ that is ranked first in $P_i$ such that $q_s^k \geq 1$. If there is no such school, then student $i$ points to herself and she forms a self-cycle. Each school $s \in S$ such that $q_s^k \geq 1$ points to the student $j \in I \setminus \cup_{l=1}^{k-1}I_l$ that has the highest priority in $F_s$. There is at least one cycle. Every student in a cycle is assigned a seat at the school she points to and she is removed. Similarly, every student in a self-cycle is not assigned to any school and is removed. If a school $s$ is part of a cycle, then its remaining capacity $q_s^{k+1}$ is equal to $q_s^{k} - 1$. If a school $s$ is not part of any cycle, then its remaining capacity $q_s^{k+1}$ remains equal to $q_s^{k}$. If $q_s^{k+1} = 0$, then school $s$ is removed. Let $C_k=\{c_k^1,c_k^2,...,c_k^{L_k}\}$ be the set of cycles in Step $k$ (where $L_k \geq 1$ is the number of cycles in Step $k$). Let $I_k$ be the set of students who are assigned to some school at Step $k$. 

Let $m_k^l$ be all the matches from cycle $c_k^l$ that are formed in Step $k$ of the algorithm.
\begin{equation*}
m_k^l=\left\lbrace
\begin{matrix}
\{(i,s) \mid i,s \in c_k^l \text{ and } i\mapsto s \} & \text{if} & c_k^l \neq (j) \\
\{(j,j) \} & \text{if} & c_k^l = (j)
\end{matrix}\right.
\end{equation*}
Let $M_k = \cup_{l=1}^{L_k} m_k^l$ be all the matches between students and schools formed in Step $k$ of the algorithm.

\item[End.] The algorithm stops when all students have been removed. Let $\bar{k}$ be the step at which the algorithm stops. Let $\mu^T$ denote the matching obtained from the Top Trading Cycles mechanism and it is given by $\mu^T = \cup_{k=1}^{\bar{k}} M_k$. 
\end{itemize}

Abdulkadiro\~{g}lu and Sönmez (2003) show that the TTC mechanism is Pareto efficient and strategy-proof. TTC is also individually rational and non-wasteful, but it is not stable.

In addition to TTC, two alternative mechanisms are also central to the theory of school choice and commonly adopted all over the world: the Deferred Acceptance (DA) algorithm and the Immediate Acceptance (IA) algorithm, also known as the Boston mechanism. Let $\mu^D$ denote the matching obtained from the DA mechanism and $\mu^B$ denote the matching obtained from the IA (or Boston) mechanism.  

\section{Farsighted Stable Sets for School Choice}\label{S4} 

We adopt the notion of farsighted stable set for school choice problems to study the matchings that are stable when students farsightedly apply to schools while schools myopically and mechanically enroll students. The notion of a farsighted stable set for school choice problems is adapted from the notion of a myopic-farsighted stable set that has been introduced by Herings, Mauleon and Vannetelbosch (2020) for two-sided matching problems and by Luo, Mauleon and Vannetelbosch (2021) for network formation games.\footnote{When all agents are myopic, the myopic-farsighted stable set boils down to the pairwise CP vNM set as defined in Herings, Mauleon and Vannetelbosch (2017) for two-sided matching problems. Ehlers (2007) introduces another set-valued concept based upon the concept of vNM stable sets.} 

A farsighted improving path for school choice problems is a sequence of matchings that can emerge when farsighted students form or destroy matches based on the improvement the end matching offers them relative to the current one while myopic schools form or destroy matches based on the improvement the next matching in the sequence offers them relative to the current one.

Let $\mathcal{P}(\mu(s))$ denote the power set of the set $\mu(s)$, i.e. the set of all subsets of $\mu(s)$. 

\begin{adefinition}\label{D3} 
Given a matching $\mu$, a coalition $N \subseteq I \cup S$ is said to be able to enforce a matching $\mu^{\prime}$ over $\mu$ if the following conditions hold: 
\begin{itemize}
\item[(i)]$\mu^{\prime} (s) \notin \mathcal{P}(\mu(s)) \cup \{s\}$ implies $\mu^{\prime} (s) \setminus \mu(s) \cup \{s\} \subseteq N$ and
\item[(ii)]$\mu^{\prime} (s) \in \mathcal{P}(\mu(s)) \cup \{s\}$, $\mu^{\prime} (s) \neq \mu(s)$, implies either $s$ or $\mu(s) \setminus \mu^{\prime}(s)$ or $s$ together with a non-empty subset of $\mu(s) \setminus \mu^{\prime}(s)$ should be in $N$.
\end{itemize}
\end{adefinition}

Condition (i) says that any new match in $\mu^{\prime}$ that contains different partners than in $\mu$ should be such that $s$ and the different partners of $s$ belong to $N$. Condition (ii) states that so as to leave some (or all) positions of one existing match in $\mu$ unfilled, either $s$ or the students leaving such positions or $s$ and some non-empty subset of such students should be in $N$.

\begin{adefinition}\label{D1C} 
Let $\langle I,S,q,P,F\rangle$ be a school choice problem. A farsighted improving path from a matching $\mu \in \mathcal{M}$ to a matching $\mu^{\prime } \in \mathcal{M} \setminus \{\mu \}$
is a finite sequence of distinct matchings $\mu _{0},\ldots ,\mu _{L}$ with $\mu _{0} = \mu$ and $\mu_{L} = \mu^{\prime }$ such that for every $l \in \{0,\ldots ,L-1\}$ there is a coalition $N_{l } \subseteq I \cup S$ that can enforce $\mu_{l+1}$ from $\mu_{l}$ and
\begin{itemize}
\item[(\textbf{i})]$\mu_{L}(i) R_i \mu _{l}(i) \text{ for all } i \in N_{l} \cap I \text{ and }  \mu_{L}(j) P_j \mu _{l}(j) \text{ for some } j \in N_{l} \cap I$,
\item[(\textbf{ii})]For every $s \in N_{l} \cap S$ such that $\#\mu_l(s) + \#\{i \in I \mid i \notin \mu_l(s) , i \in \mu_{l+1}(s)\} > q_s$, there is $\{i_1,\ldots,i_J\} \subseteq \{i \in I \mid i \notin \mu_l(s), i \in \mu_{l+1}(s)\}$ and $\{j_1,\ldots,j_J\} = \{i \in I \mid i \in \mu_l(s), i \notin \mu_{l+1}(s)\}$ such that 
\begin{equation*}
\begin{split}
F_s(i_1) & < F_s(j_1) \\
F_s(i_2) & < F_s(j_2) \\
& \vdots \\
F_s(i_J) & < F_s(j_J).
\end{split}
\end{equation*}
\end{itemize}
\end{adefinition}

Notice that $\mu_l(s)$ are the students who are assigned to school $s$ in $\mu_l$ and $\{i \in I \mid i \notin \mu_l(s) , i \in \mu_{l+1}(s)\}$ are the students who join school $s$ in $\mu_{l+1}$. Thus, a farsighted improving path for school choice problems consists of a sequence of matchings where along the sequence (i) students form or destroy matches based on the improvement the end matching offers them relative to the current one while (ii) schools always accept any student on their priority lists unless they have full capacity. In the case of full capacity, a school $s \in N_l \cap S$ accepts to replace the match $\mu_l$ by $\mu_{l+1}$ if each student $i \in \{j \in I \mid j \in \mu_l(s) , j \notin \mu_{l+1}(s)\}$ who leaves or is evicted from school $s$ from $\mu_l$ to $\mu_{l+1}$ is replaced by a newly enrolled student who has a higher priority.  

Let some $\mu \in \mathcal{M}$ be given. If there exists a farsighted improving path from a matching $\mu$ to a matching $\mu^{\prime}$, then we write $\mu \rightarrow \mu^{\prime}$. The set of matchings $\mu^{\prime }\in \mathcal{M}$ such that there is a farsighted improving path from $\mu$ to $\mu^{\prime }$ is denoted by $\phi(\mu)$, so $\phi(\mu)=\{\mu^{\prime} \in \mathcal{M} \mid \mu \rightarrow \mu^{\prime} \}$.

\begin{adefinition} \label{D5} Let $\langle I,S,q,P,F\rangle$ be a school choice problem. A set of matchings $V \subseteq \mathcal{M}$ is a farsighted stable set if it satisfies:
\begin{itemize}
\item[(\textbf{i})] Internal stability (IS): For every $\mu, \mu^{\prime} \in V$,
it holds that $\mu^{\prime} \notin \phi(\mu).$
\item[(\textbf{ii})] External stability (ES): For every $\mu \in \mathcal{M} \setminus V$, it holds that $\phi(\mu) \cap V \neq \emptyset. $
\end{itemize}
\end{adefinition}

Condition (i) of Definition \ref{D5} corresponds to internal stability. For any two matchings $\mu $ and $\mu^{\prime }$ in the farsighted stable set $V$ there is no farsighted improving path connecting $\mu$ to $\mu^{\prime }$. Condition (ii) of Definition \ref{D5} expresses external stability. There always exists a farsighted improving path from every matching $\mu$ outside the farsighted stable set $V$ to some matching in $V$.

When all agents are farsighted, the notion of the farsighted stable set in Definition \ref{D5} coincides with the definition of the vNM farsightedly stable set of Mauleon, Vannetelbosch and Vergote (2011).

Given a matching $\mu \in \mathcal{M} $ with student $i \in I $ matched to school $s \in S, $ so $\mu(i) = s, $ the matching $\mu^{\prime} $ that is identical to $\mu, $ except that the match between $i$ and $s$ has been destroyed by either $i$ or $s$, is denoted by $\mu - (i,s)$. Given a matching $\mu \in \mathcal{M}$ such that $i \in I$ and $s \in S$ are not matched to one another, the matching $\mu^{\prime}$ that is identical to $\mu $, except that the pair $(i,s)$ has formed at $\mu^{\prime}$ (and some $j \in \mu(s)$ becomes unassigned if $\#\mu(s)=q_s$), is denoted by $\mu + (i,s)$.

\begin{atheorem}\label{T5}
Let $\langle I,S,q,P,F\rangle$ be a school choice problem and $\mu^T$ be the matching obtained from the Top Trading Cycles mechanism. The singleton set $\{\mu^T\}$ is a farsighted stable set.
\end{atheorem}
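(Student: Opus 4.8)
The plan is to check the two conditions of Definition~\ref{D5} for $V=\{\mu^T\}$ separately. Internal stability is immediate: a farsighted improving path runs from a matching to a \emph{different} matching, so $\mu^T\notin\phi(\mu^T)$ and (IS) holds vacuously. Everything therefore reduces to external stability: for an arbitrary matching $\mu\neq\mu^T$ I must produce a farsighted improving path from $\mu$ to $\mu^T$, i.e.\ show $\mu^T\in\phi(\mu)$.

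To construct the path, run the TTC algorithm and list its cycles in order of formation, $c^{(1)},c^{(2)},\dots,c^{(R)}$ (first all cycles of Step~$1$, then those of Step~$2$, and so on). Starting from $\mu_0=\mu$, process the cycles one at a time in this order, at each stage letting only the students of the current cycle move. For a non-self cycle $c=(s^1,i^1,\dots,s^l,i^l)$ with $s^t\mapsto i^t$ and $i^t\mapsto s^{t+1}$ (indices mod $l$), the block of moves has three phases, as announced in the Introduction: in Phase~1 each $i^t$ takes a seat at the school $s^t$ at which she currently has the highest priority among the not-yet-settled students, displacing, if $s^t$ is full, one suitably chosen not-yet-settled occupant; in Phase~2 each $i^t$ leaves $s^t$, thereby guaranteeing a free seat there; in Phase~3 each $i^t$ joins $s^{t+1}=\mu^T(i^t)$, which is feasible precisely because $i^{t+1}$ just vacated a seat at $s^{t+1}$ in Phase~2. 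A self-cycle $(j)$ contributes at most the one move making $j$ unassigned, which is exactly $\mu^T(j)$; note $j$ cannot currently sit at a school she finds acceptable, since every such school had no remaining capacity at the step at which $j$ forms her self-cycle and is therefore already filled with settled students by the time $j$'s self-cycle is reached. After all $R$ cycles are processed every student is matched to her TTC partner, so the sequence ends at $\mu^T$; since $\mu\neq\mu^T$ at least one genuine move occurs, and after deleting repetitions (a farsighted improving walk with a repeated matching can always be shortened) the matchings are distinct.

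It then remains to verify that this is a genuine farsighted improving path, which I would do by induction on the number of processed cycles, the hypothesis being that after processing $c^{(1)},\dots,c^{(r)}$ every student in one of these cycles is matched to her TTC school (so the schools in the next cycle already hold the correct settled students, and settled students are never moved thereafter). Three things must be checked at every step. (a) Enforceability in the sense of Definition~\ref{D3}: a student joining a school places herself and that school in the coalition, a student leaving or being evicted from a school places herself in the coalition, and a displaced victim need not be in the coalition. (b) Condition~(i) of Definition~\ref{D1C}: every student ever active in a coalition weakly prefers $\mu^T$ to her current match, and at least one strictly gains; the key point is that any school $s$ with $s\,P_i\,\mu^T(i)$ has no remaining capacity at the step at which $i$ receives $\mu^T(i)$, hence all its seats are assigned in earlier steps, hence by the induction hypothesis $s$ is full of settled students when $i$'s cycle is reached, so $i$ is never sitting at such a school, and since Phase~3 always gives $i$ a school she finds acceptable one can always keep in each coalition a cycle member who strictly gains (one exists as long as the block is non-trivial), placing her own move last. (c) Condition~(ii) of Definition~\ref{D1C}: when $i^t$ joins a full school $s^t$ in Phase~1 she has, by $s^t\mapsto i^t$, strictly higher priority at $s^t$ than every not-yet-settled student and in particular than the displaced occupant, so $F_{s^t}(i^t)<F_{s^t}(\text{displaced student})$; Phases~2 and~3 never raise a school above its quota.

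The part I expect to require the most care is the bookkeeping inside Phase~1: (i) showing that a full school $s^t$ to which $i^t$ must be newly added always contains a not-yet-settled occupant whose TTC school is \emph{not} $s^t$, so that the displacement never dislodges a student already settled, nor one who merely happens to sit at her own future TTC seat; and (ii) ordering the moves within a cycle so that the circular dependence among its schools is resolved — this is exactly what the ``leave then rejoin'' structure of Phases~2--3 achieves — while preserving distinctness. Claim (i) follows from a capacity count: the occupants of $s^t$ that are TTC-occupants of $s^t$, together with the unique cycle member whose TTC school is $s^t$ in case she is not already there, number at most $q_{s^t}$, so a full $s^t$ needing a newcomer must also contain an occupant whose TTC school differs from $s^t$. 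Once these combinatorial points are settled the induction goes through, external stability is established, and together with the trivial internal stability this shows that $\{\mu^T\}$ is a farsighted stable set.
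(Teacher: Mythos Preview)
Your proposal is correct and follows essentially the same approach as the paper: process the TTC cycles in their order of formation and, for each cycle, execute the three-phase block (join the school that points to you, leave it, then join your TTC school), verifying enforceability and conditions~(i)--(ii) along the way. If anything you are more careful than the paper on the Phase~1 bookkeeping---your capacity-count argument that a full $s^t$ always contains a not-yet-settled occupant to displace is a point the paper's construction leaves implicit.
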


\begin{proof}
Since $\{\mu^T\}$ is a singleton set, internal stability (IS) is satisfied. (ES) Take any matching $\mu \neq \mu^T$, we need to show that $\phi(\mu) \ni \mu^T$. We build in steps a farsighted improving path from $\mu$ to $\mu^T$.
\begin{itemize}
\item[Step 1.1.]If $m_1^1 \subseteq \mu$ and $1 \neq L_1$ then go to Step 1.2 with $\mu_{1,1}^{\prime \prime \prime} = \mu$. If $m_1^1 \subseteq \mu$ and $1 = L_1$ then go to Step 1.End with $\mu_{1,L_1}^{\prime \prime \prime} = \mu$. If $m_1^1 \nsubseteq \mu$ then $\mu_{1,1}^{\prime} = \mu - \{(i,\mu(i)) \mid (i,\mu^T(i)) \in m_1^1 \text{ and } \mu(i) \neq i\} + \{(i,s) \mid i,s \in c_1^1 \text{ and } s \mapsto i\} - \{(j,s) \in \mu \mid s \in c_1^1 \text{, } \mu(s) \cap c_1^1 = \emptyset \text{, } \# \mu(s)=q_s \text{ and } F_s(j)>F_s(l) \text{ for all } l \in \mu(s), l \neq j \}$. That is, starting from $\mu$, looking forward towards $\mu^T$, the coalition of students belonging to $c_1^1$ has incentives to deviate to $\mu_{1,1}^{\prime}$ where each student in $c_1^1$ is assigned to the school where she has the highest priority. Students belonging to $c_1^1$ obtain their best match in $\mu^T$. Schools have incentives to accept those students because either they do not have full capacity or the new student replaces the student who had the lowest priority among the students enrolled at the school. Next, students belonging to $c_1^1$ leave their school to reach $\mu_{1,1}^{\prime \prime} = \mu_{1,1}^{\prime} - \{(i,s) \mid i,s \in c_1^1 \text{ and } s \mapsto i\}$. Next, each student belonging to $c_1^1$ joins her most preferred school to reach $\mu_{1,1}^{\prime \prime \prime} = \mu_{1,1}^{\prime \prime} + \{(i,s) \mid i,s \in c_1^1 \text{ and } i \mapsto s\}$. Schools accept those students since they have (at least) one vacant position. We reach $\mu_{1,1}^{\prime \prime \prime}$ with $m_1^1 \subseteq \mu_{1,1}^{\prime \prime \prime}$ and so students belonging to $c_1^1$ are assigned to the same school as in $\mu^T$. If $1 \neq L_1$, then go to Step 1.2. Otherwise, go to Step 1.End with $\mu_{1,L_1}^{\prime \prime \prime} = \mu_{1,1}^{\prime \prime \prime}$.

\item[Step 1.$k$.]($k>1$) If $m_1^k \subseteq \mu_{1,k-1}^{\prime \prime \prime}$ and $k \neq L_1$ then go to Step 1.k+1 with $\mu_{1,k}^{\prime \prime \prime} = \mu_{1,k-1}^{\prime \prime \prime}$. If $m_1^k \subseteq \mu_{1,k-1}^{\prime \prime \prime}$ and $k = L_1$ then go to Step 1.End with $\mu_{1,L_1}^{\prime \prime \prime} = \mu_{1,k-1}^{\prime \prime \prime}$. If $m_1^k \nsubseteq \mu_{1,k-1}^{\prime \prime \prime}$ then $\mu_{1,k}^{\prime} = \mu_{1,k-1}^{\prime \prime \prime} - \{(i,\mu_{1,k-1}^{\prime \prime \prime}(i)) \mid (i,\mu^T(i)) \in m_1^k \text{ and } \mu_{1,k-1}^{\prime \prime \prime}(i) \neq i\} + \{(i,s) \mid i,s \in c_1^k \text{ and } s \mapsto i\} - \{(j,s) \in \mu_{1,k-1}^{\prime \prime \prime} \mid s \in c_1^k \text{, } \mu_{1,k-1}^{\prime \prime \prime}(s) \cap c_1^k = \emptyset \text{, } \# \mu_{1,k-1}^{\prime \prime \prime}(s)=q_s \text{ and } F_s(j)>F_s(l) \text{ for all } l \in \mu_{1,k-1}^{\prime \prime \prime}(s), l \neq j \}$. From $\mu_{1,k-1}^{\prime \prime \prime}$, looking forward towards $\mu^T$, the coalition of students belonging to $c_1^k$ has incentives to deviate to $\mu_{1,k}^{\prime}$ where each student in $c_1^k$ is assigned to the school where she has the highest priority. Students belonging to $c_1^k$ obtain their best match in $\mu^T$. Schools have incentives to accept those students because either they do not have full capacity or the new student replaces the student who had the lowest priority among the students enrolled at the school. Next, students belonging to $c_1^k$ leave their school to reach $\mu_{1,k}^{\prime \prime} = \mu_{1,k}^{\prime} - \{(i,s) \mid i,s \in c_1^k \text{ and } s \mapsto i\}$. Next, each student belonging to $c_1^k$ joins her most preferred school to reach $\mu_{1,k}^{\prime \prime \prime} = \mu_{1,k}^{\prime \prime} + \{(i,s) \mid i,s \in c_1^k \text{ and } i \mapsto s\}$. Schools accept those students since they have (at least) one vacant position. We reach $\mu_{1,k}^{\prime \prime \prime}$ with $m_1^k \subseteq \mu_{1,k}^{\prime \prime \prime}$ and so students belonging to $c_1^k$ are assigned to the same school as in $\mu^T$. If $k \neq L_1$, then go to Step 1.$k+1$. Otherwise, go to Step 1.End with $\mu_{1,L_1}^{\prime \prime \prime} = \mu_{1,k}^{\prime \prime \prime}$.

\item[Step 1.End.] We have reached $\mu_{1,L_1}^{\prime \prime \prime}$ with $\cup_{l=1}^{L_1} m_1^{l} = M_1 \subseteq \mu_{1,L_1}^{\prime \prime \prime}$. If $\mu_{1,L_1}^{\prime \prime \prime} = \mu^T$ then the process ends. Otherwise, go to Step 2.1.

\item[Step 2.1.]If $m_2^1 \subseteq \mu_{1,L_1}^{\prime \prime \prime}$ and $1 \neq L_2$ then go to Step 2.2 with $\mu_{2,1}^{\prime \prime \prime} = \mu_{1,L_1}^{\prime \prime \prime}$. If $m_2^1 \subseteq \mu_{1,L_1}^{\prime \prime \prime}$ and $1 = L_2$ then go to Step 2.End with $\mu_{2,L_2}^{\prime \prime \prime} = \mu_{1,L_1}^{\prime \prime \prime}$. If $m_2^1 \nsubseteq \mu_{1,L_1}^{\prime \prime \prime}$ then $\mu_{2,1}^{\prime} = \mu_{1,L_1}^{\prime \prime \prime} - \{(i,\mu_{1,L_1}^{\prime \prime \prime}(i)) \mid (i,\mu^T(i)) \in m_2^1 \text{ and } \mu_{1,L_1}^{\prime \prime \prime}(i) \neq i\} + \{(i,s) \mid i,s \in c_2^1 \text{ and } s \mapsto i\} - \{(j,s) \in \mu_{1,L_1}^{\prime \prime \prime} \mid s \in c_2^1 \text{, } \mu_{1,L_1}^{\prime \prime \prime}(s) \cap c_2^1 = \emptyset \text{, } \# \mu_{1,L_1}^{\prime \prime \prime}(s)=q_s \text{ and } F_s(j)>F_s(l) \text{ for all } l \in \mu_{1,L_1}^{\prime \prime \prime}(s), l \neq j \}$. Starting from $\mu_{1,L_1}^{\prime \prime \prime}$, looking forward towards $\mu^T$, the coalition of students belonging to $c_2^1$ has now incentives to deviate to $\mu_{2,1}^{\prime}$ where each student in $c_2^1$ is assigned to the school where she has the highest priority among students belonging to $I \setminus I_1$. Remember that $I_1$ is the set of students who are involved in $M_1$. Given $M_1 \subseteq \mu_{1,L_1}^{\prime \prime \prime}$ remains fixed, students belonging to $c_2^1$ obtain their best match in $\mu^T$. Schools have incentives to accept those students because either they do not have full capacity or the new student replaces the student who had the lowest priority among the students enrolled at the school. Next, students belonging to $c_2^1$ leave their school to reach $\mu_{2,1}^{\prime \prime} = \mu_{2,1}^{\prime} - \{(i,s) \mid i,s \in c_2^1 \text{ and } s \mapsto i\}$. Next, each student belonging to $c_2^1$ joins her most preferred school (constrained to $M_1$ being fixed) to reach $\mu_{2,1}^{\prime \prime \prime} = \mu_{2,1}^{\prime \prime} + \{(i,s) \mid i,s \in c_2^1 \text{ and } i \mapsto s\}$. Schools accept those students since they have (at least) one vacant position. We reach $\mu_{2,1}^{\prime \prime \prime}$ with $m_2^1 \subseteq \mu_{2,1}^{\prime \prime \prime}$ and so students belonging to $c_2^1$ are assigned to the same school as in $\mu^T$. If $1 \neq L_2$, then go to Step 2.2. Otherwise, go to Step 2.End with $\mu_{1,L_2}^{\prime \prime \prime} = \mu_{2,1}^{\prime \prime \prime}$.

\item[Step 2.$k$.]($k>1$) If $m_2^k \subseteq \mu_{2,k-1}^{\prime \prime \prime}$ and $k \neq L_2$ then go to Step 2.k+1 with $\mu_{2,k}^{\prime \prime \prime} = \mu_{2,k-1}^{\prime \prime \prime}$. If $m_2^k \subseteq \mu_{2,k-1}^{\prime \prime \prime}$ and $k = L_2$ then go to Step 2.End with $\mu_{2,L_2}^{\prime \prime \prime} = \mu_{2,k-1}^{\prime \prime \prime}$. If $m_2^k \nsubseteq \mu_{2,k-1}^{\prime \prime \prime}$ then $\mu_{2,k}^{\prime} = \mu_{2,k-1}^{\prime \prime \prime} - \{(i,\mu_{2,k-1}^{\prime \prime \prime}(i)) \mid (i,\mu^T(i)) \in m_2^k \text{ and } \mu_{2,k-1}^{\prime \prime \prime}(i) \neq i\} + \{(i,s) \mid i,s \in c_2^k \text{ and } s \mapsto i\} - \{(j,s) \in \mu_{2,k-1}^{\prime \prime \prime} \mid s \in c_2^k \text{, } \mu_{2,k-1}^{\prime \prime \prime}(s) \cap c_2^k = \emptyset \text{, } \# \mu_{2,k-1}^{\prime \prime \prime}(s)=q_s \text{ and } F_s(j)>F_s(l) \text{ for all } l \in \mu_{2,k-1}^{\prime \prime \prime}(s), l \neq j \}$. Starting from $\mu_{2,k-1}^{\prime \prime \prime}$, looking forward towards $\mu^T$, the coalition of students belonging to $c_2^k$ has now incentives to deviate to $\mu_{2,k}^{\prime}$ where each student in $c_2^k$ is assigned to the school where she has the highest priority among students belonging to $I \setminus I_1$. Given $M_1 \subseteq \mu_{2,k-1}^{\prime \prime \prime}$ remains fixed, students belonging to $c_2^k$ obtain their best match in $\mu^T$. Schools have incentives to accept those students because either they do not have full capacity or the new student replaces the student who had the lowest priority among the students enrolled at the school. Next, students belonging to $c_2^k$ leave their school to reach $\mu_{2,k}^{\prime \prime} = \mu_{2,k}^{\prime} - \{(i,s) \mid i,s \in c_2^k \text{ and } s \mapsto i\}$. Next, each student belonging to $c_2^k$ joins her most preferred school (constrained to $M_1$ being fixed) to reach $\mu_{2,k}^{\prime \prime \prime} = \mu_{2,k}^{\prime \prime} + \{(i,s) \mid i,s \in c_2^k \text{ and } i \mapsto s\}$. Schools accept those students since they have (at least) one vacant position. We reach $\mu_{2,k}^{\prime \prime \prime}$ with $m_2^k \subseteq \mu_{2,k}^{\prime \prime \prime}$ and so students belonging to $c_2^k$ are assigned to the same school as in $\mu^T$. If $k \neq L_2$, then go to Step 2.$k+1$. Otherwise, go to Step 2.End with $\mu_{2,L_2}^{\prime \prime \prime} = \mu_{2,k}^{\prime \prime \prime}$.

\item[Step 2.End.] We have reached $\mu_{2,L_2}^{\prime \prime \prime}$ with $M_1 \cup M_2 \subseteq \mu_{2,L_2}^{\prime \prime \prime}$. If $\mu_{2,L_2}^{\prime \prime \prime} = \mu^T$ then the process ends. Otherwise, go to Step 3.1.

\item[End.]The process goes on until we reach $\mu_{\bar{k},L_{\bar{k}}}^{\prime \prime \prime} = \cup_{k=1}^{\bar{k}} M_k = \mu^T$.

\end{itemize}
\end{proof}

The matching obtained from the TTC algorithm is always Pareto efficient but may not be stable when students are myopic. Theorem \ref{T5} shows that, once students are farsighted, the matching obtained from the TTC algorithm becomes stable.\footnote{This result is robust to the incorporation of various forms of maximality in the definition of farsighted improving path, like the strong rational expectations farsighted stable set in Dutta and Vohra (2017) and absolute maximality as in Ray and Vohra (2019). See also Herings, Mauleon and Vannetelbosch (2020).}  Example \ref{Ex1} highlights Theorem \ref{T5}. In addition, it shows that, once students are farsighted, the matching obtained from the Deferred Acceptance (DA) algorithm may become unstable. 
 
\begin{aexample}[Haeringer, 2017]\label{Ex1} 
Consider a school choice problem $\langle I,S,q,P,F\rangle$ with $I=\{i_1,i_2,i_3,i_4\}$ and $S=\{s_1,s_2,s_3\}$. Students' preferences and schools' priorities and capacities are as follows.
\begin{center}
\begin{tabular}{cccc}
\multicolumn{4}{c}{Students}\\
\hline
$P_{i_1}$& $P_{i_2}$ & $P_{i_3}$ & $P_{i_4}$ \\
\hline
$s_1$ & $s_1$ & $s_2$ & $s_1$  \\
$s_2$ & $s_2$ & $s_1$  & $s_3$ \\
$s_3$ & $s_3$ & $s_3$  & $s_2$ 
\end{tabular}
~\qquad~
\begin{tabular}{lccc}
\multicolumn{4}{c}{Schools}\\
\hline
&     $F_{s_1}$& $F_{s_2}$ & $F_{s_3}$\\
\hline
$q_s$ & 2 & 1 & 1\\
\hline
&$i_1$ & $i_1$ & $i_2$\\
&$i_3$ & $i_2$ & $i_3$\\
&$i_4$ & $i_4$ & $i_4$\\
&$i_2$ & $i_3$ & $i_1$
    \end{tabular}
  \end{center}

\end{aexample}

Using Example \ref{Ex1} we provide the basic intuition behind Theorem \ref{T5} and its proof. In Example \ref{Ex1}, $\mu^T=\{(i_1,s_1),(i_2,s_1),(i_3,s_2),(i_4,s_3)\}$ is the matching obtained from the TTC algorithm. In the first round of the TTC algorithm, there is one cycle where student $i_1$ points to school $s_1$ and school $s_1$ points to student $i_1$. That is, $C_1=\{c_1^1\}$ with $c_1^1=\{s_1,i_1\}$. Student $i_1$ is matched to school $s_1$: $m_1^1=\{(i_1,s_1)\}$ and school $s_1$ has only one leftover seat. In the second round of the TTC algorithm, there is one cycle where student $i_2$ points to school $s_1$, school $s_1$ points to student $i_3$, student $i_3$ points to school $s_2$ and school $s_2$ points to student $i_2$. That is, $C_2=\{c_2^1\}$ with $c_2^1=\{s_1,i_3,s_2,i_2\}$. Student $i_2$ is matched to school $s_1$ and student $i_3$ is matched to school $s_2$: $m_2^1=\{(i_2,s_1),(i_3,s_2)\}$, and so $i_2$ and $i_3$ exchange their priority. In the third round of the TTC algorithm, there is only one leftover student, $i_4$, who points to school $s_3$ and school $s_3$ points to student $i_4$. That is, $C_3=\{c_3^1\}$ with $c_3^1=\{s_3,i_4\}$. Student $i_4$ is matched to school $s_3$: $m_3^1=\{(i_4,s_3)\}$, and so $\mu^T=m_1^1 \cup m_2^1 \cup m_3^1$. 

From Theorem \ref{T5} we know that $\{\mu^T\}$ is a farsighted stable set. Indeed, from any $\mu \neq \mu^T$ there exists a farsighted improving path leading to $\mu^T$. Take for instance the matching $\mu_0 = \{(i_1,s_1),(i_2,s_2),(i_3,s_3),(i_4,s_1)\}$. We now construct a farsighted improving from $\mu_0$  to $\mu^T = \{(i_1,s_1),(i_2,s_1),(i_3,s_2),(i_4,s_3)\} = \mu_4$ following the steps as in the proof of Theorem \ref{T5}. First, we consider students and schools belonging to the cycles in $C_1$. Since $m_1^1 =\{(i_1,s_1)\} \subseteq \mu_0$, student $i_1$ stays matched to school $s_1$ along the farsighted improving path, i.e. $m_1^1=\{(i_1,s_1)\} \subseteq \mu_l$, $0 \leq l \leq 4$. Next, we consider students and schools belonging to the cycles in $C_2$. Notice that $m_2^1 =\{(i_2,s_1),(i_3,s_2)\} \cap  \mu_0 = \emptyset$. Looking forward towards $\mu^T$,  the coalition $N_0=\{i_2,i_3,s_1,s_2\}$ deviates so that student $i_3$ joins school $s_1$ and student $i_2$ joins schools $s_2$ to reach the matching $\mu_1 = \{(i_1,s_1),(i_2,s_2),(i_3,s_1),(i_4,i_4)\}$ where students $i_2$ and $i_3$ are matched to the schools where they have priority. By doing so, they push student $i_4$ out of school $s_1$. Next, the coalition $N_1=\{i_2,i_3\}$ deviates so that students $i_2$ and $i_3$ leave, respectively, schools $s_2$ and $s_1$ to reach the matching $\mu_2 = \{(i_1,s_1),(i_2,i_2),(i_3,i_3),(i_4,i_4)\}$ where both students are not assigned to any school. They are temporarily worse off, but they anticipate to end up in $\mu^T$. Next, the coalition $N_2=\{i_2,i_3,s_1,s_2\}$ deviates so that student $i_2$ joins school $s_1$ and student $i_3$ joins schools $s_2$ to reach the matching $\mu_3 = \{(i_1,s_1),(i_2,s_1),(i_3,s_2),(i_4,i_4)\}$ with $m_2^1 = \{(i_2,s_1),(i_3,s_2)\} \subseteq \mu_3$. Both schools accept to enroll those students because they are not at full capacity. Finally, we consider students and schools belonging to the cycles in $C_3$. Since $m_3^1 =\{(i_4,s_3)\} \cap  \mu_3 = \emptyset$, the coalition $N_3=\{i_4,s_3\}$ deviates so that student $i_4$ joins school $s_3$ to form the match $(i_4,s_3)$ and to reach the matching $\mu_4 = \mu^T$. Thus, $\mu^T \in \phi(\mu_0)$.

In Example \ref{Ex1}, $\mu^D=\{(i_1,s_1),(i_2,s_2),(i_3,s_1),(i_4,s_3)\}$ is the matching obtained from the Deferred Acceptance (DA) algorithm, $\mu^{B}=\{(i_1,s_1),(i_2,s_3),(i_3,s_2),(i_4,s_1)\}$ is the matching obtained from the Immediate Acceptance (IA) algorithm (i.e. the Boston mechanism). Thus, $\mu^T \neq \mu^D \neq \mu^B$.

Since students are at least as well off and some of them ($i_2$ and $i_3$) are strictly better off in $\mu^T$ than in $\mu^D$, we have that there is no farsighted improving path from $\mu^T$ to $\mu^D$. That is, $\mu^D \notin \phi(\mu^T)$. Hence, $\{\mu^D\}$ is not a farsighted stable set since (ES) is violated. Let
\begin{equation*}
\begin{split}
\mu^1 & = \{(i_1,s_1),(i_2,i_2),(i_3,s_2),(i_4,s_1)\}, \\
\mu^2 & = \{(i_1,s_1),(i_2,s_3),(i_3,s_2),(i_4,s_1)\} = \mu^B, \\
\mu^3 & = \{(i_1,s_1),(i_2,s_2),(i_3,i_3),(i_4,s_1)\}, \\
\mu^4 & = \{(i_1,s_1),(i_2,s_2),(i_3,s_3),(i_4,s_1)\}, \\
\mu^5 & = \{(i_1,s_1),(i_2,s_2),(i_3,s_1),(i_4,i_4)\}.
\end{split}
\end{equation*}

\noindent Computing the farsighted improving paths emanating from $\mu^T$, we get $\phi(\mu^T)=\{\mu^1,\mu^2,\mu^3,\mu^4\}$. Notice that $\mu^5 \notin \phi(\mu^T)$ since student $i_4$ is worst off in $\mu^5$ than in $\mu^T$. From $\mu^1$, $\mu^2$, $\mu^3$, $\mu^4$ and $\mu^5$, there is a farsighted improving to $\mu^D$. That is, $ \mu^D \in \phi(\mu)$ for $\mu \in \{\mu^1,\mu^2,\mu^3,\mu^4,\mu^5\}$. From $\mu^D$ there is only a farsighted improving path to $\mu^T$; i.e. $\phi(\mu^D)=\{\mu^T\}$. For a set $V \supseteq \{\mu^D\}$ to be a farsighted stable set, we need that (i) $\mu^T \notin V$ (otherwise (IS) is violated), (ii) a single $\mu \in \{\mu^1,\mu^2,\mu^3,\mu^4\}$ should belong to $V$ to satisfy (ES) since $\mu^D \notin \phi(\mu^T)$. But, $V$ would then violate (IS) since $\mu^D \in \phi(\mu)$ for $\mu \in \{\mu^1,\mu^2,\mu^3,\mu^4,\mu^5\}$. Thus, there is no $V$ such that $\mu^D \in V$ that is a farsighted stable set in Example \ref{Ex1}. 

Since $\phi(\mu^D)=\{\mu^T\}$, there is no farsighted improving path from $\mu^D$ to $\mu^B$. Thus, $V = \{\mu^B\}$ does not satisfy (ES), and hence $V = \{\mu^B\}$ is not a farsighted stable set. Moreover, a set $V \supseteq \{\mu^B,\mu^D\}$ cannot be a farsighted stable since $\mu^D \in \phi(\mu^B)$. Otherwise, $V$ would violate (IS) since there is a farsighted stable improving path from $\mu^B$ to $\mu^D$.

Is $V=\{\mu^T\}$ the unique farsighted stable set in Example \ref{Ex1}? The answer is yes because any $V$ such that $\{\mu^D,\mu^T\}\nsubseteq V$ violates (ES). Thus, Example \ref{Ex1} provides an example where both the matching obtained from the Deferred Acceptance (DA) algorithm and the matching obtained from the Immediate Acceptance (IA) algorithm are not stable once students are farsighted.

\begin{aremark}
There are school choice problems such that the matching obtained from the Deferred Acceptance (DA) algorithm does not belong to any farsightedly stable set.
\end{aremark}

Since the matching obtained from the Immediate Acceptance (IA) algorithm is Pareto efficient, Example \ref{Ex1} also shows that there are school choice problems where some Pareto efficient matching does not belong to any farsighted stable set. Thus, Pareto efficiency is not a sufficient condition for guaranteeing the stability of a matching when students are farsighted.

\begin{aremark}
There are school choice problems such that some Pareto efficient matching does not belong to any farsighted stable set.
\end{aremark}

\begin{acorollary}\label{CCOAL}
Let $\langle I,S,q,P,F\rangle$ be a school choice problem and $\mu^T$ be the matching obtained from the Top Trading Cycles mechanism. From any $\mu \neq \mu^T$ there is a farsighted improving path to $\mu^T$ with $\mu _{0} = \mu$ and $\mu_{L} = \mu^T$ such that for every $l \in \{0,\ldots ,L-1\}$ there is a coalition $N_{l } \subseteq \bigcup_{k=1}^{\bar{k}} C_k $ that enforces $\mu_{l+1}$ from $\mu_{l}$.
\end{acorollary}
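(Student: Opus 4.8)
The plan is to verify that the farsighted improving path already constructed in the proof of Theorem~\ref{T5} meets the stronger requirement, so that nothing new has to be built: the task reduces to reading off, transition by transition, the coalition $N_l$ used there and checking that it contains only agents --- students or schools --- occurring in some TTC cycle, i.e. in $\bigcup_{k=1}^{\bar{k}} C_k$.

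Concretely, I would revisit the three elementary transitions making up a generic block ``Step~$k.l$'' of that construction (with the obvious simplifications when $c_k^l$ is a self-cycle). In the first transition, from the matching reached just before the block to $\mu_{k,l}^{\prime}$, the students of $c_k^l$ abandon their current seats, move to the schools of $c_k^l$ at which they have top remaining priority, and --- where such a school is full --- bump its lowest-priority incumbent; by Definition~\ref{D3}, enforcing the newly created matches requires exactly the students of $c_k^l$ together with the schools of $c_k^l$, so $N_l$ is precisely the set of agents in $c_k^l$. In the second transition, from $\mu_{k,l}^{\prime}$ to $\mu_{k,l}^{\prime\prime}$, the students of $c_k^l$ merely dissolve their matches, so $N_l$ is the set of students in $c_k^l$. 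In the third transition, from $\mu_{k,l}^{\prime\prime}$ to $\mu_{k,l}^{\prime\prime\prime}$, each student of $c_k^l$ forms a match with the school she points to in $c_k^l$ --- which is her most preferred school among those not already exhausted by $M_1,\dots,M_{k-1}$ and hence lies in $c_k^l$ --- so once more $N_l$ is the set of agents in $c_k^l$. In every case $N_l \subseteq \bigcup_{k=1}^{\bar{k}} C_k$, which is the claim.

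The one point that needs a word of care --- and the only place where anything could go wrong --- is that in the first transition a student $i \in c_k^l$ may be leaving a school that belongs to no TTC cycle, and one must confirm that this school need not be enrolled in $N_l$. It need not: by Definition~\ref{D3}(ii) the removal of the match $(i,\mu_l(i))$ is enforceable by $i$ alone, and since that school is outside $N_l$, Definition~\ref{D1C}(ii) imposes no acceptance requirement on it. The remaining verifications --- that the schools of $c_k^l$ accept the incoming students because they have a vacant seat or the entrant outranks the bumped incumbent, and that every student in $c_k^l$ weakly prefers $\mu^T$ to her current seat with at least one strict preference along the way --- are exactly those already carried out in the proof of Theorem~\ref{T5} and are untouched by restricting attention to the coalitions above. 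Hence that path witnesses the corollary; the ``main obstacle'' is merely the bookkeeping of checking that no school outside the current cycle is ever forced into a deviating coalition.
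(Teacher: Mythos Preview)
Your proposal is correct and takes essentially the same approach as the paper: the paper's proof of the corollary consists of the single sentence ``Corollary~\ref{CCOAL} follows from the proof of Theorem~\ref{T5},'' and you have simply made explicit the coalition-by-coalition check that this entails. Your additional care in noting that a school outside the current cycle need not belong to $N_l$ when one of its students departs is a welcome clarification of a point the paper leaves implicit.
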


Corollary \ref{CCOAL} follows from the proof of Theorem \ref{T5}. Notice that $\bigcup_{k=1}^{\bar{k}} C_k$ is simply the collection of sets where each element is a set consisting of students and schools belonging to a cycle obtained from the TTC algorithm. Definition \ref{D1C} of a farsighted improving path is quite permissive in terms of the size of the coalition $N_l$ that enforces $\mu_{l+1}$ from $\mu_{l}$. However, Corollary \ref{CCOAL} tells us that there exists a farsighted improving path from $\mu \neq \mu^T$ to $\mu^T$ with $\mu _{0} = \mu$ and $\mu_{L} = \mu^T$ such that for every $l \in \{0,\ldots ,L-1\}$ the coalition $N_{l }$ that enforces $\mu_{l+1}$ from $\mu_{l}$ consists of students (and possibly schools) who are part of the same cycle in the TTC algorithm. Thus, for getting Theorem \ref{T5}, it is sufficient to allow a deviating coalition (involving more than one student) to be composed of exclusively students (and possibly their schools) who are exchanging their priorities among themselves in the TTC algorithm. Such restriction seems not too demanding since students who coordinate their moves are the ones who exchange their priorities.

\section{Limited Farsightedness}\label{S5} 

How much farsightedness from the students do we need to stabilize the matching obtained from the TTC algorithm? To answer this question we propose the notion of horizon-$k$ farsighted stable set for school choice problems to study the matchings that are stable when students are limited in their degree of farsightedness. A horizon-$k$ farsighted improving path for school choice problems is a sequence of matchings that can emerge when limited farsighted students form or destroy matches based on the improvement the $k$-steps ahead matching offers them relative to the current one while myopic schools form or destroy matches based on the improvement the next matching in the sequence offers them relative to the current one. A set of matchings is a horizon-$k$ farsighted stable set if (IS) for any two matchings belonging to the set, there is no horizon-$k$ farsighted improving path connecting from one matching to the other one, and (ES) there always exists a horizon-$k$ farsighted improving path from every matching outside the set to some matching within the set.

\begin{adefinition}\label{DMFLI} 
Let $\langle I,S,q,P,F\rangle$ be a school choice problem. A horizon-$k$ farsighted improving path from a matching $\mu \in \mathcal{M}$ to a matching $\mu^{\prime } \in \mathcal{M} \setminus \{\mu \}$ is a finite sequence of distinct matchings $\mu _{0},\ldots ,\mu _{L}$ with $\mu _{0} = \mu$ and $\mu_{L} = \mu^{\prime }$ such that for every $l \in \{0,\ldots ,L-1\}$ there is a coalition $N_{l } \subseteq I \cup S$ that can enforce $\mu_{l+1}$ from $\mu_{l}$ and
\begin{itemize}
\item[(\textbf{i})]$\mu_{\min \{l+k,L\}}(i) R_i \mu _{l}(i) \text{ for all } i \in N_{l} \cap I \text{ and }  \mu_{\min \{l+k,L\}}(j) P_j \mu _{l}(j) \text{ for some } j \in N_{l} \cap I$,
\item[(\textbf{ii})]For every $s \in N_{l} \cap S$ such that $\#\mu_l(s) + \#\{i \in I \mid i \notin \mu_l(s) , i \in \mu_{l+1}(s)\} > q_s$, there is $\{i_1,\ldots,i_J\} \subseteq \{i \in I \mid i \notin \mu_l(s), i \in \mu_{l+1}(s)\}$ and $\{j_1,\ldots,j_J\} = \{i \in I \mid i \in \mu_l(s), i \notin \mu_{l+1}(s)\}$ such that 
\begin{equation*}
\begin{split}
F_s(i_1) & < F_s(j_1) \\
F_s(i_2) & < F_s(j_2) \\
& \vdots \\
F_s(i_J) & < F_s(j_J).
\end{split}
\end{equation*}
\end{itemize}
\end{adefinition}

Definition \ref{DMFLI} tells us that a horizon-$k$ farsighted improving path for school choice problems consists of a sequence of matchings where along the sequence students form or destroy matches based on the improvement the $k$-steps ahead matching offers them relative to the current one. Precisely, along a horizon-$k$ farsighted improving path, each time some student $i$ is on the move she is comparing her current match (i.e. $\mu_l(i)$) with the match she will get $k$-steps ahead on the sequence (i.e. $\mu_{l+k}(i)$) except if the end matching of the sequence lies within her horizon (i.e. $L < l+k$). In such a case, she simply compares her current match (i.e. $\mu_l(i)$) with the end match (i.e. $\mu_L$). Schools continue to accept any student on their priority lists unless they have full capacity. In the case of full capacity, a school $s \in N_l \cap S$ accepts to replace the match $\mu_l$ by $\mu_{l+1}$ if each student $i \in \{j \in I \mid j \in \mu_l(s) , j \notin \mu_{l+1}(s)\}$ who leaves or is evicted from school $s$ from $\mu_l$ to $\mu_{l+1}$ is replaced by a newly enrolled student who has a higher priority. 

Let some $\mu \in \mathcal{M}$ be given. If there exists a horizon-$k$ farsighted improving path from a matching $\mu$ to a matching $\mu^{\prime}$, then we write $\mu \rightarrow_k \mu^{\prime}$. The set of matchings $\mu^{\prime }\in \mathcal{M}$ such that there is a horizon-$k$ farsighted improving path from $\mu$ to $\mu^{\prime }$ is denoted by $\phi_k(\mu)$, so $\phi_k(\mu)=\{\mu^{\prime} \in \mathcal{M} \mid \mu \rightarrow_k \mu^{\prime} \}$.

\begin{adefinition} \label{D6} Let $\langle I,S,q,P,F\rangle$ be a school choice problem. A set of matchings $V \subseteq \mathcal{M}$ is a horizon-$k$ farsighted stable set if it satisfies:
\begin{itemize}
\item[(\textbf{i})] Internal stability (IS): For every $\mu, \mu^{\prime} \in V$,
it holds that $\mu^{\prime} \notin \phi_k(\mu).$
\item[(\textbf{ii})] External stability (ES): For every $\mu \in \mathcal{M} \setminus V$, it holds that $\phi_k(\mu) \cap V \neq \emptyset. $
\end{itemize}
\end{adefinition}

From the construction of a farsighted improving path in the proof of Theorem \ref{T5} we have that students belonging to a cycle only need to look forward three steps ahead to have incentives for engaging a move towards the matches they have in the matching obtained from the TTC algorithm, $\mu^T$. Once they reach those matches they do not move afterwards. The three steps consist of (i) getting first a seat at the school they have priority, (ii) leaving that school and by doing so, guaranteeing a free seat at that school, (iii) joining the school they match to in $\mu^T$. Hence, for $k \geq 3$, there exists a horizon-$k$ farsighted improving from any $\mu \neq \mu^T$ to $\mu^T$, and so $\{\mu^T\}$ is a horizon-$k$ farsighted stable set.\footnote{In Example \ref{Ex1}, it is sufficient for the students who belong to $c_2^1$ to look forward towards $\mu_3$ when they participate to the moves from $\mu_0$ to $\mu_4$. Indeed, they are not affected by the move from $\mu_3$ to $\mu_4$ since they remain with the same matches.} 

\begin{acorollary}\label{CORLIM}
Let $\langle I,S,q,P,F\rangle$ be a school choice problem and $\mu^T$ be the matching obtained from the Top Trading Cycles mechanism. The singleton set $\{\mu^T\}$ is a horizon-$k$ farsighted stable set for $k \geq 3$.
\end{acorollary}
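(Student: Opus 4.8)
The plan is to reuse the farsighted improving path built in the proof of Theorem~\ref{T5} and to check that, for $k \geq 3$, it is already a horizon-$k$ farsighted improving path. Internal stability is immediate: $\{\mu^T\}$ is a singleton and, by Definition~\ref{DMFLI}, a horizon-$k$ farsighted improving path ends at a matching distinct from its starting point, so $\mu^T \notin \phi_k(\mu^T)$. For external stability, fix an arbitrary $\mu \neq \mu^T$ and let $\mu_0,\ldots,\mu_L$ (with $\mu_0 = \mu$ and $\mu_L = \mu^T$), together with the coalitions $N_0,\ldots,N_{L-1}$, be the objects produced in the proof of Theorem~\ref{T5}.

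First I would isolate the structural feature of that path that makes the argument go through. The path proceeds cycle by cycle, in the order in which cycles are formed by the TTC algorithm, and for each cycle $c$ the moves attached to $c$ are at most three consecutive ones: (a) the students of $c$ move to the school at which they have the highest remaining priority; (b) they vacate that school; (c) they join the school they are matched to in $\mu^T$. As established in the proof of Theorem~\ref{T5}, after this block of at most three moves every student of $c$ is matched as in $\mu^T$, and the union of the matches of all cycles processed so far is contained in every subsequent matching of the sequence, the last of which is $\mu^T$ itself. Hence, if $i \in N_l \cap I$ and $l$ is a step at which the cycle of $i$ moves, then that cycle finishes moving at step $l+2$ at the latest, so $\mu_{l'}(i) = \mu^T(i)$ for every $l' \geq l+3$. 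Since $k \geq 3$, this yields $\mu_{\min\{l+k,L\}}(i) = \mu^T(i)$ in both possible cases: if $l+k < L$ then $\mu_{l+k}(i) = \mu^T(i)$ by the property just stated, and if $l+k \geq L$ then $\mu_{\min\{l+k,L\}} = \mu_L = \mu^T$.

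With this in hand, verifying Definition~\ref{DMFLI} along $\mu_0,\ldots,\mu_L$ is routine. Condition~(ii) of Definition~\ref{DMFLI} is literally condition~(ii) of Definition~\ref{D1C}, which was already checked for this path. For condition~(i), the requirement $\mu_{\min\{l+k,L\}}(i)\,R_i\,\mu_l(i)$ for all $i \in N_l \cap I$, with a strict instance, reduces by the previous paragraph to $\mu^T(i)\,R_i\,\mu_l(i)$ for all $i \in N_l \cap I$, with a strict instance, which is exactly condition~(i) of Definition~\ref{D1C} (with $\mu_L = \mu^T$) that the proof of Theorem~\ref{T5} already established for each $N_l$. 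Therefore $\mu_0,\ldots,\mu_L$ is a horizon-$k$ farsighted improving path from $\mu$ to $\mu^T$ for every $k \geq 3$, so $\mu^T \in \phi_k(\mu)$ and external stability holds; combined with internal stability, $\{\mu^T\}$ is a horizon-$k$ farsighted stable set for $k \geq 3$.

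The only delicate point, and hence the one I would state explicitly, is the claim that a student, once placed at her $\mu^T$-match within her cycle's block of moves, is never displaced afterwards (in particular, not evicted when her $\mu^T$-school resurfaces, with spare capacity, in a later cycle). This is precisely the monotonicity of the accumulated matches along the path in the proof of Theorem~\ref{T5}, so it can be invoked directly; it is what guarantees that three steps of foresight already reveal the final match of every moving student, so that no larger horizon is required.
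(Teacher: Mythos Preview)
Your proposal is correct and follows essentially the same approach as the paper: the paper's own argument (the paragraph preceding the corollary) simply observes that the path built in the proof of Theorem~\ref{T5} already works as a horizon-$k$ path for $k\geq 3$, because each cycle's block of moves has at most three steps and once a student reaches her $\mu^T$-match she never moves again. Your write-up spells this out more carefully---explicitly separating internal and external stability, checking both cases $l+k<L$ and $l+k\geq L$, and flagging the monotonicity of accumulated matches---but the substance is identical.
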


\section{Three Variations of The TTC Algorithm}\label{S6} 

\subsection{First Clinch and Trade Algorithm}

Morill (2015) introduces two variations of the Top Trading Cycles mechanism for selecting a matching for each school problem: the First Clinch and Trade mechanism (FCT) and the Clinch and Trade mechanism (CT). Both mechanisms intend to mitigate the following problem. In the TTC mechanism, if a student $i$'s most preferred school is $s$ and the student has one of the $q_s$ highest priorities at $s$, then $i$ is always assigned to $s$. However, until $i$ has the highest priority at $s$, the TTC mechanism allows $i$ to trade her priority at other schools to be assigned to $s$. Such trade may cause distortions regarding the elimination of justified envy. 

In the First Clinch and Trade algorithm (FCT), a student that initially has one of the $q_s$ highest priorities at a school $s$ (she is guaranteed a seat at $s$), cannot trade with another student to get $s$. The FCT algorithm runs basically the TTC algorithm but, at each round, if a student points at a school where she is guaranteed a seat, the student is assigned to the school and cannot trade her priority. For the remaining students, the TTC is run and the students who have the highest priorities at some schools are allowed to trade their priorities and are assigned their top choices.

\begin{aexample}[Morrill, 2015]\label{Ex2} 
Consider a school choice problem $\langle I,S,q,P,F\rangle$ with $I=\{i_1,i_2,i_3\}$ and $S=\{s_1,s_2\}$. Students' preferences and schools' priorities and capacities are as follows.
\begin{center}
\begin{tabular}{cccc}
\multicolumn{3}{c}{Students}\\
\hline
$P_{i_1}$& $P_{i_2}$ & $P_{i_3}$  \\
\hline
$s_2$ & $s_1$ & $s_2$   \\
$s_1$ & $s_2$ & $s_1$   
\end{tabular}
~\qquad~
\begin{tabular}{lcc}
\multicolumn{3}{c}{Schools}\\
\hline
&    $F_{s_1}$& $F_{s_2}$ \\
\hline
$q_s$ & 2 & 1 \\
\hline
&$i_1$ & $i_2$ \\
&$i_2$ & $i_3$ \\
&$i_3$ & $i_1$ 
    \end{tabular}
  \end{center}

\end{aexample}

Let $\mu^F$ be the matching obtained from the FCT mechanism. A formal description of the FCT algorithm can be found in Appendix \ref{AP1}. By means of Example \ref{Ex2} we illustrate the mechanism behind the FCT algorithm. In the first round, each student points to her top choice school. That is, $i_1$ points to $s_2$, $i_2$ points to $s_1$ and $i_3$ points to $s_2$. Student $i_1$ and student $i_2$ are guaranteed admissions to school $s_1$ since both have one of the two highest rankings at school $s_1$. Student $i_2$ is also guaranteed admission to school $s_2$ since she is ranked first at school $s_2$. Student $i_2$ is pointing to $s_1$, and so she is clinched to school $s_1$ and the match $(i_2,s_1)$ is formed. Student $i_1$ is not pointing to a school where she is guaranteed admission. Hence, she is not clinched to any school and she participates next with $i_3$ to the trading procedure. Schools $s_1$ and $s_2$ point to their highest ranked student, respectively $i_1$ and $i_2$. Hence, there is no cycle and no match is formed. In the second round, each student points to her top choice school that has still available capacity. That is, $i_1$ points to $s_2$ and $i_3$ points to $s_2$. Guaranteed admissions do not change. Hence, nor $i_1$ nor $i_3$ are clinched to some school and so they participate next to the trading procedure. School $s_1$ points to $i_1$ while school $s_2$ points now to $i_3$ and so the match $(i_3,s_2)$ is formed. Student $i_1$ remains unmatched. In the third round, each remaining student points to her preferred school that has still available capacity. That is, student $i_1$ points now to school $s_1$. Since she is guaranteed admission to school $s_1$, she is clinched and assigned to school $s_1$. We obtain the matching $\mu^F=\{(i_1,s_1),(i_2,s_1),(i_3,s_2)\}$. 

The matching obtained from the FCT algorithm differs from the matching obtained from the TTC algorithm, $\mu^{T}=\{(i_1,s_2),(i_2,s_1),(i_3,s_1)\}$. In the TTC mechanism, students $i_1$ and $i_2$ first exchange their priorities to form the matches $(i_1,s_2)$ and $(i_2,s_1)$. Student $i_1$ has priority at $s_1$ while student $i_2$ has priority at $s_2$. However, student $i_3$ is ranked above student $i_1$ at school $s_2$. In addition, student $i_2$ is guaranteed admission at school $s_1$. The FCT mechanism intends to remedy to such drawback.

Morrill (2015) shows that the First Clinch and Trade mechanism (FCT) is Pareto efficient, strategy-proof, non-bossy, group strategy-proof, reallocation proof and independent of the order in which cycles are processed.

Take for instance the matching $\mu_0 = \{(i_1,s_2),(i_2,s_1),(i_3,s_1)\} = \mu^T$. We now construct a farsighted improving from $\mu_0$  to $\mu^F = \{(i_1,s_1),(i_2,s_1),(i_3,s_2)\} = \mu_2$ following the steps as in the proof of Theorem \ref{THFCT}. First, we consider student $i_2$ who is the only student to clinched in the first round of the FCT. Since student $i_2$ is matched to school $s_1$ in both $\mu_0$ and $\mu_F$ she does not participate to any deviation and remains clinched to $s_1$ along the farsighted improving path. That is, $(i_2,s_1) \in \mu_l$, $l=0,1,2$. There is no cycle between schools and students who are not clinched in the first round of the FCT. In the second round of the FCT, none of the remaining students is clinched to some school. However, student $i_3$ and school $s_2$ form a cycle. So, looking forward towards $\mu^T$, the coalition $N_0=\{i_3,s_2\}$ deviates from $\mu_0$ so that student $i_3$ joins school $s_2$ to reach the matching $\mu_1 = \{(i_1,i_1),(i_2,s_1),(i_3,s_2)\}$. Student $i_3$ is matched to her preferred school $s_2$ in $\mu_1$ and she has a higher priority than $i_1$ at $s_2$. By doing so, student $i_1$ is pushed out of school $s_2$. In the third round of the FCT, student $i_1$ points to school $s_1$ and is guaranteed admission at school $s_1$. So, from $\mu_1$, the coalition $N_1=\{i_1,s_1\}$ deviates so that student $i_1$ joins school $s_1$ to reach the matching $\mu_2 = \{(i_1,s_1),(i_2,s_1),(i_3,s_2)\} = \mu^F$. Thus, $\mu^F \in \phi(\mu^T)$.

In fact, it holds in general that, from any $\mu \neq \mu^F$ there exists a farsighted improving path leading to $\mu^F$. So, $\{\mu^F\}$ is a farsighted stable set and the matching obtained from the FCT algorithm preserves the property of being stable once students are farsighted.

\begin{atheorem}\label{THFCT}
Let $\langle I,S,q,P,F\rangle$ be a school choice problem and $\mu^F$ be the matching obtained from the First Clinch and Trade mechanism. The singleton set $\{\mu^F\}$ is a farsighted stable set.
\end{atheorem}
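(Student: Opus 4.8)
The plan is to follow the architecture of the proof of Theorem~\ref{T5}. Internal stability is immediate because $\{\mu^F\}$ is a singleton. For external stability I fix an arbitrary $\mu \in \mathcal{M} \setminus \{\mu^F\}$ and construct a farsighted improving path from $\mu$ to $\mu^F$ by replaying the FCT algorithm (Appendix~\ref{AP1}) round by round. Within round $k$ the agents are handled in the order in which FCT resolves them: first the students who \emph{clinch} a seat in round $k$, then the trading cycles of round $k$ in their order of formation. At the end of the last round the current matching equals $\mu^F$, which yields $\mu^F \in \phi(\mu)$ and hence external stability.

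Two kinds of moves occur along the path. First, if a student $i$ clinches school $s$ in round $k$ and $(i,s)$ is not already present in the current matching, the coalition $\{i,s\}$ --- together with the lowest-priority currently enrolled student at $s$ if $s$ is full --- moves $i$ into $s$ in a single step: $i$ weakly improves since $\mu^F(i)=s$ is her final match, and $s$ accepts because, by the invariant below, either $s$ has a free seat or the evicted occupant has strictly lower $F_s$-priority than $i$, so that the school condition of Definition~\ref{D1C}(ii) is met and enforceability in the sense of Definition~\ref{D3} holds with $\{i,s\}$ in the coalition. Second, for a trading cycle $c=(s^1,i^1,\ldots,s^p,i^p)$ of round $k$ I reuse verbatim the three-step gadget of Theorem~\ref{T5}: each $i^t$ first moves to the school $s^t$ that points to her, where $i^t$ has the highest $F_{s^t}$-priority among the students not yet removed by FCT (evicting the lowest-priority occupant if $s^t$ is full); then all the $i^t$ leave simultaneously, opening the required vacancies; then each $i^t$ joins the school $s^{t+1}$ that she points to in $c$, which now has a free seat. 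Exactly as in Theorem~\ref{T5}, every student in $c$ weakly prefers her FCT-match to her current match and at least one is strictly better off, so Definition~\ref{D1C}(i) holds; a self-cycle $(s,i)$ reduces to a single move in which $i$ joins $s$.

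The engine behind all the enforcement checks is an invariant carried along the path: after rounds $1,\ldots,k-1$ and the portion of round $k$ already processed, (1)~the current matching restricted to the students already removed by FCT agrees with $\mu^F$ on those students and these matches are never disturbed again, and (2)~every not-yet-removed student currently enrolled at a school $s$ has strictly lower $F_s$-priority than every not-yet-removed student whom FCT will eventually seat at $s$. Granting this invariant, each move above satisfies the enforceability and priority requirements of Definitions~\ref{D3} and~\ref{D1C}, and a short argument re-establishes the invariant after the move. I expect the main obstacle to be exactly this bookkeeping, and in particular the FCT-specific complication that ``clinched'' status is defined relative to the schools' initial capacities and priorities while capacities shrink as the path proceeds: one must verify that whenever an occupant is evicted to seat a clinching or trading student, that occupant is a student FCT has not yet finalized and does have strictly lower priority, so that no already-secured $\mu^F$-match is ever broken and clause~(2) is restored. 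Once the final round is exhausted the current matching is $\mu^F$, completing the proof.
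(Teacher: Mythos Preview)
Your proposal is correct and follows essentially the same approach as the paper's proof in Appendix~\ref{AP1}: both establish external stability by replaying the FCT algorithm round by round, first seating the round-$k$ clinching students in single moves (evicting a lower-priority occupant when the school is full) and then processing the trading cycles of round $k$ via the three-step gadget of Theorem~\ref{T5}. The paper spells out the intermediate matchings by explicit formulas but, like your sketch, leaves the priority bookkeeping largely implicit; your identification of the invariant and of the FCT-specific complication (guaranteed-admission sets $G_s$ are frozen at the initial capacities) is precisely the point one has to check to see that evictions never hit an already-finalized match.
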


The proof of Theorem \ref{THFCT} can be found in Appendix \ref{AP1}.

\subsection{Clinch and Trade Algorithm}

While the FCT algorithm does not update the students that are able to clinch her most preferred school, the Clinch and Trade algorithm (CT) removes from the priority list of each school the students that are guaranteed a seat. As a result, the priorities of the remaining students weakly improve and thus, students that initially are not guaranteed a seat at their most preferred school may now be guaranteed one of the remaining seats. 

\begin{aexample}[Morrill, 2015]\label{Ex3} 
Consider a school choice problem $\langle I,S,q,P,F\rangle$ with $I=\{i_1,i_2,i_3,i_4\}$ and $S=\{s_1,s_2,s_3\}$. Students' preferences and schools' priorities and capacities are as follows.
\begin{center}
\begin{tabular}{cccc}
\multicolumn{4}{c}{Students}\\
\hline
$P_{i_1}$& $P_{i_2}$ & $P_{i_3}$ & $P_{i_4}$ \\
\hline
$s_2$ & $s_1$ & $s_2$ & $s_3$  \\
$s_1$ & $s_2$ & $s_1$  &  

\end{tabular}
~\qquad~
\begin{tabular}{lccc}
\multicolumn{4}{c}{Schools}\\
\hline
&     $F_{s_1}$& $F_{s_2}$ & $F_{s_3}$\\
\hline
$q_s$ & 2 & 1 & 1\\
\hline
&$i_4$ & $i_2$ & $i_4$\\
&$i_1$ & $i_3$ &      \\
&$i_2$ & $i_1$ &      \\
&$i_3$ & $i_4$ & 
    \end{tabular}
  \end{center}

\end{aexample}

Let $\mu^C$ be the matching obtained from the CT mechanism. A formal description of the CT algorithm can be found in Appendix \ref{AP2}. By means of Example \ref{Ex3} we illustrate the mechanism behind the CT algorithm. In the first round of the clinching procedure, each student points to her top choice school. That is, student $i_1$ points to school $s_2$, student $i_2$ points to school $s_1$, student $i_3$ points to school $s_2$ and student $i_4$ points to school $s_3$. Student $i_4$ is guaranteed admission to both school $s_1$ and school $s_3$, student $i_2$ is guaranteed admission to school $s_2$, and student $i_1$ is guaranteed admission to school $s_1$. Hence, only student $i_4$ is assigned and clinched to school $s_3$. In the second round of the clinching procedure, each remaining student still points to her top choice school. Since student $i_4$ is clinched to school $s_3$, guaranteed admissions are updated as follows. Student $i_2$ is now guaranteed admission to both school $s_1$ and $s_2$, while student $i_1$ is still guaranteed admission to school $s_1$. Hence, only student $i_2$ is assigned and clinched to school $s_1$. In the third round of the clinching procedure, each remaining student still points to her top choice school. Since students $i_2$ and $i_4$ are, respectively, clinched to schools $s_1$ and $s_3$, guaranteed admissions are updated as follows. Student $i_1$ is still guaranteed admission to school $s_1$, while student $i_3$ is now guaranteed admission to school $s_2$. Hence, only student $i_3$ is assigned and clinched to school $s_2$. Since student $i_1$ remains pointing at $s_2$, the iterated clinching procedure stops and leads to the matches $(i_2,s_1)$, $(i_3,s_2)$ and $(i_4,s_3)$. Next, at most one round of the trading procedure takes place before proceeding (if necessary) again with the iterated clinching procedure. Each remaining student points to her top choice school that has still available capacity. Each school with available capacity points to the remaining student who has the highest priority. That is, $i_1$ points to $s_1$ and $s_1$ points to $i_1$ to form a cycle. So, $i_1$ is assigned to $s_1$ and we obtain $\mu^C= \{(i_1,s_1),(i_2,s_1),(i_3,s_2),(i_4,s_3)\}$.  

In Example \ref{Ex3}, the matching obtained from the CT algorithm differs from the matching obtained from both the TTC algorithm and the FCT algorithm, $\mu^T=\{(i_1,s_2),(i_2,s_1),(i_3,s_1),(i_4,s_3)\}=\mu^F$. Notice that $\mu^D = \mu^C$.

Morrill (2015) shows that the Clinch and Trade mechanism (CT) is Pareto efficient and strategy-proof. Unlike the TTC mechanism, the CT mechanism is bossy, not group strategy-proof, and not independent of the order in which cycles are processed. 

Take for instance the matching $\mu_0 = \{(i_1,s_2),(i_2,s_1),(i_3,s_1),(i_4,s_3)\} = \mu^T$. We construct a farsighted improving from $\mu_0$  to $\mu^C = \{(i_1,s_1),(i_2,s_1),(i_3,s_2),(i_4,s_3)\} = \mu_2$ following the steps as in the proof of Theorem \ref{THCT}. 

First, we consider student $i_4$ who is the only student to be clinched in the first round of the clinching procedure of the CT. Since student $i_4$ is matched to school $s_3$ in both $\mu_0$ and $\mu_C$ she does not participate to any deviation and remains clinched to $s_3$ along the farsighted improving path. That is, $(i_4,s_3) \in \mu_l$, $l=0,1,2$. Next, we consider student $i_2$ who is the only student to be clinched in the second round of the clinching procedure. Since student $i_2$ is matched to school $s_1$ in both $\mu_0$ and $\mu_C$ she does not participate to any deviation and remains clinched to $s_2$ along the farsighted improving path. That is, $(i_2,s_1) \in \mu_l$, $l=0,1,2$. Next, we consider student $i_3$ who is the only student to be clinched in the third round of the clinching procedure. Looking forward towards $\mu^C$, the coalition $N_0=\{i_3,s_2\}$ deviates from $\mu_0$ so that student $i_3$ joins school $s_2$ to reach the matching $\mu_1 = \{(i_1,i_1),(i_2,s_1),(i_3,s_2),(i_4,s_3)\}$. Student $i_3$ is matched to her preferred school $s_2$ in $\mu_1$ and she has a higher priority than $i_1$ at $s_2$. By doing so, student $i_1$ is pushed out of school $s_2$. Next, there is one cycle between schools and students who are not clinched in the iterated clinching procedure: $i_1$ and $s_1$ form a cycle. So, the coalition $N_1=\{i_1,s_1\}$ deviates from $\mu_1$ so that student $i_1$ joins school $s_1$ to reach the matching $\mu_2 = \{(i_1,s_1),(i_2,s_1),(i_3,s_2),(i_4,s_3)\}=\mu^C$. Thus, $\mu^C \in \phi(\mu^T)$.

In fact, it holds in general that, from any $\mu \neq \mu^C$ there exists a farsighted improving path leading to $\mu^C$. Hence, $\{\mu^C\}$ is a farsighted stable set. 

\begin{atheorem}\label{THCT}
Let $\langle I,S,q,P,F\rangle$ be a school choice problem and $\mu^C$ be the matching obtained from the Clinch and Trade mechanism. The singleton set $\{\mu^C\}$ is a farsighted stable set.
\end{atheorem}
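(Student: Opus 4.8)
The plan is to follow the same strategy as in the proof of Theorem~\ref{T5}, now replaying the Clinch and Trade algorithm in place of the TTC algorithm. Since $\{\mu^C\}$ is a singleton, internal stability holds vacuously, so the whole argument concerns external stability: given an arbitrary $\mu\neq\mu^C$, I would construct a farsighted improving path from $\mu$ to $\mu^C$ by processing the actions of CT in the order in which they occur. Recall that CT alternates between an iterated clinching procedure (possibly several rounds) and at most one trading round, and that each student is removed --- and her final match $\mu^C(i)$ pinned down --- either when she is clinched to a school or when she is assigned through a trading cycle. I would process students in exactly this CT order, so that at any point of the construction the matches of all already-processed students coincide with $\mu^C$ and are never touched again; call these the \emph{fixed} matches. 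The invariant to maintain is that whenever a school $s$ has not yet been removed in CT, strictly fewer than $q_s$ of its seats are held by fixed students, and the current matching can only put a not-yet-processed student at a school still available in CT (or leave her unassigned); this is precisely what makes the later improvement conditions work, exactly as in Theorem~\ref{T5}.

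The path is built from two kinds of elementary moves. When a student $i$ is clinched to $s$ and $(i,s)$ is not already present, the coalition $\{i,s\}$ enforces the matching in which $i$ leaves her current school and joins $s$, evicting from $s$ the non-fixed enrolled student of lowest $F_s$-rank if $s$ is full; since $i$ is guaranteed admission, her rank at $s$ among the not-yet-removed students is at most the number of still-available seats, so she outranks at least one non-fixed occupant, which validates condition~(ii) of Definition~\ref{D1C}, and she strictly improves because $\mu^C(i)=s$ is her favorite school among those still available when she is clinched while her current assignment is such a school or her outside option (CT being individually rational, which is immediate from its definition). When instead a trading cycle $c=(s^1,i^1,\dots,s^l,i^l)$ forms, I would use the same three-move block as in Theorem~\ref{T5}: the students of $c$ together with the schools of $c$ first move so that each $i^p$ sits at the school $s^p$ pointing to her (she outranks any non-fixed occupant there because $s^p$ points to the highest-priority remaining student); then the students of $c$ all vacate those seats, which is individually rational toward $\mu^C$; then the students of $c$ together with the schools of $c$ move so that each $i^p$ joins $\mu^C(i^p)=s^{p+1}$, which now has a free seat because exactly one cycle student left it in the second move and exactly one joins it in the third. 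In every block the relevant comparison is with the end matching $\mu^C$, and at least one participating student is strictly better off because a block is performed only when the corresponding CT matches are not yet in place.

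Running through all clinching rounds and trading rounds of CT in order eventually yields $\bigcup_k M_k=\mu^C$, which gives a farsighted improving path from $\mu$ to $\mu^C$ and hence external stability. The real work --- and the only obstacle --- is the bookkeeping: one must verify that the fixed-matches invariant survives both move types (in particular that evictions only remove not-yet-processed students and that the capacity accounting in the three-move block is exact), that consecutive matchings in the sequence are genuinely distinct as Definition~\ref{D1C} requires (handled, as in Theorem~\ref{T5}, by skipping any block whose target matches already hold), and that the interleaving of clinching and trading phases in CT --- where updated priorities may turn a previously non-guaranteed student into a guaranteed one --- does not upset the ``favorite among still-available schools'' property driving the improvement conditions. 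None of this is deep; it is simply what the appendix proof must spell out, everything else being identical in spirit to the proofs of Theorem~\ref{T5} and Theorem~\ref{THFCT}.
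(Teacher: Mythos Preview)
Your proposal is correct and follows essentially the same approach as the paper's own proof: internal stability is trivial, and external stability is obtained by replaying the CT algorithm in order, handling each clinching round by having the clinched students join their guaranteed school (evicting the lowest-priority non-fixed occupant if full) and handling each trading cycle via the same three-move block used in Theorem~\ref{T5}. The paper's appendix proof simply spells out the explicit sequence of intermediate matchings $\mu_{k,r}$, $\mu_{k,l}'$, $\mu_{k,l}''$, $\mu_{k,l}'''$ with their exact addition/removal formulas, but the structure and all the ideas you describe are identical.
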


The proof of Theorem \ref{THCT} can be found in Appendix \ref{AP2}.

\subsection{Equitable Top Trading Cycles Algorithm} 

Hakimov and Kesten (2018) introduce the Equitable Top Trading Cycles mechanism for selecting a matching for each school problem by means of the Equitable Top Trading Cycles algorithm (ETTC). They show that the ETTC mechanism is Pareto-efficient and group strategy-proof and eliminates more avoidable justified envy situations than the TTC. Instead of allowing only the current highest priority students to participate in the trading process, the ETTC assigns all slots of each school $s$ to all the $q_s$ students with the highest priorities in each school, giving one slot to each student and endowing them with equal trading power. The terms of trade are next determined by a pointing rule specifying for each student-school pair which student-school pair should be pointed to among those who contain the remaining favorite school. In the ETTC, each student-school pair points to the pair containing the highest priority student for the school contained in the former pair in order to ensure that the students included in a cycle between two student-school pairs have the highest priority for their favorite schools among their competitors at that step of the trading market.

\begin{aexample}[Morrill, 2015]\label{Ex4} 
Consider a school choice problem $\langle I,S,q,P,F\rangle$ with $I=\{i_1,i_2,i_3,i_4\}$ and $S=\{s_1,s_2,s_3\}$. Students' preferences and schools' priorities and capacities are as follows.
\begin{center}
\begin{tabular}{cccc}
\multicolumn{4}{c}{Students}\\
\hline
$P_{i_1}$& $P_{i_2}$ & $P_{i_3}$ & $P_{i_4}$ \\
\hline
$s_1$ & $s_3$ & $s_2$ & $s_2$  \\
$s_2$ & $s_1$ & $s_1$  & $s_3$ \\
$s_3$ & $s_2$ & $s_3$  & $s_1$ 
\end{tabular}
~\qquad~
\begin{tabular}{lccc}
\multicolumn{4}{c}{Schools}\\
\hline
&     $F_{s_1}$& $F_{s_2}$ & $F_{s_3}$\\
\hline
$q_s$ & 2 & 1 & 1\\
\hline
&$i_2$ & $i_1$ & $i_1$\\
&$i_4$ & $i_2$ & $i_4$\\
&$i_1$ & $i_3$ & $i_2$\\
&$i_3$ & $i_4$ & $i_3$
    \end{tabular}
  \end{center}

\end{aexample}

Let $\mu^E$ be the matching obtained from the ETTC mechanism. A formal description of the ETTC algorithm can be found in Appendix \ref{AP3}. By means of Example \ref{Ex4} we illustrate the mechanism behind the ETTC algorithm. In the inheritance round of the first step of the ETTC algorithm, all seats are available to inherit and so, students are assigned to seats according to the priority orders $F$ to form the following student-school pairs: $(i_2,s_1)$, $(i_4,s_1)$, $(i_1,s_2)$ and $(i_1,s_3)$. Next, each student-school pair $(i,s)$ points to the student-pair $(i^{\prime},s^{\prime})$ such that $s^{\prime}$ is the top choice of student $i$ and $i^{\prime}$ has the highest priority for school $s$ among the students who are assigned a seat at school $s^{\prime}$ in the inheritance round. That is, $(i_2,s_1)$ points to $(i_1,s_3)$, $(i_4,s_1)$ points to $(i_1,s_2)$, $(i_1,s_2)$ points to $(i_2,s_1)$ and $(i_1,s_3)$ points to $(i_4,s_1)$. There is one cycle $(i_2,s_1) \mapsto (i_1,s_3) \mapsto (i_4,s_1) \mapsto (i_1,s_2) \mapsto (i_2,s_1)$.\footnote{There is at least one cycle. If some student appears in the same cycle or in different cycles with different schools, then she is definitely assigned a seat at her top choice among those schools while the other seats she was pointing to remain to be inherited in the next step.}  It leads to the following matches: $(i_1,s_1)$, $(i_2,s_3)$ and $(i_4,s_2)$. In the inheritance round of the second step of the ETTC algorithm, only one seat at school $s_1$ is available to inherit and so student $i_3$ is assigned to a seat at school $s_1$ to form the student-school pair $(i_3,s_1)$. Next, the student-school pair $(i_3,s_1)$ points to $(i_3,s_1)$ and the match $(i_3,s_1)$ is formed. We reach the matching obtained from the ETTC algorithm, $\mu^E = \{(i_1,s_1),(i_2,s_3),i_3,s_1),(i_4,s_2)\}$. 

In Example \ref{Ex4}, the matching obtained from the ETTC algorithm differs from the matching obtained from the TTC algorithm, $\mu^T=\{(i_1,s_1),(i_2,s_3),(i_3,s_2),(i_4,s_1)\}=\mu^F=\mu^C$. For completeness, $\mu^D=\{(i_1,s_1),(i_2,s_1),(i_3,s_2),(i_4,s_3)\}$ is the matching obtained from the DA algorithm.\footnote{In Example \ref{Ex1}, $\mu^T = \mu^F = \mu^C = \mu^E \neq \mu^D $. In Example \ref{Ex2}, $\mu^T = \mu^E \neq \mu^F = \mu^C = \mu^D$. In Example \ref{Ex3}, $\mu^T = \mu^E = \mu^F \neq \mu^C = \mu^D$.} 

Take for instance the matching $\mu_0 = \{(i_1,s_1),(i_2,s_3),(i_3,s_2),(i_4,s_1)\} = \mu^T$. We construct a farsighted improving from $\mu_0$  to $\mu^E = \{(i_1,s_1),(i_2,s_3),i_3,s_1),(i_4,s_2)\} = \mu_3$ following the steps as in the proof of Theorem \ref{THETTC}. In the first round of ETTC, there is only cycle between student-school pairs involving students $i_1,i_2,i_4$ and schools $s_1,s_2,s_3$. Looking forward towards $\mu^E$, the coalition $N_0=\{i_1,i_2,i_4,s_1,s_2\}$ deviates from $\mu_0$ so that student $i_1$ joins school $s_2$, student $i_2$ joins school $s_1$ and student $i_4$ joins school $s_1$ to reach the matching $\mu_1 = \{(i_1,s_2),(i_2,s_1),(i_3,i_3),(i_4,s_1)\}$. That is, each student is matched to the school from the pair student-school she belongs to. Next, the coalition $N_0=\{i_1,i_2,i_4\}$ deviates from $\mu_1$ so that student $i_1$ leaves school $s_2$, student $i_2$ leaves school $s_1$ and student $i_4$ leaves school $s_1$ to reach the matching $\mu_2 = \{(i_1,i_1),(i_2,i_2),(i_3,i_3),(i_4,i_4)\}$ where all students are unmatched. Next, the coalition $N_0=\{i_1,i_2,i_3,i_4,s_1,s_2,s_3\}$ deviates from $\mu_2$ so that student $i_1$ joins school $s_1$, student $i_2$ joins school $s_3$, student $i_3$ joins school $s_1$ and student $i_4$ joins school $s_2$ to reach the matching $\mu_3 = \{(i_1,s_1),(i_2,s_3),(i_3,s_1),(i_4,s_2)\} = \mu^E$. Thus, $\mu^E \in \phi(\mu^T)$. Again, it holds in general that, from any $\mu \neq \mu^E$ there exists a farsighted improving path leading to $\mu^E$. Thus, the matching obtained from the FCT algorithm also preserves the property of being stable once students are farsighted.

\begin{atheorem}\label{THETTC}
Let $\langle I,S,q,P,F\rangle$ be a school choice problem and $\mu^E$ be the matching obtained from the Equitable Top Trading Cycles mechanism. The singleton set $\{\mu^E\}$ is a farsighted stable set.
\end{atheorem}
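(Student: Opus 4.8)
The plan is to mimic the proof of Theorem~\ref{T5}, replacing the role played by the cycles of the TTC algorithm with the cycles between student-school pairs of the ETTC algorithm. Internal stability (IS) holds trivially because $\{\mu^E\}$ is a singleton. For external stability (ES), I fix an arbitrary $\mu \neq \mu^E$ and construct a farsighted improving path from $\mu$ to $\mu^E$ by processing the steps of the ETTC algorithm in the order $1,2,\ldots$ and, within each step, the cycles between student-school pairs one after another, exactly as Steps~1.1, 1.2, \ldots, 2.1, \ldots in the proof of Theorem~\ref{T5}.

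First I would handle the cycles of the first step. Consider a cycle $(i^1,s^1)\mapsto(i^2,s^2)\mapsto\cdots\mapsto(i^1,s^1)$ of student-school pairs formed in the inheritance round of Step~$1$, and let $N$ be the set of students appearing in its pairs. Looking forward toward $\mu^E$, these students (together with the relevant schools) first move so that the student of each pair $(i,s)$ occupies school $s$: since each such student is among the $q_s$ highest-priority students of $s$, the school either has a vacant seat or replaces an enrolled student by one of strictly higher priority, so condition (ii) of Definition~\ref{D1C} is met via a suitable pairing $F_s(i_t)<F_s(j_t)$; and since each student in the cycle is, in $\mu^E$, at a school she weakly prefers to every school she is assigned along these first moves, condition (i) of Definition~\ref{D1C} is met. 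Second, these students leave those schools (a move by a coalition of students only), freeing the corresponding seats. Third, each joins the school she is matched to in $\mu^E$, which is vacant by construction, so the schools accept. After processing all first-step cycles we reach a matching that agrees with $\mu^E$ on every student assigned in Step~$1$ of ETTC. One skips any three-move block whose target matches already hold in the current matching, as in the proof of Theorem~\ref{T5}, so that the sequence of matchings stays distinct.

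Then I would iterate over $k=2,3,\ldots$. When processing a cycle of Step~$k$, the students already assigned in Steps~$1,\ldots,k-1$ are kept fixed at their $\mu^E$-matches and never move again -- just as the matches in $M_1,\ldots,M_{k-1}$ remain fixed in the proof of Theorem~\ref{T5} -- and the three-move argument above applies verbatim to the remaining students and the seats that remain to be inherited after Step~$k-1$. The process terminates once every student has been assigned, and the matching reached is then $\cup_k M_k = \mu^E$; hence $\mu^E\in\phi(\mu)$ and (ES) holds.

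The main obstacle will be the bookkeeping forced by ETTC's inheritance round together with the many-to-one structure: one must verify that at the first of the three moves for a Step-$k$ cycle every school involved is at full capacity only with students whom the incoming cycle-student strictly outranks, so that condition (ii) of Definition~\ref{D1C} can indeed be satisfied; and one must handle the case, flagged in the description of ETTC, in which a single student appears in several pairs of a step with different schools -- she should then be routed to her most preferred such school, with the seats she was pointing to but does not take simply left to be inherited later, without disturbing the invariant that earlier-settled students stay put. These are checks of the same nature as in the proof of Theorem~\ref{T5}; the only genuinely new ingredient is translating ``pair points to pair'' into ``student moves to school'' when defining the deviating coalitions.
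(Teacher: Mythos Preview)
Your overall plan matches the paper's: process the ETTC cycles in order and use a join--leave--join pattern for each. The gap is precisely in the case you flag at the end. When a student $i$ appears in several pairs $(i,s^{1}),(i,s^{2}),\ldots,(i,s^{\tau})$ of the Step-$k$ cycles, the seats she ``holds'' at $s^{1},\ldots,s^{\tau}$ are \emph{not} left over: for each such pair there is another cycle member who points to it and is assigned that very school in $\mu^{E}$, so all $\tau$ seats must be freed in the current phase. Your suggestion to route $i$ once and leave the remaining seats ``to be inherited later'' therefore breaks the third move: some target school may still be at full capacity with students whom the incoming cycle member does not outrank, and condition~(ii) of Definition~\ref{D1C} fails there. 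Since $i$ can occupy only one school at a time, a single three-move block cannot free more than one of her seats.

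The paper's remedy is an additional sub-routine (its Step~$k.l.$A): after $i$ joins and leaves $s^{1}$, she sequentially joins and leaves each of $s^{2},\ldots,s^{\tau}$, each time possibly displacing the lowest-priority occupant, thereby freeing a seat at every school she is paired with. Moreover, the final ``join $\mu^{E}$-school'' move for all cycle students of Step~$k$ is deferred to a single move after \emph{all} cycles of that step have been processed, which also covers the case where $i$'s pairs lie in different cycles $c_{k}^{l}$ and $c_{k}^{l'}$. This sequential seat-freeing has no analogue in the proof of Theorem~\ref{T5} (there every student lies in exactly one pair of exactly one cycle), so it cannot be waved through as a check ``of the same nature.''
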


The proof of Theorem \ref{THETTC} can be found in Appendix \ref{AP3}.

\section{Conclusion}\label{S7} 

We consider priority-based school choice problems. Once students are farsighted, the matching obtained from the TTC mechanism becomes stable: a singleton set consisting of the TTC matching is a farsighted stable set. However, the matching obtained from the DA mechanism may not belong to any farsighted stable set. Hence, the TTC mechanism provides an assignment that is not only Pareto efficient but also farsightedly stable. Moreover, looking forward three steps ahead is already sufficient for stabilizing the matching obtained from the TTC. Since the choice between the DA mechanism or the TTC mechanism usually depends on the priorities of the policy makers, farsightedness and Pareto efficiency may tip the balance in favor of TTC or one of its variations.

\section*{Acknowledgments}

Ana Mauleon and Vincent Vannetelbosch are, respectively, Research Director and Senior Research Associate of the National Fund for Scientific Research (FNRS). Financial support from the Fonds de la Recherche Scientifique - FNRS PDR research grant T.0143.18 is gratefully acknowledged. Ata Atay is a Serra H\'{u}nter Fellow (Professor Lector Serra H\'{u}nter). Ata Atay gratefully acknowledges financial support from the University of Barcelona through grant AS017672.

\appendix

\section{Appendix}

\subsection{First Clinch and Trade Algorithm}\label{AP1} 

Let $G_s$ be the set of students who are guaranteed admissions to school $s$. That is,
\begin{equation*}
G_s = \{i \in I \mid \# (\Phi(s,i)) < q_s \}.
\end{equation*}
Thus, a student $i$ is only guaranteed admission to a school $s$ if she initially has one of the $q_s$ highest rankings at school $s$. The First Clinch and Trade mechanism (Morrill, 2015) finds a matching by means of the following First Clinch and Trade algorithm (FCT).

\begin{itemize}
\item[Step $1$.] Set $q_s^{1} = q_s$ for all $s \in S$ where $q_s^{1}$ is the remaining capacity of school $s$ at Step $1$. 

First, each student $i \in I$ points to the school that is ranked first in $P_i$. If student $i$ is pointing to school $s$ and $i \in G_s$, then she is assigned to school $s$ and the capacity of school $s$ is reduced by one. Student $i$ is said to be clinched to school $s$. Let $m_1^0$ be all matches formed by students who clinch to some school:
\begin{equation*}
m_1^0= \{(i,s) \mid s \in S \text{, }  i \in G_s \text{ and } i\mapsto s \}.
\end{equation*}
The capacity of school $s$ is now equal to $q_s^{1} - \#\{(i,s) \mid i \in G_s \text{ and } i\mapsto s \}$ and $\{i \in I \mid s \in S \text{, } i \in G_s \text{ and } i\mapsto s \}$ is the set of students who  clinch to some school and are removed.

Second, each remaining student $i \in I \setminus \{i \in I \mid s \in S \text{, } i \in G_s \text{ and } i\mapsto s \}$ points again to the school $s$ that is ranked first in $P_i$. If there is no such school, then student $i$ points to herself and she forms a self-cycle. Each school $s \in S$ points to the student $j \in I$ that has the highest priority in $F_s$. If there exists a cycle, every student in a cycle is assigned a seat at the school she points to and she is removed. Each school (student) can be part of at most one cycle. Similarly, every student in a self-cycle is not assigned to any school and is removed. If a school $s$ is part of a cycle, then its remaining capacity $q_s^2$ is equal to $q_s^1 - \#\{(i,s) \mid i \in G_s \text{ and } i\mapsto s \} - 1$. If a school $s$ is not part of any cycle, then its remaining capacity $q_s^2$ remains equal to $q_s^1 - \#\{(i,s) \mid i \in G_s \text{ and } i\mapsto s \}$. If $q_s^2 = 0$, then school $s$ is removed. Let $C_1=\{c_1^1,c_1^2,...,c_1^{L_1}\}$ be the set of cycles in Step $1$ (where $L_1 \geq 1$ is the number of cycles in Step $1$). Let $m_1^l$ be all the matches from cycle $c_1^l$ that are formed in Step $1$ of the algorithm.
\begin{equation*}
m_1^l=\left\lbrace
\begin{matrix}
\{(i,s) \mid i,s \in c_1^l \text{ and } i\mapsto s \} & \text{if} & c_1^l \neq (j) \\
\{(j,j) \} & \text{if} & c_1^l = (j)
\end{matrix}\right.
\end{equation*}
Let $I_1$ be the set of students who are assigned to some school at Step $1$.  Let $M_1 = \cup_{l=0}^{L_1} m_1^l$ be all the matches between students and schools formed in Step $1$ of the algorithm.

\item[Step $k\geq 2$.] Notice that $q_s^k$ keeps track of how many seats are still available at the school at Step $k$ of the algorithm. 

First, each remaining student $i \in I \setminus \cup_{l=1}^{k-1}I_l$ points to the school $s$ that is ranked first in $P_i$ such that $q_s^k \geq 1$. If student $i$ is pointing to school $s$ and $i \in G_s$, then she is clinched and assigned to school $s$ and the capacity of school $s$ is reduced by one. Let $m_k^0$ be all such matches formed by students who clinch to some school in Step $k$:
\begin{equation*}
m_k^0= \{(i,s) \mid s \in S \text{, }  i \in G_s \cap (I \setminus \cup_{l=1}^{k-1}I_l) \text{ and } i\mapsto s \}.
\end{equation*}
The capacity of school $s$ is now equal to $q_s^{k} - \#\{(i,s) \mid i \in G_s \cap (I \setminus \cup_{l=1}^{k-1}I_l) \text{ and } i\mapsto s \}$ and $\{i \in I \mid s \in S \text{, } i \in G_s \cap (I \setminus \cup_{l=1}^{k-1}I_l) \text{ and } i\mapsto s \}$ is the set of students who  clinch to some school in Step $k$. 

Second, each remaining student $i \in (I \setminus \cup_{l=1}^{k-1}I_l) \setminus \{i \in I \mid s \in S \text{, } i \in G_s \cap (I \setminus \cup_{l=1}^{k-1}I_l) \text{ and } i\mapsto s \}$ points again to the school $s$ that is ranked first in $P_i$ such that $q_s^{k} \geq 1$. If there is no such school, then student $i$ points to herself and she forms a self-cycle. Each school $s \in S$ such that $q_s^{k} \geq 1$ points to the student $j \in (I \setminus \cup_{l=1}^{k-1}I_l) $ that has the highest priority in $F_s$. If there exists a cycle, every student in a cycle is assigned a seat at the school she points to and she is removed. Similarly, every student in a self-cycle is not assigned to any school and is removed. If a school $s$ is part of a cycle, then its remaining capacity $q_s^{k+1}$ is equal to $q_s^{k} - \#\{(i,s) \mid i \in G_s \cap (I \setminus \cup_{l=1}^{k-1}I_l) \text{ and } i\mapsto s \} - 1$. If a school $s$ is not part of any cycle, then its remaining capacity $q_s^{k+1}$ remains equal to $q_s^{k} - \#\{(i,s) \mid i \in G_s \cap (I \setminus \cup_{l=1}^{k-1}I_l) \text{ and } i\mapsto s \}$. If $q_s^{k+1} = 0$, then school $s$ is removed. Let $C_k=\{c_k^1,c_k^2,...,c_k^{L_k}\}$ be the set of cycles in Step $k$ (where $L_k \geq 1$ is the number of cycles in Step $k$). Let $m_k^l$ be all the matches from cycle $c_k^l$ that are formed in Step $k$ of the algorithm.
\begin{equation*}
m_k^l=\left\lbrace
\begin{matrix}
\{(i,s) \mid i,s \in c_k^l \text{ and } i\mapsto s \} & \text{if} & c_k^l \neq (j) \\
\{(j,j) \} & \text{if} & c_k^l = (j)
\end{matrix}\right.
\end{equation*}
Let $M_k = \cup_{l=0}^{L_k} m_k^l$ be all the matches between students and schools formed in Step $k$ of the algorithm. Let $I_k$ be the set of students who are assigned to some school at Step $k$. 

\item[End.] The algorithm stops when all students have been removed. Let $\bar{k}$ be the step at which the algorithm stops. Let $\mu^F$ denote the matching obtained from the First Clinch and Trade mechanism and it is given by $\mu^F = \cup_{k=1}^{\bar{k}} M_k$. 
\end{itemize}

\noindent \textbf{Proof of Theorem \ref{THFCT}}

Since $\{\mu^F\}$ is a singleton set, internal stability (IS) is satisfied. (ES) Take any matching $\mu \neq \mu^F$, we need to show that $\phi(\mu) \ni \mu^F$. We build in steps a farsighted improving path from $\mu$ to $\mu^F$.
\begin{itemize}
\item[Step 1.0.]If $m_1^0 \subseteq \mu$ and $C_1 \neq \emptyset$ then go to Step 1.1 with $\mu_{1,0}^{\prime} = \mu$. If $m_1^0 \subseteq \mu$ and $C_1 = \emptyset$ then go to Step 1.End with $\mu_{1,L_1}^{\prime \prime \prime} = \mu$. Notice that it is not excluded that $m_1^0 = \emptyset$. Let $\Lambda_1= \#\{(i,s) \notin \mu \mid (i,s) \in m_1^0 \}$ be the number of students who are not yet matched to their preferred school in $\mu$ and are guaranteed admissions to their preferred school. If $m_1^0 \nsubseteq \mu$ then $\mu_{1,0}^{\prime} = \mu + \{(i,s) \notin \mu \mid (i,s) \in m_1^0 \} - \{(j,s) \in \mu \mid \Lambda_j^s (\mu) < \Lambda_1 - q_s + \#\mu(s)\}$ where $\Lambda_j^s (\mu) = \#\{l \in I \mid (l,s) \in \mu \text{ and } F_s(l) > F_s(j)\}$ is the number of students who are matched to school $s$ in $\mu$ and have a lower priority than student $j$. We reach $\mu_{1,0}^{\prime }$ with $m_1^0 \subseteq \mu_{1,0}^{\prime }$. If $C_1 \neq \emptyset$, then go to Step 1.1. Otherwise, go to Step 1.End with $\mu_{1,L_1}^{\prime \prime \prime} = \mu_{1,0}^{\prime }$.

\item[Step 1.1.]If $m_1^1 \subseteq \mu_{1,0}^{\prime}$ and $1 \neq L_1$ then go to Step 1.2 with $\mu_{1,1}^{\prime \prime \prime} = \mu_{1,0}^{\prime}$. If $m_1^1 \subseteq \mu_{1,0}^{\prime}$ and $1 = L_1$ then go to Step 1.End with $\mu_{1,L_1}^{\prime \prime \prime} = \mu_{1,0}^{\prime}$. If $m_1^1 \nsubseteq \mu_{1,0}^{\prime}$ then $\mu_{1,1}^{\prime} = \mu_{1,0}^{\prime} - \{(i,\mu_{1,0}^{\prime}(i)) \mid (i,\mu^F(i)) \in m_1^1 \text{ and } \mu_{1,0}^{\prime}(i) \neq i \} + \{(i,s) \mid i,s \in c_1^1 \text{ and } s \mapsto i\} - \{(j,s) \in \mu_{1,0}^{\prime} \mid s \in c_1^1 \text{, } \mu_{1,0}^{\prime}(s) \cap c_1^1 = \emptyset \text{, } \# \mu_{1,0}^{\prime}(s)=q_s \text{ and } F_s(j)>F_s(l) \text{ for all } l \in \mu_{1,0}^{\prime}(s), l \neq j \}$. Next $\mu_{1,1}^{\prime \prime} = \mu_{1,1}^{\prime} - \{(i,s) \mid i,s \in c_1^1 \text{ and } s \mapsto i\}$. Next $\mu_{1,1}^{\prime \prime \prime} = \mu_{1,1}^{\prime \prime} + \{(i,s) \mid i,s \in c_1^1 \text{ and } i \mapsto s\}$. We reach $\mu_{1,1}^{\prime \prime \prime}$ with $m_1^1 \subseteq \mu_{1,1}^{\prime \prime \prime}$. If $1 \neq L_1$, then go to Step 1.2. Otherwise, go to Step 1.End with $\mu_{1,L_1}^{\prime \prime \prime} = \mu_{1,1}^{\prime \prime \prime}$.

\item[Step 1.$k$.]($k>1$) If $m_1^k \subseteq \mu_{1,k-1}^{\prime \prime \prime}$ and $k \neq L_1$ then go to Step 1.$k$+1 with $\mu_{1,k}^{\prime \prime \prime} = \mu_{1,k-1}^{\prime \prime \prime}$. If $m_1^k \subseteq \mu_{1,k-1}^{\prime \prime \prime}$ and $k = L_1$ then go to Step 1.End with $\mu_{1,L_1}^{\prime \prime \prime} = \mu_{1,k-1}^{\prime \prime \prime}$. If $m_1^k \nsubseteq \mu_{1,k-1}^{\prime \prime \prime}$ then $\mu_{1,k}^{\prime} = \mu_{1,k-1}^{\prime \prime \prime} - \{(i,\mu_{1,k-1}^{\prime \prime \prime}(i)) \mid (i,\mu^F(i)) \in m_1^k \text{ and } \mu_{1,k-1}^{\prime \prime \prime}(i) \neq i\} + \{(i,s) \mid i,s \in c_1^k \text{ and } s \mapsto i\} - \{(j,s) \in \mu_{1,k-1}^{\prime \prime \prime} \mid s \in c_1^k \text{, } \mu_{1,k-1}^{\prime \prime \prime}(s) \cap c_1^k = \emptyset \text{, } \# \mu_{1,k-1}^{\prime \prime \prime}(s)=q_s \text{ and } F_s(j)>F_s(l) \text{ for all } l \in \mu_{1,k-1}^{\prime \prime \prime}(s), l \neq j \}$. Next $\mu_{1,k}^{\prime \prime} = \mu_{1,k}^{\prime} - \{(i,s) \mid i,s \in c_1^k \text{ and } s \mapsto i\}$. Next $\mu_{1,k}^{\prime \prime \prime} = \mu_{1,k}^{\prime \prime} + \{(i,s) \mid i,s \in c_1^k \text{ and } i \mapsto s\}$. We reach $\mu_{1,k}^{\prime \prime \prime}$ with $m_1^k \subseteq \mu_{1,k}^{\prime \prime \prime}$. If $k \neq L_1$, then go to Step 1.$k$+1. Otherwise, go to Step 1.End with $\mu_{1,L_1}^{\prime \prime \prime} = \mu_{1,k}^{\prime \prime \prime}$.

\item[Step 1.End.] We have reached $\mu_{1,L_1}^{\prime \prime \prime}$ with $\cup_{l=0}^{L_1} m_1^{l} = M_1 \subseteq \mu_{1,L_1}^{\prime \prime \prime}$. If $\mu_{1,L_1}^{\prime \prime \prime} = \mu^F$ then the process ends. Otherwise, go to Step 2.0.

\item[Step 2.0.]If $m_2^0 \subseteq \mu_{1,L_1}^{\prime \prime \prime}$ and $C_2 \neq \emptyset$ then go to Step 2.1 with $\mu_{2,0}^{\prime} = \mu_{1,L_1}^{\prime \prime \prime}$. If $m_2^0 \subseteq \mu_{1,L_1}^{\prime \prime \prime}$ and $C_2 = \emptyset$ then go to Step 2.End with $\mu_{2,L_2}^{\prime \prime \prime} = \mu_{1,L_1}^{\prime \prime \prime}$. Given $M_1 \subseteq \mu_{1,L_1}^{\prime \prime \prime}$, let $\Lambda_2= \#\{(i,s) \notin \mu_{1,L_1}^{\prime \prime \prime} \mid (i,s) \in m_2^0 \}$ be the number of students who are not yet matched to their preferred school in $\mu_{1,L_1}^{\prime \prime \prime}$ and are guaranteed admissions to their preferred school. If $m_2^0 \nsubseteq \mu_{1,L_1}^{\prime \prime \prime}$ then $\mu_{2,0}^{\prime} = \mu_{1,L_1}^{\prime \prime \prime} + \{(i,s) \notin \mu_{1,L_1}^{\prime \prime \prime} \mid (i,s) \in m_2^0 \} - \{(j,s) \in \mu_{1,L_1}^{\prime \prime \prime} \mid \Lambda_j^s (\mu_{1,L_1}^{\prime \prime \prime}) < \Lambda_2 - q_s + \#\mu_{1,L_1}^{\prime \prime \prime}(s)\}$ where $\Lambda_j^s (\mu_{1,L_1}^{\prime \prime \prime}) = \#\{l \in I \mid (l,s) \in \mu_{1,L_1}^{\prime \prime \prime} \text{ and } F_s(l) > F_s(j)\}$ is the number of students who are matched to school $s$ in $\mu_{1,L_1}^{\prime \prime \prime}$ and have a lower priority than student $j$. We reach $\mu_{2,0}^{\prime }$ with $m_2^0 \subseteq \mu_{2,0}^{\prime }$. If $C_2 \neq \emptyset$, then go to Step 2.1. Otherwise, go to Step 2.End with $\mu_{2,L_2}^{\prime \prime \prime} = \mu_{2,0}^{\prime }$.

\item[Step 2.1.]If $m_2^1 \subseteq \mu_{2,0}^{\prime}$ and $1 \neq L_2$ then go to Step 2.2 with $\mu_{2,1}^{\prime \prime \prime} = \mu_{2,0}^{\prime}$. If $m_2^1 \subseteq \mu_{2,0}^{\prime}$ and $1 = L_2$ then go to Step 2.End with $\mu_{2,L_2}^{\prime \prime \prime} = \mu_{2,0}^{\prime}$. If $m_2^1 \nsubseteq \mu_{2,0}^{\prime}$ then $\mu_{2,1}^{\prime} = \mu_{2,0}^{\prime} - \{(i,\mu_{2,0}^{\prime}(i)) \mid (i,\mu^F(i)) \in m_2^1 \text{ and } \mu_{2,0}^{\prime}(i) \neq i \} + \{(i,s) \mid i,s \in c_2^1 \text{ and } s \mapsto i\} - \{(j,s) \in \mu_{2,0}^{\prime} \mid s \in c_2^1 \text{, } \mu_{2,0}^{\prime}(s) \cap c_2^1 = \emptyset \text{, } \# \mu_{2,0}^{\prime}(s)=q_s \text{ and } F_s(j)>F_s(l) \text{ for all } l \in \mu_{2,0}^{\prime}(s), l \neq j \}$. Next $\mu_{2,1}^{\prime \prime} = \mu_{2,1}^{\prime} - \{(i,s) \mid i,s \in c_2^1 \text{ and } s \mapsto i\}$. Next $\mu_{2,1}^{\prime \prime \prime} = \mu_{2,1}^{\prime \prime} + \{(i,s) \mid i,s \in c_2^1 \text{ and } i \mapsto s\}$. We reach $\mu_{2,1}^{\prime \prime \prime}$ with $m_2^1 \subseteq \mu_{2,1}^{\prime \prime \prime}$. If $1 \neq L_2$, then go to Step 2.2. Otherwise, go to Step 2.End with $\mu_{2,L_2}^{\prime \prime \prime} = \mu_{2,1}^{\prime \prime \prime}$.

\item[Step 2.$k$.]($k>1$) If $m_2^k \subseteq \mu_{2,k-1}^{\prime \prime \prime}$ and $k \neq L_2$ then go to Step 2.$k$+1 with $\mu_{2,k}^{\prime \prime \prime} = \mu_{2,k-1}^{\prime \prime \prime}$. If $m_2^k \subseteq \mu_{2,k-1}^{\prime \prime \prime}$ and $k = L_2$ then go to Step 2.End with $\mu_{2,L_2}^{\prime \prime \prime} = \mu_{2,k-1}^{\prime \prime \prime}$. If $m_2^k \nsubseteq \mu_{2,k-1}^{\prime \prime \prime}$ then $\mu_{2,k}^{\prime} = \mu_{2,k-1}^{\prime \prime \prime} - \{(i,\mu_{2,k-1}^{\prime \prime \prime}(i)) \mid (i,\mu^F(i)) \in m_2^k \text{ and } \mu_{2,k-1}^{\prime \prime \prime}(i) \neq i \} + \{(i,s) \mid i,s \in c_2^k \text{ and } s \mapsto i\} - \{(j,s) \in \mu_{2,k-1}^{\prime \prime \prime} \mid s \in c_2^k \text{, } \mu_{2,k-1}^{\prime \prime \prime}(s) \cap c_2^k = \emptyset \text{, } \# \mu_{2,k-1}^{\prime \prime \prime}(s)=q_s \text{ and } F_s(j)>F_s(l) \text{ for all } l \in \mu_{2,k-1}^{\prime \prime \prime}(s), l \neq j \}$. Next $\mu_{2,k}^{\prime \prime} = \mu_{2,k}^{\prime} - \{(i,s) \mid i,s \in c_2^k \text{ and } s \mapsto i\}$. Next $\mu_{2,k}^{\prime \prime \prime} = \mu_{2,k}^{\prime \prime} + \{(i,s) \mid i,s \in c_2^k \text{ and } i \mapsto s\}$. We reach $\mu_{2,k}^{\prime \prime \prime}$ with $m_2^k \subseteq \mu_{2,k}^{\prime \prime \prime}$. If $k \neq L_2$, then go to Step 2.$k$+1. Otherwise, go to Step 2.End with $\mu_{2,L_2}^{\prime \prime \prime} = \mu_{2,k}^{\prime \prime \prime}$.

\item[Step 2.End.] We have reached $\mu_{2,L_2}^{\prime \prime \prime}$ with $M_1 \cup M_2 \subseteq \mu_{2,L_2}^{\prime \prime \prime}$ where $\cup_{l=0}^{L_1} m_1^{l} = M_1$ and $\cup_{l=0}^{L_2} m_2^{l} = M_2$. If $\mu_{2,L_2}^{\prime \prime \prime} = \mu^F$ then the process ends. Otherwise, go to Step 3.0.

\item[End.]The process goes on until we reach $\mu_{\bar{k},L_{\bar{k}}}^{\prime \prime \prime} = \cup_{k=1}^{\bar{k}} M_k = \mu^F$.

\end{itemize}
\begin{flushright}
\qedsymbol
\end{flushright}

\subsection{Clinch and Trade Algorithm}\label{AP2} 

The Clinch and Trade mechanism (Morrill, 2015) finds a matching by means of the following Clinch and Trade algorithm (CT).

\begin{itemize}
\item[Step $1$.] Set $q_s^{1} = q_s$ for all $s \in S$ where $q_s^{1}$ is the remaining capacity of school $s$ at Step $1$. 

\item[$1$.A.] Let $q_s^{1,1} = q_s^{1}$ for all $s \in S$.

In the first round of the clinching procedure, each student $i \in I$ points to the school that is ranked first in $P_i$. If student $i$ is pointing to school $s$ and $i \in G_s^{1,1} = \{i \in I \mid \# (\Phi(s,i)) < q_s^{1,1} \}$, then she is assigned to school $s$ and the capacity of school $s$ is reduced by one. Student $i$ is said to be clinched to school $s$. Let $I_0^{1,1} = \{i \in I \mid i \in G_s^{1,1} \text{ and } i\mapsto s \}$ be the set of students who clinch to some school in the first round and let $m_1^{0,1}$ be all the matches formed by students belonging to $I_0^{1,1}$. Let $q_s^{1,2} = q_s^{1,1} - \#\{(i,s) \mid i \in G_s^{1,1} \text{ and } i\mapsto s \}$ be the capacity of school $s$ at the end of the first round. Whenever a student is removed, the rankings of all schools are adjusted accordingly. In the second round of the clinching procedure, each remaining student $i \in I \setminus I_0^{1,1}$ points to the school that is ranked first in $P_i$. If student $i$ is pointing to school $s$ and $i \in G_s^{1,2} = \{i \in I \setminus I_0^{1,1} \mid \# (\Phi(s,i) \setminus I_0^{1,1}) < q_s^{1,2} \}$, then she is assigned to school $s$ and the capacity of school $s$ is reduced by one. Let $I_0^{1,2} = \{i \in I \setminus I_0^{1,1} \mid i \in G_s^{1,2} \text{ and } i\mapsto s \}$ be the set of students who clinch to some school in the second round and let $m_1^{0,2}$ be all the matches formed by students belonging to $I_0^{1,2}$. Let $q_s^{1,3} = q_s^{1,2} - \#\{(i,s) \mid i \in G_s^{1,2} \text{ and } i\mapsto s \}$ be the capacity of school $s$ at the end of the second round. Whenever a student is removed, the rankings of all schools are adjusted accordingly. Let $I_0^{1,0} = \emptyset$. For $r \geq 1$, let
\begin{equation*}
\begin{split}
G_s^{1,r} & = \{i \in I \setminus \cup_{l=0}^{r-1} I_0^{1,l}  \mid \# ( \Phi(s,i) \setminus \cup_{l=0}^{r-1} I_0^{1,l} ) < q_s^{1,r} \}, \\
q_s^{1,r+1} & = q_s^{1,r} - \#\{(i,s) \mid i \in G_s^{1,r} \text{ and } i\mapsto s \}, \\
I_0^{1,r} & = \{i \in I \setminus \cup_{l=0}^{r-1} I_0^{1,l}  \mid i \in G_s^{1,r} \text{ and } i\mapsto s \}, \\
m_1^{0,r} & = \{(i,s) \mid i \in G_s^{1,r} \text{ and } i\mapsto s \}.
\end{split}
\end{equation*}

The clinching procedure is iterated until $I_0^{1,r_1^{\prime}} \neq \emptyset$ while $I_0^{1,r_1^{\prime}+1} = \emptyset$ for some $r_1^{\prime}$. Then, $M_1^0 = \cup_{r=1}^{r_1^{\prime}} m_1^{0,r}$ are all the matches obtained from iterating the clinching procedure in Step 1.A. 

\item[$1$.B] Each remaining student $i \in I \setminus \cup_{l=0}^{r_1^{\prime}} I_0^{1,l} $ points to the school $s$ that is ranked first in $P_i$ such that $q_s^{1,r_1^{\prime}+1} \geq 1$. If there is no such school, then student $i$ points to herself and she forms a self-cycle. Each school $s \in S$ such that $q_s^{1,r_1^{\prime}+1} \geq 1$ points to the student $j \in I \setminus \cup_{l=0}^{r_1^{\prime}} I_0^{1,l}$ that has the highest priority in $F_s$. If there exists a cycle, every student in a cycle is assigned a seat at the school she points to and she is removed. Each school (student) can be part of at most one cycle. Similarly, every student in a self-cycle is not assigned to any school and is removed. If a school $s$ is part of a cycle, then its remaining capacity $q_s^2$ is equal to $q_s^{1,r_1^{\prime}+1} - 1$. If a school $s$ is not part of any cycle, then its remaining capacity $q_s^2$ remains equal to $q_s^{1,r_1^{\prime}+1}$. If $q_s^2 = 0$, then school $s$ is removed. Let $C_1=\{c_1^1,c_1^2,...,c_1^{L_1}\}$ be the set of cycles in Step $1$.B (where $L_1 \geq 1$ is the number of cycles in Step $1$.B). Let $m_1^l$ be all the matches from cycle $c_1^l$ that are formed in Step $1$.B of the algorithm.
\begin{equation*}
m_1^l=\left\lbrace
\begin{matrix}
\{(i,s) \mid i,s \in c_1^l \text{ and } i\mapsto s \} & \text{if} & c_1^l \neq (j) \\
\{(j,j) \} & \text{if} & c_1^l = (j)
\end{matrix}\right.
\end{equation*}
Let $I_1^1$ be the set of students who are assigned to some school and let $M_1^1 = \cup_{l=1}^{L_1} m_1^l$ be all the matches between students and schools formed in Step $1$.B of the algorithm. Let $I_1 = I_1^1 \cup (\cup_{l=0}^{r_1^{\prime}} I_0^{1,l})$ be the set of students who are assigned to some school in Step $1$ of the algorithm and let $M_1 = M_1^0 \cup M_1^1 $ be all the matches formed between students and schools.

\item[Step $k\geq 2$] Notice that $q_s^k$ keeps track of how many seats are still available at the school at step $k$ of the algorithm. Let $\widetilde{I}_{k-1} = I \setminus (\cup_{l=1}^{k-1} I_l)$ be the set of students who are not yet assigned at the end of Step $k-1$ of the algorithm.

\item[$k$.A.] Let $q_s^{k,1} = q_s^{k}$ for all $s \in S$. Let $\widehat{I}_k = \{i \in \widetilde{I}_{k-1} \mid i\mapsto s \text{ in Step } k-1.\text{B} \text{ and } q_s^{k,1} \geq 1 \}$ be the set of students who were pointing to some school in Step $k-1$.B that still has available capacity in Step $k$.A (i.e. $q_s^{k,1} \geq 1$). Each student $i \in \widehat{I}_k$ does not participate to the clinching procedure. In the first round of the clinching procedure, each student $i \in \widetilde{I}_{k-1} \setminus \widehat{I}_k$ points to the school that is ranked first in $P_i$. If student $i$ is pointing to school $s$ and $i \in G_s^{k,1} = \{i \in \widetilde{I}_{k-1} \setminus \widehat{I}_k \mid \# (\Phi(s,i) \cap \widetilde{I}_{k-1}) < q_s^{k,1} \}$, then she is assigned to school $s$ and the capacity of school $s$ is reduced by one. Student $i$ is said to be clinched to school $s$. Let $I_0^{k,1} = \{i \in \widetilde{I}_{k-1} \setminus \widehat{I}_k \mid i \in G_s^{k,1} \text{ and } i\mapsto s \}$ be the set of students who clinch to some school in the first round and let $m_k^{0,1}$ be all the matches formed by students belonging to $I_0^{k,1}$. Let $q_s^{k,2} = q_s^{k,1} - \#\{(i,s) \mid i \in G_s^{k,1} \text{ and } i\mapsto s \}$ be the capacity of school $s$ at the end of the first round. Whenever a student is removed, the rankings of all schools are adjusted accordingly. In the second round of the clinching procedure, each remaining student $i \in \widetilde{I}_{k-1} \setminus (\widehat{I}_k \cup I_0^{k,1}) $ points to the school that is ranked first in $P_i$. If student $i$ is pointing to school $s$ and $i \in G_s^{k,2} = \{i \in \widetilde{I}_{k-1} \setminus (\widehat{I}_k \cup I_0^{k,1}) \mid \# (\Phi(s,i) \cap (\widetilde{I}_{k-1} \setminus I_0^{k,1}) < q_s^{k,2} \}$, then she is assigned to school $s$ and the capacity of school $s$ is reduced by one. Let $I_0^{k,2} = \{i \in \widetilde{I}_{k-1} \setminus (\widehat{I}_k \cup I_0^{k,1}) \mid i \in G_s^{k,2} \text{ and } i\mapsto s \}$ be the set of students who clinch to some school in the second round and let $m_k^{0,2}$ be all the matches formed by students belonging to $I_0^{k,2}$. Let $q_s^{k,3} = q_s^{k,2} - \#\{(i,s) \mid i \in G_s^{k,2} \text{ and } i\mapsto s \}$ be the capacity of school $s$ at the end of the second round. Whenever a student is removed, the rankings of all schools are adjusted accordingly. Let $I_0^{k,0} = \emptyset$. For $r \geq 1$, let
\begin{equation*}
\begin{split}
G_s^{k,r} & = \{i \in \widetilde{I}_{k-1} \setminus (\widehat{I}_k \cup (\cup_{l=0}^{r-1} I_0^{k,l}))  \mid \# ( \Phi(s,i) \setminus \cup_{l=0}^{r-1} I_0^{k,l} ) < q_s^{k,r} \}, \\
q_s^{k,r+1} & = q_s^{k,r} - \#\{(i,s) \mid i \in G_s^{k,r} \text{ and } i\mapsto s \}, \\
I_0^{k,r} & = \{i \in \widetilde{I}_{k-1} \setminus (\widehat{I}_k \cup (\cup_{l=0}^{r-1} I_0^{k,l}))  \mid i \in G_s^{k,r} \text{ and } i\mapsto s \}, \\
m_k^{0,r} & = \{(i,s) \mid i \in G_s^{k,r} \text{ and } i\mapsto s \}.
\end{split}
\end{equation*}

The clinching procedure is iterated until $I_0^{k,r_k^{\prime}} \neq \emptyset$ while $I_0^{k,r_k^{\prime}+1} = \emptyset$ for some $r_k^{\prime}$. Then, $M_k^0 = \cup_{r=1}^{r_k^{\prime}} m_k^{0,r}$ are all the matches obtained from iterating the clinching procedure in Step $k$.A.

\item[$k$.B] Each remaining student $i \in \widetilde{I}_{k-1} \setminus \cup_{l=0}^{r_k^{\prime}} I_0^{k,l} $ points to the school $s$ that is ranked first in $P_i$ such that $q_s^{k,r_k^{\prime}+1} \geq 1$. If there is no such school, then student $i$ points to herself and she forms a self-cycle. Each school $s \in S$ such that $q_s^{k,r_k^{\prime}+1} \geq 1$ points to the student $j \in \widetilde{I}_{k-1} \setminus \cup_{l=0}^{r_k^{\prime}} I_0^{k,l}$ that has the highest priority in $F_s$. If there exists a cycle, every student in a cycle is assigned a seat at the school she points to and she is removed. Each school (student) can be part of at most one cycle. Similarly, every student in a self-cycle is not assigned to any school and is removed. If a school $s$ is part of a cycle, then its remaining capacity $q_s^{k+1}$ is equal to $q_s^{k,r_k^{\prime}+1} - 1$. If a school $s$ is not part of any cycle, then its remaining capacity $q_s^{k+1}$ remains equal to $q_s^{k,r_k^{\prime}+1}$. If $q_s^{k+1} = 0$, then school $s$ is removed. Let $C_k=\{c_k^1,c_k^2,...,c_k^{L_k}\}$ be the set of cycles in Step $k$.B (where $L_k \geq 1$ is the number of cycles in Step $k$.B). Let $m_k^l$ be all the matches from cycle $c_k^l$ that are formed in Step $k$.B of the algorithm.
\begin{equation*}
m_k^l=\left\lbrace
\begin{matrix}
\{(i,s) \mid i,s \in c_k^l \text{ and } i\mapsto s \} & \text{if} & c_k^l \neq (j) \\
\{(j,j) \} & \text{if} & c_k^l = (j)
\end{matrix}\right.
\end{equation*}
Let $I_k^1$ be the set of students who are assigned to some school and let $M_k^1 = \cup_{l=1}^{L_k} m_k^l$ be all the matches between students and schools formed in Step $k$.B of the algorithm. Let $I_k = I_k^1 \cup (\cup_{l=0}^{r_k^{\prime}} I_0^{k,l})$ be the set of students who are assigned to some school in Step $k$ of the algorithm and let $M_k = M_k^0 \cup M_k^1 $ be all the matches formed between students and schools.

\item[End] The algorithm stops when all students have been removed. Let $\bar{k}$ be the step at which the algorithm stops. Let $\mu^C$ denote the matching obtained from the Clinch and Trade mechanism and it is given by $\mu^C = \cup_{k=1}^{\bar{k}} M_k$. 
\end{itemize}

\noindent \textbf{Proof of Theorem \ref{THCT}}

Since $\{\mu^C\}$ is a singleton set, internal stability (IS) is satisfied. (ES) Take any matching $\mu \neq \mu^C$, we need to show that $\phi(\mu) \ni \mu^C$. We build in steps a farsighted improving path from $\mu$ to $\mu^C$.
\begin{itemize}
\item[Step 1.A.1.]If $m_1^{0,1} \subseteq \mu$ and $r_1^{\prime} \neq 1$ then go to Step 1.A.2 with $\mu_{1,1} = \mu$. If $m_1^{0,1} \subseteq \mu$, $r_1^{\prime}=1$ and $C_1 \neq \emptyset$ then go to Step 1.B.1 with $\mu_{1,r_1^{\prime}} = \mu$. If $m_1^{0,1} \subseteq \mu$, $r_1^{\prime}=1$ and $C_1 = \emptyset$ then go to Step 1.End with $\mu_{1,L_1}^{\prime \prime \prime} = \mu$. It is not excluded that $m_1^{0,1} = \emptyset$. Let $\Lambda_{1,1}(s)= \#\{(i,s^{\prime}) \notin \mu \mid (i,s^{\prime}) \in m_1^{0,1} \text{ and } s^{\prime} = s\}$ be the number of students who are not yet matched to their preferred school $s$ in $\mu$ and are guaranteed admissions to their preferred school $s$. If $m_1^{0,1} \nsubseteq \mu$ then $\mu_{1,1} = \mu + \{(i,s) \notin \mu \mid (i,s) \in m_1^{0,1} \} - \{(j,s) \in \mu \mid \Lambda_j^s (\mu) < \Lambda_{1,1}(s) - q_s + \#\mu(s)\}$ where $\Lambda_j^s (\mu) = \#\{l \in I \mid (l,s) \in \mu \text{ and } F_s(l) > F_s(j)\}$ is the number of students who are matched to school $s$ in $\mu$ and have a lower priority than student $j$. We reach $\mu_{1,1}$ with $m_1^{0,1} \subseteq \mu_{1,1}$. If $r_1^{\prime} \neq 1$ then go to Step 1.A.2. If $r_1^{\prime} = 1$ and $C_1 \neq \emptyset$, then go to Step 1.B.1. Otherwise, go to Step 1.End with $\mu_{1,L_1}^{\prime \prime \prime} = \mu_{1,1}$.

\item[Step 1.A.$k$.]($k>1$) If $m_1^{0,k} \subseteq \mu_{1,k-1}$ and $r_1^{\prime} \neq k$ then go to Step 1.A.$k+1$ with $\mu_{1,k} = \mu_{1,k-1}$. If $m_1^{0,k} \subseteq \mu_{1,k-1}$, $r_1^{\prime}=k$ and $C_1 \neq \emptyset$ then go to Step 1.B.1 with $\mu_{1,r_1^{\prime}} = \mu_{1,k-1}$. If $m_1^{0,k} \subseteq \mu_{1,k-1}$, $r_1^{\prime}=k$ and $C_1 = \emptyset$ then go to Step 1.End with $\mu_{1,L_1}^{\prime \prime \prime} = \mu_{1,k-1}$. It is not excluded that $m_1^{0,k} = \emptyset$. Given the matches $\cup_{r=1}^{k-1} m_1^{0,r}  \subseteq \mu_{1,k-1}$ remain fixed, let $\Lambda_{1,k}(s)= \#\{(i,s^{\prime}) \notin \mu_{1,k-1} \mid (i,s^{\prime}) \in m_1^{0,k} \text{ and } s^{\prime} = s\}$ be the number of students who are not yet matched to their preferred school $s$ in $\mu_{1,k-1}$ and are guaranteed admissions to their preferred school $s$. If $m_1^{0,k} \nsubseteq \mu_{1,k-1}$ then $\mu_{1,k} = \mu_{1,k-1} + \{(i,s) \notin \mu_{1,k-1} \mid (i,s) \in m_1^{0,k} \} - \{(j,s) \in \mu_{1,k-1} \mid \Lambda_j^s (\mu_{1,k-1}) < \Lambda_{1,k}(s) - q_s + \#\mu_{1,k-1}(s)\}$ where $\Lambda_j^s (\mu_{1,k-1}) = \#\{l \in I \mid (l,s) \in \mu_{1,k-1} \text{ and } F_s(l) > F_s(j)\}$ is the number of students who are matched to school $s$ in $\mu_{1,k-1}$ and have a lower priority than student $j$. We reach $\mu_{1,k}$ with $m_1^{0,k} \subseteq \mu_{1,k}$. If $r_1^{\prime} \neq k$ then go to Step 1.A.$k+1$. If $r_1^{\prime} = k$ and $C_1 \neq \emptyset$, then go to Step 1.B.1. Otherwise, go to Step 1.End with $\mu_{1,L_1}^{\prime \prime \prime} = \mu_{1,k}$.

\item[Step 1.B.1.]If $m_1^1 \subseteq \mu_{1,r_1^{\prime}}$ and $1 \neq L_1$ then go to Step 1.B.2 with $\mu_{1,1}^{\prime \prime \prime} = \mu_{1,r_1^{\prime}}$. If $m_1^1 \subseteq \mu_{1,r_1^{\prime}}$ and $1 = L_1$ then go to Step 1.End with $\mu_{1,L_1}^{\prime \prime \prime} = \mu_{1,r_1^{\prime}}$. If $m_1^1 \nsubseteq \mu_{1,r_1^{\prime}}$ then $\mu_{1,1}^{\prime} = \mu_{1,r_1^{\prime}} - \{(i,\mu_{1,r_1^{\prime}}(i)) \mid (i,\mu^C(i)) \in m_1^1 \text{ and } \mu_{1,r_1^{\prime}}(i) \neq i \} + \{(i,s) \mid i,s \in c_1^1 \text{ and } s \mapsto i\} - \{(j,s) \in \mu_{1,r_1^{\prime}} \mid s \in c_1^1 \text{, } \mu_{1,r_1^{\prime}}(s) \cap c_1^1 = \emptyset \text{, } \# \mu_{1,r_1^{\prime}}(s)=q_s \text{ and } F_s(j)>F_s(l) \text{ for all } l \in \mu_{1,r_1^{\prime}}(s), l \neq j \}$. Next $\mu_{1,1}^{\prime \prime} = \mu_{1,1}^{\prime} - \{(i,s) \mid i,s \in c_1^1 \text{ and } s \mapsto i\}$. Next $\mu_{1,1}^{\prime \prime \prime} = \mu_{1,1}^{\prime \prime} + \{(i,s) \mid i,s \in c_1^1 \text{ and } i \mapsto s\}$. We reach $\mu_{1,1}^{\prime \prime \prime}$ with $m_1^1 \subseteq \mu_{1,1}^{\prime \prime \prime}$. If $1 \neq L_1$, then go to Step 1.B.2. Otherwise, go to Step 1.End with $\mu_{1,L_1}^{\prime \prime \prime} = \mu_{1,1}^{\prime \prime \prime}$.

\item[Step 1.B.$k$.]($k>1$) If $m_1^k \subseteq \mu_{1,k-1}^{\prime \prime \prime}$ and $k \neq L_1$ then go to Step 1.B.$k$+1 with $\mu_{1,k}^{\prime \prime \prime} = \mu_{1,k-1}^{\prime \prime \prime}$. If $m_1^k \subseteq \mu_{1,k-1}^{\prime \prime \prime}$ and $k = L_1$ then go to Step 1.End with $\mu_{1,L_1}^{\prime \prime \prime} = \mu_{1,k-1}^{\prime \prime \prime}$. If $m_1^k \nsubseteq \mu_{1,k-1}^{\prime \prime \prime}$ then $\mu_{1,k}^{\prime} = \mu_{1,k-1}^{\prime \prime \prime} - \{(i,\mu_{1,k-1}^{\prime \prime \prime}(i)) \mid (i,\mu^C(i)) \in m_1^k \text{ and } \mu_{1,k-1}^{\prime \prime \prime}(i) \neq i\} + \{(i,s) \mid i,s \in c_1^k \text{ and } s \mapsto i\} - \{(j,s) \in \mu_{1,k-1}^{\prime \prime \prime} \mid s \in c_1^k \text{, } \mu_{1,k-1}^{\prime \prime \prime}(s) \cap c_1^k = \emptyset \text{, } \# \mu_{1,k-1}^{\prime \prime \prime}(s)=q_s \text{ and } F_s(j)>F_s(l) \text{ for all } l \in \mu_{1,k-1}^{\prime \prime \prime}(s), l \neq j \}$. Next $\mu_{1,k}^{\prime \prime} = \mu_{1,k}^{\prime} - \{(i,s) \mid i,s \in c_1^k \text{ and } s \mapsto i\}$. Next $\mu_{1,k}^{\prime \prime \prime} = \mu_{1,k}^{\prime \prime} + \{(i,s) \mid i,s \in c_1^k \text{ and } i \mapsto s\}$. We reach $\mu_{1,k}^{\prime \prime \prime}$ with $m_1^k \subseteq \mu_{1,k}^{\prime \prime \prime}$. If $k \neq L_1$, then go to Step 1.B.$k$+1. Otherwise, go to Step 1.End with $\mu_{1,L_1}^{\prime \prime \prime} = \mu_{1,k}^{\prime \prime \prime}$.

\item[Step 1.End.] We have reached $\mu_{1,L_1}^{\prime \prime \prime}$ with $\cup_{l=1}^{L_1} m_1^{l} = M_1^1 \subseteq \mu_{1,L_1}^{\prime \prime \prime}$ and $\cup_{r=1}^{r_1^{\prime}} m_1^{0,r}= M_1^0 \subseteq \mu_{1,L_1}^{\prime \prime \prime}$. That is, $M_1 \subseteq \mu_{1,L_1}^{\prime \prime \prime}$. If $\mu_{1,L_1}^{\prime \prime \prime} = \mu^C$ then the process ends. Otherwise, go to Step 2.A.1.

\item[Step 2.A.1.]If $m_2^{0,1} \subseteq \mu_{1,L_1}^{\prime \prime \prime}$ and $r_2^{\prime} \neq 1$ then go to Step 2.A.2 with $\mu_{2,1} = \mu_{1,L_1}^{\prime \prime \prime}$. If $m_2^{0,1} \subseteq \mu_{1,L_1}^{\prime \prime \prime}$, $r_2^{\prime}=1$ and $C_2 \neq \emptyset$ then go to Step 2.B.1 with $\mu_{2,r_2^{\prime}} = \mu_{1,L_1}^{\prime \prime \prime}$. If $m_2^{0,1} \subseteq \mu_{1,L_1}^{\prime \prime \prime}$, $r_2^{\prime}=1$ and $C_2 = \emptyset$ then go to Step 2.End with $\mu_{2,L_2}^{\prime \prime \prime} = \mu_{1,L_1}^{\prime \prime \prime}$. It is not excluded that $m_2^{0,1} = \emptyset$. Given the matches $M_1 \subseteq \mu_{1,L_1}^{\prime \prime \prime}$ remain fixed, let $\Lambda_{2,1}(s)= \#\{(i,s^{\prime}) \notin \mu_{1,L_1}^{\prime \prime \prime} \mid (i,s^{\prime}) \in m_2^{0,1} \text{ and } s^{\prime} = s\}$ be the number of students who are not yet matched to their preferred school $s$ in $\mu_{1,L_1}^{\prime \prime \prime}$ and are guaranteed admissions to their preferred school $s$. If $m_2^{0,1} \nsubseteq \mu_{1,L_1}^{\prime \prime \prime}$ then $\mu_{2,1} = \mu_{1,L_1}^{\prime \prime \prime} + \{(i,s) \notin \mu_{1,L_1}^{\prime \prime \prime} \mid (i,s) \in m_2^{0,1} \} - \{(j,s) \in \mu_{1,L_1}^{\prime \prime \prime} \mid \Lambda_j^s (\mu_{1,L_1}^{\prime \prime \prime}) < \Lambda_{2,1}(s) - q_s + \#\mu_{1,L_1}^{\prime \prime \prime}(s)\}$ where $\Lambda_j^s (\mu_{1,L_1}^{\prime \prime \prime}) = \#\{l \in I \mid (l,s) \in \mu_{1,L_1}^{\prime \prime \prime} \text{ and } F_s(l) > F_s(j)\}$ is the number of students who are matched to school $s$ in $\mu_{1,L_1}^{\prime \prime \prime}$ and have a lower priority than student $j$. We reach $\mu_{2,1}$ with $m_2^{0,1} \subseteq \mu_{2,1}$. If $r_2^{\prime} \neq 1$ then go to Step 2.A.2. If $r_2^{\prime} = 1$ and $C_2 \neq \emptyset$, then go to Step 2.B.1. Otherwise, go to Step 2.End with $\mu_{2,L_2}^{\prime \prime \prime} = \mu_{2,1}$.

\item[Step 2.A.$k$.]($k>1$) If $m_2^{0,k} \subseteq \mu_{2,k-1}$ and $r_2^{\prime} \neq k$ then go to Step 2.A.$k+1$ with $\mu_{2,k} = \mu_{2,k-1}$. If $m_2^{0,k} \subseteq \mu_{2,k-1}$, $r_2^{\prime}=k$ and $C_2 \neq \emptyset$ then go to Step 2.B.1 with $\mu_{2,r_2^{\prime}} = \mu_{2,k-1}$. If $m_2^{0,k} \subseteq \mu_{2,k-1}$, $r_2^{\prime}=k$ and $C_2 = \emptyset$ then go to Step 2.End with $\mu_{2,L_2}^{\prime \prime \prime} = \mu_{2,k-1}$. It is not excluded that $m_2^{0,k} = \emptyset$. Given the matches $M_1 \cup (\cup_{r=1}^{k-1} m_2^{0,r})  \subseteq \mu_{2,k-1}$ remain fixed, let $\Lambda_{2,k}(s)= \#\{(i,s^{\prime}) \notin \mu_{2,k-1} \mid (i,s^{\prime}) \in m_2^{0,k} \text{ and } s^{\prime} = s \}$ be the number of students who are not yet matched to their preferred school $s$ in $\mu_{2,k-1}$ and are guaranteed admissions to their preferred school $s$. If $m_2^{0,k} \nsubseteq \mu_{2,k-1}$ then $\mu_{2,k} = \mu_{2,k-1} + \{(i,s) \notin \mu_{2,k-1} \mid (i,s) \in m_2^{0,k} \} - \{(j,s) \in \mu_{2,k-1} \mid \Lambda_j^s (\mu_{2,k-1}) < \Lambda_{2,k}(s) - q_s + \#\mu_{2,k-1}(s)\}$ where $\Lambda_j^s (\mu_{2,k-1}) = \#\{l \in I \mid (l,s) \in \mu_{2,k-1} \text{ and } F_s(l) > F_s(j)\}$ is the number of students who are matched to school $s$ in $\mu_{2,k-1}$ and have a lower priority than student $j$. We reach $\mu_{2,k}$ with $m_2^{0,k} \subseteq \mu_{2,k}$. If $r_2^{\prime} \neq k$ then go to Step 2.A.$k+1$. If $r_2^{\prime} = k$ and $C_2 \neq \emptyset$, then go to Step 2.B.1. Otherwise, go to Step 2.End with $\mu_{2,L_2}^{\prime \prime \prime} = \mu_{2,k}$.

\item[Step 2.B.1.]If $m_2^1 \subseteq \mu_{2,r_2^{\prime}}$ and $1 \neq L_2$ then go to Step 2.B.2 with $\mu_{2,1}^{\prime \prime \prime} = \mu_{2,r_2^{\prime}}$. If $m_2^1 \subseteq \mu_{2,r_2^{\prime}}$ and $1 = L_2$ then go to Step 2.End with $\mu_{2,L_2}^{\prime \prime \prime} = \mu_{2,r_2^{\prime}}$. If $m_2^1 \nsubseteq \mu_{2,r_2^{\prime}}$ then $\mu_{2,1}^{\prime} = \mu_{2,r_2^{\prime}} - \{(i,\mu_{2,r_2^{\prime}}(i)) \mid (i,\mu^C(i)) \in m_2^1 \text{ and } \mu_{2,r_2^{\prime}}(i) \neq i \} + \{(i,s) \mid i,s \in c_2^1 \text{ and } s \mapsto i\} - \{(j,s) \in \mu_{2,r_2^{\prime}} \mid s \in c_2^1 \text{, } \mu_{2,r_2^{\prime}}(s) \cap c_2^1 = \emptyset \text{, } \# \mu_{2,r_2^{\prime}}(s)=q_s \text{ and } F_s(j)>F_s(l) \text{ for all } l \in \mu_{2,r_2^{\prime}}(s), l \neq j \}$. Next $\mu_{2,1}^{\prime \prime} = \mu_{2,1}^{\prime} - \{(i,s) \mid i,s \in c_2^1 \text{ and } s \mapsto i\}$. Next $\mu_{2,1}^{\prime \prime \prime} = \mu_{2,1}^{\prime \prime} + \{(i,s) \mid i,s \in c_2^1 \text{ and } i \mapsto s\}$. We reach $\mu_{2,1}^{\prime \prime \prime}$ with $m_2^1 \subseteq \mu_{2,1}^{\prime \prime \prime}$. If $1 \neq L_2$, then go to Step 2.B.2. Otherwise, go to Step 2.End with $\mu_{2,L_2}^{\prime \prime \prime} = \mu_{2,1}^{\prime \prime \prime}$.

\item[Step 2.B.$k$.]($k>1$) If $m_2^k \subseteq \mu_{2,k-1}^{\prime \prime \prime}$ and $k \neq L_2$ then go to Step 2.B.$k$+1 with $\mu_{2,k}^{\prime \prime \prime} = \mu_{2,k-1}^{\prime \prime \prime}$. If $m_2^k \subseteq \mu_{2,k-1}^{\prime \prime \prime}$ and $k = L_2$ then go to Step 2.End with $\mu_{2,L_2}^{\prime \prime \prime} = \mu_{2,k-1}^{\prime \prime \prime}$. If $m_2^k \nsubseteq \mu_{2,k-1}^{\prime \prime \prime}$ then $\mu_{2,k}^{\prime} = \mu_{2,k-1}^{\prime \prime \prime} - \{(i,\mu_{2,k-1}^{\prime \prime \prime}(i)) \mid (i,\mu^C(i)) \in m_2^k \text{ and } \mu_{2,k-1}^{\prime \prime \prime}(i) \neq i\} + \{(i,s) \mid i,s \in c_2^k \text{ and } s \mapsto i\} - \{(j,s) \in \mu_{2,k-1}^{\prime \prime \prime} \mid s \in c_2^k \text{, } \mu_{2,k-1}^{\prime \prime \prime}(s) \cap c_2^k = \emptyset \text{, } \# \mu_{2,k-1}^{\prime \prime \prime}(s)=q_s \text{ and } F_s(j)>F_s(l) \text{ for all } l \in \mu_{2,k-1}^{\prime \prime \prime}(s), l \neq j \}$. Next $\mu_{2,k}^{\prime \prime} = \mu_{2,k}^{\prime} - \{(i,s) \mid i,s \in c_2^k \text{ and } s \mapsto i\}$. Next $\mu_{2,k}^{\prime \prime \prime} = \mu_{2,k}^{\prime \prime} + \{(i,s) \mid i,s \in c_2^k \text{ and } i \mapsto s\}$. We reach $\mu_{2,k}^{\prime \prime \prime}$ with $m_2^k \subseteq \mu_{2,k}^{\prime \prime \prime}$. If $k \neq L_2$, then go to Step 2.B.$k$+1. Otherwise, go to Step 2.End with $\mu_{2,L_2}^{\prime \prime \prime} = \mu_{2,k}^{\prime \prime \prime}$.

\item[Step 2.End.] We have reached $\mu_{2,L_2}^{\prime \prime \prime}$ with $\cup_{l=1}^{L_2} m_2^{l} = M_2^1 \subseteq \mu_{2,L_2}^{\prime \prime \prime}$, $\cup_{r=1}^{r_2^{\prime}} m_2^{0,r}= M_2^0 \subseteq \mu_{2,L_2}^{\prime \prime \prime}$ and $M_1 \subseteq \mu_{2,L_2}^{\prime \prime \prime}$. That is, $M_1 \cup M_2 \subseteq \mu_{2,L_2}^{\prime \prime \prime}$. If $\mu_{2,L_2}^{\prime \prime \prime} = \mu^C$ then the process ends. Otherwise, go to Step 3.A.1.

\item[End.]The process goes on until we reach $\mu_{\bar{k},L_{\bar{k}}}^{\prime \prime \prime} = \cup_{k=1}^{\bar{k}} M_k = \mu^C$.

\end{itemize}
\begin{flushright}
\qedsymbol
\end{flushright}

\subsection{Equitable Top Trading Cycles Algorithm}\label{AP3} 

The Equitable Top Trading Cycles mechanism (Hakimov and Kesten, 2018) finds a matching by means of the following Equitable Top Trading Cycles algorithm (ETTC).

\begin{itemize}
\item[Step $1$.] Set $q_s^{1} = q_s$ for all $s \in S$ where $q_s^{1}$ is the initial capacity of school $s$ at Step $1$. 

\item[$1$.A.] In the inheritance round, since all seats are available to inherit, students are assigned seats according to the priority orders $F$ to form student-school pairs. Let $\Phi^1(s,i) = \{j \in I \mid F_s(j) < F_s(i)\}$ be the set of students who have higher priority than student $i$ for school $s$ in Step 1. Let $\mathcal{IS}_{1}=\{(i,s) \in I \times S \mid \# \Phi^1(s,i) \leq q^{1}_{s}\}$ be the set of student-school pairs formed by assigning students one-by-one to the schools while respecting their capacities. In other words, $\mathcal{IS}_{1}$ consists of student-school pairs such that each school $s$ pairs with $q_{s}$ highest priority students.

\item[$1$.B.] Each student-school pair $(i,s)\in\mathcal{IS}_{1}$ points to the student-school pair $(i^{\prime},s^{\prime})\in\mathcal{IS}_{1}$ such that (1) $s^{\prime}$ is the best choice of student $i$ in $P_{i}$, and (2) student $i^{\prime}$ has the highest priority in $F_{s}$ among students who are assigned a seat at $s^{\prime}$, i.e. $(i,s)\mapsto (i^{\prime},s^{\prime})$ such that $F_{s}(i^{\prime}) < F_{s}(l)$ for any other $(l,s^{\prime})\in\mathcal{IS}_{1}$.\footnote{Note that if $F_{s}(i)=1$, then all pairs $(i,s^{\prime})\in\mathcal{IS}_{1}$ point to $(i,s)$.} Since there is a finite number of students and schools, there is at least one cycle. Let $C_{1}=\{c_{1}^{1},c^{2}_{1},\ldots, c_{1}^{L_{1}}\}$ be the set of cycles in Step $1$.B where $L_1 \ge 1$ is the number of cycles in Step $1$.B. 

\item[$1$.C.] If student $i$ appears in the same cycle or in different cycles with different schools, then she is assigned a seat at her top choice among those schools. That is, for each $i\in I$ such that there exists $(i,s)\mapsto (i^{\prime},s^{\prime})$ in $c_{1}^{l}$ and $(i,\hat{s}) \mapsto (i^{\prime \prime},s^{\prime \prime})$ in $c_{1}^{l^{\prime}}$, possibly $c_{1}^{l}=c_{1}^{l^{\prime}}$, $m_{1}(i)=s^{\prime}$ such that $P_{i}(s^{\prime}) > P_{i}(s^{\prime \prime})$. Moreover, the seats at all other schools than her top choice she points to in those cycles remain to be inherited in Step $2$.A. For all other students, they are matched with the school that is in the student-school pair they point to at a cycle $c^{l}_{1}$, i.e. if $(i,s) \mapsto (i^{\prime},s^{\prime})$ (possibly $s = s^{\prime}$) in $c^{l}_{1}$, then $m_{1}(i) = s^{\prime}$. Finally, if there is a student-school pair participating in a cycle, $(i,s) \in c^{l}_{1}$, and another student-school pair with the same student and a different school not participating at any cycle, $(i,s^{\prime}) \notin c_{1}^{l^{\prime}}$, $c_{1}^{l^{\prime}} \in C_1$, then this seat at school $s^{\prime}$ remains to be inherited in Step $2$.A. Let $I_1 = \{i \in I \mid (i,s) \in c_1^{l} \text{, } c_1^{l} \in C_1\}$ be the set of students involved in a cycle in Step 1. Let $M_{1} = \cup_{i \in I_1}(i,m_{1}(i))$ be all the matches formed between students and schools in Step 1. Let $\widehat{I}_1 =I \setminus I_{1}$ be the set of students who have not been assigned a seat at the end of Step 1. If $\widehat{I}_1 \neq \emptyset$, then go to Step $2$.A. Otherwise, go to End.

\item[Step $k \geq 2$.] At the beginning of Step $k$, the remaining capacity of school $s$ is $q_s^{k}$ and the set of remaining students is $\widehat{I}_{k-1}$.

\item[$k$.A.] In the inheritance round, for each school $s$ such that (1) there are seats at $s$ remained from Step $k-1$.C to be inherited, and (2) no student-school pair was assigned at Step $k-1$ and hence remaining seats $q_{s}^{k}$ are assigned to the remaining students $\widehat{I}_{k-1}$ according to the priority orders $F$ to form student-school pairs. Let $\Phi^k(s,i) = \{j \in \widehat{I}_{k-1} \mid F_s(j) < F_s(i)\}$ be the set of students who have higher priority than student $i$ for school $s$ in Step $k$. Let $\mathcal{IS}_{k} = \{(i,s) \in \widehat{I}_{k-1} \times S \mid \# \Phi^k(s,i) \leq q^{k}_{s} \}$ be the set of student-school pairs formed by assigning students one-by-one to the schools while respecting their capacities. In other words, $\mathcal{IS}_{k}$ consists of student-school pairs such that each school $s$ pairs with $q_{s}^{k}$ highest priority students.

\item[$k$.B.] Each student-school pair $(i,s) \in \mathcal{IS}_{k}$ points to the student-school pair $(i^{\prime},s^{\prime}) \in \mathcal{IS}_{k}$ such that (1) $s^{\prime}$ is the best choice of student $i$ in $P_{i}$, and (2) student $i^{\prime}$ has the highest priority in $F_{s}$ among students that are assigned a seat at $s^{\prime}$, i.e. $(i,s) \mapsto (i^{\prime},s^{\prime})$ such that $F_{s}(i^{\prime}) < F_{s}(l)$ for any other $(l,s^{\prime}) \in \mathcal{IS}_{k}$. Since there is a finite number of students and schools, there is at least one cycle. Let $C_{k}=\{c_{k}^{1},c^{2}_{k}, \ldots, c_{k}^{L_{k}}\}$ be the set of cycles in Step $k$.B where $L_k \ge 1$ is the number of cycles in Step $k$.B. 

\item[$k$.C.] If student $i$ appears in the same cycle or in different cycles with different schools, then she is assigned a seat at her top choice among those schools. That is, for each $i\in \widehat{I}_{k-1}$ such that there exists $(i,s) \mapsto (i^{\prime},s^{\prime})$ in $c_{k}^{l}$ and $(i,\hat{s}) \mapsto (i^{\prime \prime},s^{\prime \prime})$ in $c_{k}^{l^{\prime}}$, possibly $c_{k}^{l}=c_{k}^{l^{\prime}}$, $m_{k}(i)=s^{\prime}$ such that $P_{i}(s^{\prime}) > P_{i}(s^{\prime \prime})$. Moreover, the seats at all other schools than her top choice she points to in those cycles remain to be inherited in Step $k$.A. For all other students, they are matched with the school that is in the student-school pair they point to at a cycle $c^{l}_{k}$, i.e. if $(i,s)\mapsto (i^{\prime},s^{\prime})$ (possibly $s = s^{\prime}$) in $c^{l}_{k}$, then $m_{k}(i) = s^{\prime}$. Finally, if there is a student-school pair participating in a cycle, $(i,s) \in c^{l}_{k}$, and another student-school pair with the same student and a different school not participating at any cycle, $(i,s^{\prime}) \notin c_{k}^{l^{\prime}}$, $c_{k}^{l^{\prime}} \in C_k$, then this seat at school $s^{\prime}$ remains to be inherited in Step $k+1$.A. Let $I_k = \{i \in \widehat{I}_{k-1} \mid (i,s) \in c_k^{l} \text{, } c_k^{l} \in C_k\}$ be the set of students involved in a cycle in Step $k$. Let $M_{k} = \cup_{i \in I_k}(i,m_{k}(i))$ be all the matches formed between students and schools in Step $k$. Let $\widehat{I}_k = \widehat{I}_{k-1} \setminus I_{k}$ be the set of students who have not been assigned a seat at the end of Step $k$. If $\widehat{I}_k \neq \emptyset$, then go to Step $k+1$.A. Otherwise, go to End.

\item[End] The algorithm stops when all students have been removed. Let $\bar{k}$ be the step at which the algorithm stops. Let $\mu^{E}$ denote the matching obtained from the ETTC algorithm and it is given by $\mu^{E}= \cup_{k=1}^{\bar{k}}M_{k}$.
 
\end{itemize}

\noindent \textbf{Proof of Theorem \ref{THETTC}}

Since $\{\mu^E\}$ is a singleton set, internal stability (IS) is satisfied. (ES) Take any matching $\mu \neq \mu^E$, we need to show that $\phi(\mu) \ni \mu^E$. We build in steps a farsighted improving path from $\mu$ to $\mu^E$.
\begin{itemize}

\item[Step 1.1.] If $(i,\mu^E(i)) \in \mu$ for all $i \in \{j \in I \mid (j,s) \in c_1^1\}$ and $1 \neq L_1$ then go to Step 1.2 with $\mu_{1,1}^{\prime \prime \prime } = \mu$. If $(i,\mu^E(i)) \in \mu$ for all $i \in \{j \in I \mid (j,s) \in c_1^1\}$ and $1 = L_1$ then go to Step 1.End with $\mu_{1,L_1}^{\prime \prime \prime } = \mu$. Let $c_1^1(i)=\{(i,s^l)\}_{l=1}^{\tau(i,c_1^1)}$ such that $(i,s^l) \in c_1^1$ and $s^l = s_{o_l} \neq s^{l+1} = s_{o_{l+1}}$ with $o_{l} < o_{l+1}$ for $l=1,...,\tau(i,c_1^1)-1$. That is, $c_1^1(i)$ is an ordered set of the pairs involving student $i$ in cycle $c_1^1$ where $\tau(i,c_1^1)=\#\{(j,s) \in c_1^1 \mid j=i\}$ is the number of distinct pairs involving student $i$ in cycle $c_1^1$. Let $\Lambda_{1,1}(s)= \#\{(i,s^{\prime}) \notin \mu \mid (i,s^{\prime}) = (i,s^1) \text{ with } (i,s^1) \in c_1^1(i) \text{ and } s^{\prime} = s\}$ be the number of students who are not yet matched in $\mu$ to school $s$ that ranks them among the first $q_s$ positions and is ranked first in their ordered set. If $(i,\mu^E(i)) \notin \mu$ for some $i \in \{j \in I \mid (j,s) \in c_1^1\}$ then $\mu_{1,1}^{\prime} = \mu - \{(i,\mu(i)) \mid (i,s) \in c_1^1 \text{ and } \mu(i) \neq i\} + \{(i,s) \mid (i,s) = (i,s^1) \text{ with } (i,s^1) \in c_1^1(i)  \} - \{(j,s) \in \mu \mid \Lambda_j^s (\mu) < \Lambda_{1,1}(s) - q_s + \#\mu(s)\}$ where $\Lambda_j^s (\mu) = \#\{l \in I \mid (l,s) \in \mu \text{ and } F_s(l) > F_s(j)\}$ is the number of students who are matched to school $s$ in $\mu$ and have a lower priority than student $j$. Next, if $(i,\mu_{1,1}^{\prime}(i))=(i,\mu^E(i))$ and $c_1^1=\{(i,\mu^E(i))\}$ then $\mu_{1,1}^{\prime \prime} = \mu_{1,1}^{\prime}$. Otherwise, $\mu_{1,1}^{\prime \prime} = \mu_{1,1}^{\prime} - \{(i,s) \mid (i,s) = (i,s^1) \text{ with } (i,s^1) \in c_1^1(i)  \}$ so that all students involved in $c_1^1$ are unmatched. If $\tau(i,c_1^1) = 1$ for all $i \in \{j \in I \mid (j,s) \in c_1^1\}$ and $1 \neq L_1$, then go to Step 1.2. If $\tau(i,c_1^1) = 1$ for all $i \in \{j \in I \mid (j,s) \in c_1^1\}$ and $1 = L_1$, then go to Step 1.End with $\mu_{1,L_1}^{\prime \prime \prime} = \mu_{1,1}^{\prime \prime }$. If $\tau(i,c_1^1) \neq 1$ for some $i \in \{j \in I \mid (j,s) \in c_1^1\}$ then go to Step 1.1.A.

\item[Step 1.1.A] Take $i \in \{j \in I \mid (j,s) \in c_1^1\}$ such that $\tau(i,c_1^1) \neq 1$. If $\#\mu_{1,1}^{\prime \prime}(s^2) < q_{s^2}$, then $\mu_{1,1}^{i2} = \mu_{1,1}^{\prime \prime} + (i,s^2)$ with $(i,s^2) \in c_1^1(i)$. If $\#\mu_{1,1}^{\prime \prime}(s^2) = q_{s^2}$, then $\mu_{1,1}^{i2} = \mu_{1,1}^{\prime \prime} + (i,s^2) - (j,s^2)$ with $(i,s^2) \in c_1^1(i)$, $(j,s^2) \in \mu_{1,1}^{\prime \prime}$ and $F_{s^2}(j)>F_{s^2}(l)$ for all $l\in \mu_{1,1}^{\prime \prime}(s^2)$, $l \neq j$. Next, student $i$ leaves school $s^2$ to become unmatched and guaranteeing a free slot at school $s^2$. We reach $\mu_{1,1}^{i2^{\prime}} = \mu_{1,1}^{i2} - (i,s^2)$. Next, if $\tau(i,c_1^1) \neq 2$ and $\#\mu_{1,1}^{\prime \prime}(s^3) < q_{s^3}$, then $\mu_{1,1}^{i3} = \mu_{1,1}^{i2^{\prime}} + (i,s^3)$ with $(i,s^3) \in c_1^1(i)$. If $\tau(i,c_1^1) \neq 2$ and $\#\mu_{1,1}^{\prime \prime}(s^3) = q_{s^3}$, then $\mu_{1,1}^{i3} = \mu_{1,1}^{i2^{\prime}} + (i,s^3) - (j,s^3)$ with $(i,s^3) \in c_1^1(i)$, $(j,s^3) \in \mu_{1,1}^{\prime \prime}$ and $F_{s^3}(j)>F_{s^3}(l)$ for all $l\in \mu_{1,1}^{\prime \prime}(s^3)$, $l \neq j$. Next, student $i$ leaves school $s^3$ to become unmatched and guaranteeing a free slot at school $s^3$. We reach $\mu_{1,1}^{i3^{\prime}} = \mu_{1,1}^{i3} - (i,s^3)$. We repeat this process until we reach $\mu_{1,1}^{i\tau(i,c_1^1)^{\prime}} = \mu_{1,1}^{i\tau(i,c_1^1)} - (i,s^{\tau(i,c_1^1)})$.

We repeat the process of Step 1.1.A with each student $i \in \{j \in I \mid (j,s) \in c_1^1 \}$ such that $\tau(i,c_1^1) \neq 1$ to reach in the end the matching $\mu_{1,1}^{\prime \prime \prime}$ where all students involved in $c_1^1$ are unmatched and each school $s$ involved in $c_1^1$ has $\#\{(i,s^\prime) \in c_1^1 \mid s^\prime = s \}$ free slots.

\item[Step 1.$k$.]($k>1$) If $(i,\mu^E(i)) \in \mu_{1,k-1}^{\prime \prime \prime}$ for all $i \in \{j \in I \mid (j,s) \in c_1^k\}$ and $k \neq L_1$ then go to Step 1.k+1 with $\mu_{1,k}^{\prime \prime \prime } = \mu_{1,k-1}^{\prime \prime \prime}$. If $(i,\mu^E(i)) \in \mu_{1,k-1}^{\prime \prime \prime}$ for all $i \in \{j \in I \mid (j,s) \in c_1^k\}$ and $k = L_1$ then go to Step 1.End with $\mu_{1,L_1}^{\prime \prime \prime } = \mu_{1,k-1}^{\prime \prime \prime}$. Let $c_1^k(i)=\{(i,s^l)\}_{l=1}^{\tau(i,c_1^k)}$ such that $(i,s^l) \in c_1^k$ and $s^l = s_{o_l} \neq s^{l+1} = s_{o_{l+1}}$ with $o_{l} < o_{l+1}$ for $l=1,...,\tau(i,c_1^k)-1$. That is, $c_1^k(i)$ is an ordered set of the pairs involving student $i$ in cycle $c_1^k$ where $\tau(i,c_1^k)=\#\{(j,s) \in c_1^k \mid j=i\}$ is the number of distinct pairs involving student $i$ in cycle $c_1^k$. Let $\Lambda_{1,k}(s)= \#\{(i,s^{\prime}) \notin \mu_{1,k-1}^{\prime \prime \prime} \mid (i,s^{\prime}) = (i,s^1) \text{ with } (i,s^1) \in c_1^k(i) \text{ and } s^{\prime} = s\}$ be the number of students who are not yet matched in $\mu_{1,k-1}^{\prime \prime \prime}$ to school $s$ that ranks them among the first $q_s$ positions and is ranked first in their ordered set. If $(i,\mu^E(i)) \notin \mu_{1,k-1}^{\prime \prime \prime}$ for some $i \in \{j \in I \mid (j,s) \in c_1^k\}$ then $\mu_{1,k}^{\prime} = \mu_{1,k-1}^{\prime \prime \prime} - \{(i,\mu_{1,k-1}^{\prime \prime \prime}(i)) \mid (i,s) \in c_1^k \text{ and } \mu_{1,k-1}^{\prime \prime \prime}(i) \neq i\} + \{(i,s) \mid (i,s) = (i,s^1) \text{ with } (i,s^1) \in c_1^k(i)  \} - \{(j,s) \in \mu_{1,k-1}^{\prime \prime \prime} \mid \Lambda_j^s (\mu_{1,k-1}^{\prime \prime \prime}) < \Lambda_{1,k}(s) - q_s + \#\mu_{1,k-1}^{\prime \prime \prime}(s)\}$ where $\Lambda_j^s (\mu_{1,k-1}^{\prime \prime \prime}) = \#\{l \in I \mid (l,s) \in \mu_{1,k-1}^{\prime \prime \prime} \text{ and } F_s(l) > F_s(j)\}$ is the number of students who are matched to school $s$ in $\mu_{1,k-1}^{\prime \prime \prime}$ and have a lower priority than student $j$. Next, if $(i,\mu_{1,k}^{\prime}(i))=(i,\mu^E(i))$ and $c_1^k=\{(i,\mu^E(i))\}$ then $\mu_{1,k}^{\prime \prime} = \mu_{1,k}^{\prime}$. Otherwise, $\mu_{1,k}^{\prime \prime} = \mu_{1,k}^{\prime} - \{(i,s) \mid (i,s) = (i,s^1) \text{ with } (i,s^1) \in c_1^k(i)  \}$ so that all students involved in $c_1^k$ are unmatched. If $\tau(i,c_1^k) = 1$ for all $i \in \{j \in I \mid (j,s) \in c_1^k\}$ and $k \neq L_1$, then go to Step 1.k+1. If $\tau(i,c_1^k) = 1$ for all $i \in \{j \in I \mid (j,s) \in c_1^k\}$ and $k = L_1$, then go to Step 1.End with $\mu_{1,L_1}^{\prime \prime \prime} = \mu_{1,k}^{\prime \prime }$. If $\tau(i,c_1^k) \neq 1$ for some $i \in \{j \in I \mid (j,s) \in c_1^k\}$ then go to Step 1.k.A.

\item[Step 1.k.A] Take $i \in \{j \in I \mid (j,s) \in c_1^k\}$ such that $\tau(i,c_1^k) \neq 1$. If $\#\mu_{1,k}^{\prime \prime}(s^2) < q_{s^2}$, then $\mu_{1,k}^{i2} = \mu_{1,k}^{\prime \prime} + (i,s^2)$ with $(i,s^2) \in c_1^k(i)$. If $\#\mu_{1,k}^{\prime \prime}(s^2) = q_{s^2}$, then $\mu_{1,k}^{i2} = \mu_{1,k}^{\prime \prime} + (i,s^2) - (j,s^2)$ with $(i,s^2) \in c_1^k(i)$, $(j,s^2) \in \mu_{1,k}^{\prime \prime}$ and $F_{s^2}(j)>F_{s^2}(l)$ for all $l\in \mu_{1,k}^{\prime \prime}(s^2)$, $l \neq j$. Next, student $i$ leaves school $s^2$ to become unmatched and guaranteeing a free slot at school $s^2$. We reach $\mu_{1,k}^{i2^{\prime}} = \mu_{1,k}^{i2} - (i,s^2)$. Next, if $\tau(i,c_1^k) \neq 2$ and $\#\mu_{1,k}^{\prime \prime}(s^3) < q_{s^3}$, then $\mu_{1,k}^{i3} = \mu_{1,k}^{i2^{\prime}} + (i,s^3)$ with $(i,s^3) \in c_1^k(i)$. If $\tau(i,c_1^k) \neq 2$ and $\#\mu_{1,k}^{\prime \prime}(s^3) = q_{s^3}$, then $\mu_{1,k}^{i3} = \mu_{1,k}^{i2^{\prime}} + (i,s^3) - (j,s^3)$ with $(i,s^3) \in c_1^k(i)$, $(j,s^3) \in \mu_{1,k}^{\prime \prime}$ and $F_{s^3}(j)>F_{s^3}(l)$ for all $l\in \mu_{1,k}^{\prime \prime}(s^3)$, $l \neq j$. Next, student $i$ leaves school $s^3$ to become unmatched and guaranteeing a free slot at school $s^3$. We reach $\mu_{1,k}^{i3^{\prime}} = \mu_{1,k}^{i3} - (i,s^3)$. We repeat this process until we reach $\mu_{1,k}^{i\tau(i,c_1^k)^{\prime}} = \mu_{1,k}^{i\tau(i,c_1^k)} - (i,s^{\tau(i,c_1^k)})$.

We repeat the process of Step 1.k.A with each student $i \in \{j \in I \mid (j,s) \in c_1^k\}$ such that $\tau(i,c_1^k) \neq 1$ to reach in the end the matching $\mu_{1,k}^{\prime \prime \prime}$ where all students involved in $c_1^k$ are unmatched and each school $s$ involved in $c_1^k$ has $\#\{(i,s^\prime) \in c_1^k \mid s^\prime = s \}$ free slots. If $k \neq L_1$, then go to Step 1.$k$+1. Otherwise, go to Step 1.End with $\mu_{1,L_1}^{\prime \prime \prime} = \mu_{1,k}^{\prime \prime \prime}$.

\item[Step 1.End.] We have reached $\mu_{1,L_1}^{\prime \prime \prime}$ where each student $i$ involved in $C_1$ is either matched to $\mu^E(i)$ or unmatched and each school $s$ involved in $C_1$ has $\#\{(i,s^\prime) \in \cup_{l=1}^{L_1}c_1^l \mid s^\prime = s \text{ and } \mu_{1,L_1}^{\prime \prime \prime}(i) \neq \mu^E(i)\}$ free slots. Next, those unmatched students join the school they point to in $C_1$ to form the matching $\widetilde{\mu}_1 = \mu_{1,L_1}^{\prime \prime \prime} + \{(i,s) \in M_1 \mid (i,s) \notin \mu_{1,L_1}^{\prime \prime \prime} \}$. If $\widetilde{\mu}_1 = \mu^E$ then the process ends. Otherwise, go to Step 2.1.

\item[Step 2.1.] If $(i,\mu^E(i)) \in \widetilde{\mu}_1$ for all $i \in \{j \in \widehat{I}_{1} \mid (j,s) \in c_2^1\}$ and $1 \neq L_2$ then go to Step 2.2 with $\mu_{2,1}^{\prime \prime \prime } = \widetilde{\mu}_1$. If $(i,\mu^E(i)) \in \widetilde{\mu}_1$ for all $i \in \{j \in \widehat{I}_{1} \mid (j,s) \in c_2^1\}$ and $1 = L_2$ then go to Step 2.End with $\mu_{2,L_2}^{\prime \prime \prime } = \widetilde{\mu}_1$. Let $c_2^1(i)=\{(i,s^l)\}_{l=1}^{\tau(i,c_2^1)}$ such that $(i,s^l) \in c_2^1$ and $s^l = s_{o_l} \neq s^{l+1} = s_{o_{l+1}}$ with $o_{l} < o_{l+1}$ for $l=1,...,\tau(i,c_2^1)-1$. That is, $c_2^1(i)$ is an ordered set of the pairs involving student $i$ in cycle $c_2^1$ where $\tau(i,c_2^1)=\#\{(j,s) \in c_2^1 \mid j=i\}$ is the number of distinct pairs involving student $i$ in cycle $c_2^1$. Let $\Lambda_{2,1}(s)= \#\{(i,s^{\prime}) \notin \widetilde{\mu}_1 \mid (i,s^{\prime}) = (i,s^1) \text{ with } (i,s^1) \in c_2^1(i) \text{ and } s^{\prime} = s\}$ be the number of students who are not yet matched in $\widetilde{\mu}_1$ to school $s$ that ranks them among the first $q_s^2$ positions and is ranked first in their ordered set. If $(i,\mu^E(i)) \notin \widetilde{\mu}_1$ for some $i \in \{j \in \widehat{I}_{1} \mid (j,s) \in c_2^1\}$ then $\mu_{2,1}^{\prime} = \widetilde{\mu}_1 - \{(i,\widetilde{\mu}_1(i)) \mid (i,s) \in c_2^1 \text{ and } \widetilde{\mu}_1(i) \neq i\} + \{(i,s) \mid (i,s) = (i,s^1) \text{ with } (i,s^1) \in c_2^1(i)  \} - \{(j,s) \in \widetilde{\mu}_1 \mid \Lambda_j^s (\widetilde{\mu}_1) < \Lambda_{2,1}(s) - q_s + \#\widetilde{\mu}_1(s)\}$ where $\Lambda_j^s (\widetilde{\mu}_1) = \#\{l \in I \mid (l,s) \in \widetilde{\mu}_1 \text{ and } F_s(l) > F_s(j)\}$ is the number of students who are matched to school $s$ in $\widetilde{\mu}_1$ and have a lower priority than student $j$. Next, if $(i,\mu_{2,1}^{\prime}(i))=(i,\mu^E(i))$ and $c_2^1=\{(i,\mu^E(i))\}$ then $\mu_{2,1}^{\prime \prime} = \mu_{2,1}^{\prime}$. Otherwise,  $\mu_{2,1}^{\prime \prime} = \mu_{2,1}^{\prime} - \{(i,s) \mid (i,s) = (i,s^1) \text{ with } (i,s^1) \in c_2^1(i)  \}$ so that all students involved in $c_2^1$ are unmatched. If $\tau(i,c_2^1) = 1$ for all $i \in \{j \in \widehat{I}_{1} \mid (j,s) \in c_2^1\}$ and $1 \neq L_2$, then go to Step 2.2. If $\tau(i,c_2^1) = 1$ for all $i \in \{j \in \widehat{I}_{1} \mid (j,s) \in c_2^1\}$ and $1 = L_2$, then go to Step 2.End with $\mu_{2,L_2}^{\prime \prime \prime} = \mu_{2,1}^{\prime \prime }$. If $\tau(i,c_2^1) \neq 1$ for some $i \in \{j \in \widehat{I}_{1} \mid (j,s) \in c_2^1\}$ then go to Step 2.1.A.

\item[Step 2.1.A] Take $i \in \{j \in \widehat{I}_{1} \mid (j,s) \in c_2^1\}$ such that $\tau(i,c_2^1) \neq 1$. If $\#\mu_{2,1}^{\prime \prime}(s^2) < q_{s^2}$, then $\mu_{2,1}^{i2} = \mu_{2,1}^{\prime \prime} + (i,s^2)$ with $(i,s^2) \in c_2^1(i)$. If $\#\mu_{2,1}^{\prime \prime}(s^2) = q_{s^2}$, then $\mu_{2,1}^{i2} = \mu_{2,1}^{\prime \prime} + (i,s^2) - (j,s^2)$ with $(i,s^2) \in c_2^1(i)$, $(j,s^2) \in \mu_{2,1}^{\prime \prime}$ and $F_{s^2}(j)>F_{s^2}(l)$ for all $l\in \mu_{2,1}^{\prime \prime}(s^2)$, $l \neq j$. Next, student $i$ leaves school $s^2$ to become unmatched and guaranteeing a free slot at school $s^2$. We reach $\mu_{2,1}^{i2^{\prime}} = \mu_{2,1}^{i2} - (i,s^2)$. Next, if $\tau(i,c_2^1) \neq 2$ and $\#\mu_{2,1}^{\prime \prime}(s^3) < q_{s^3}$, then $\mu_{2,1}^{i3} = \mu_{2,1}^{i2^{\prime}} + (i,s^3)$ with $(i,s^3) \in c_2^1(i)$. If $\tau(i,c_2^1) \neq 2$ and $\#\mu_{2,1}^{\prime \prime}(s^3) = q_{s^3}$, then $\mu_{2,1}^{i3} = \mu_{2,1}^{i2^{\prime}} + (i,s^3) - (j,s^3)$ with $(i,s^3) \in c_2^1(i)$, $(j,s^3) \in \mu_{2,1}^{\prime \prime}$ and $F_{s^3}(j)>F_{s^3}(l)$ for all $l\in \mu_{2,1}^{\prime \prime}(s^3)$, $l \neq j$. Next, student $i$ leaves school $s^3$ to become unmatched and guaranteeing a free slot at school $s^3$. We reach $\mu_{2,1}^{i3^{\prime}} = \mu_{2,1}^{i3} - (i,s^3)$. We repeat this process until we reach $\mu_{2,1}^{i\tau(i,c_2^1)^{\prime}} = \mu_{2,1}^{i\tau(i,c_2^1)} - (i,s^{\tau(i,c_2^1)})$.

We repeat the process of Step 2.1.A with each student $i \in \{j \in \widehat{I}_{1} \mid (j,s) \in c_2^1\}$ such that $\tau(i,c_2^1) \neq 1$ to reach in the end the matching $\mu_{2,1}^{\prime \prime \prime}$ where all students involved in $c_2^1$ are unmatched and each school $s$ involved in $c_2^1$ has $\#\{(i,s^\prime) \in c_2^1 \mid s^\prime = s \}$ free slots.

\item[Step 2.$k$.]($k>1$) If $(i,\mu^E(i)) \in \mu_{2,k-1}^{\prime \prime \prime}$ for all $i \in \{j \in \widehat{I}_{1} \mid (j,s) \in c_2^k\}$ and $k \neq L_2$ then go to Step 2.k+1 with $\mu_{2,k}^{\prime \prime \prime } = \mu_{2,k-1}^{\prime \prime \prime}$. If $(i,\mu^E(i)) \in \mu_{2,k-1}^{\prime \prime \prime}$ for all $i \in \{j \in \widehat{I}_{1} \mid (j,s) \in c_2^k\}$ and $k = L_2$ then go to Step 2.End with $\mu_{2,L_2}^{\prime \prime \prime } = \mu_{2,k-1}^{\prime \prime \prime}$. Let $c_2^k(i)=\{(i,s^l)\}_{l=1}^{\tau(i,c_2^k)}$ such that $(i,s^l) \in c_2^k$ and $s^l = s_{o_l} \neq s^{l+1} = s_{o_{l+1}}$ with $o_{l} < o_{l+1}$ for $l=1,...,\tau(i,c_2^k)-1$. That is, $c_2^k(i)$ is an ordered set of the pairs involving student $i$ in cycle $c_2^k$ where $\tau(i,c_2^k)=\#\{(j,s) \in c_2^k \mid j=i\}$ is the number of distinct pairs involving student $i$ in cycle $c_2^k$. Let $\Lambda_{2,k}(s)= \#\{(i,s^{\prime}) \notin \mu_{2,k-1}^{\prime \prime \prime} \mid (i,s^{\prime}) = (i,s^1) \text{ with } (i,s^1) \in c_2^k(i) \text{ and } s^{\prime} = s\}$ be the number of students who are not yet matched in $\mu_{2,k-1}^{\prime \prime \prime}$ to school $s$ that ranks them among the first $q_s^2$ positions and is ranked first in their ordered set. If $(i,\mu^E(i)) \notin \mu_{2,k-1}^{\prime \prime \prime}$ for some $i \in \{j \in \widehat{I}_{1} \mid (j,s) \in c_2^k\}$ then $\mu_{2,k}^{\prime} = \mu_{2,k-1}^{\prime \prime \prime} - \{(i,\mu_{2,k-1}^{\prime \prime \prime}(i)) \mid (i,s) \in c_2^k \text{ and } \mu_{2,k-1}^{\prime \prime \prime}(i) \neq i\} + \{(i,s) \mid (i,s) = (i,s^1) \text{ with } (i,s^1) \in c_2^k(i)  \} - \{(j,s) \in \mu_{2,k-1}^{\prime \prime \prime} \mid \Lambda_j^s (\mu_{2,k-1}^{\prime \prime \prime}) < \Lambda_{2,k}(s) - q_s + \#\mu_{2,k-1}^{\prime \prime \prime}(s)\}$ where $\Lambda_j^s (\mu_{2,k-1}^{\prime \prime \prime}) = \#\{l \in I \mid (l,s) \in \mu_{2,k-1}^{\prime \prime \prime} \text{ and } F_s(l) > F_s(j)\}$ is the number of students who are matched to school $s$ in $\mu_{2,k-1}^{\prime \prime \prime}$ and have a lower priority than student $j$. Next, if $(i,\mu_{2,k}^{\prime}(i))=(i,\mu^E(i))$ and $c_2^k=\{(i,\mu^E(i))\}$ then $\mu_{2,k}^{\prime \prime} = \mu_{2,k}^{\prime}$. Otherwise,  $\mu_{2,k}^{\prime \prime} = \mu_{2,k}^{\prime} - \{(i,s) \mid (i,s) = (i,s^1) \text{ with } (i,s^1) \in c_2^k(i)  \}$ so that all students involved in $c_2^k$ are unmatched. If $\tau(i,c_2^k) = 1$ for all $i \in \{j \in \widehat{I}_{1} \mid (j,s) \in c_2^k\}$ and $k \neq L_2$, then go to Step 2.k+1. If $\tau(i,c_2^k) = 1$ for all $i \in \{j \in \widehat{I}_{1} \mid (j,s) \in c_2^k\}$ and $k = L_2$, then go to Step 2.End with $\mu_{2,L_2}^{\prime \prime \prime} = \mu_{2,k}^{\prime \prime }$. If $\tau(i,c_2^k) \neq 1$ for some $i \in \{j \in \widehat{I}_{1} \mid (j,s) \in c_2^k\}$ then go to Step 2.k.A.

\item[Step 2.k.A] Take $i \in \{j \in \widehat{I}_{1} \mid (j,s) \in c_2^k\}$ such that $\tau(i,c_2^k) \neq 1$. If $\#\mu_{2,k}^{\prime \prime}(s^2) < q_{s^2}$, then $\mu_{2,k}^{i2} = \mu_{2,k}^{\prime \prime} + (i,s^2)$ with $(i,s^2) \in c_2^k(i)$. If $\#\mu_{2,k}^{\prime \prime}(s^2) = q_{s^2}$, then $\mu_{2,k}^{i2} = \mu_{2,k}^{\prime \prime} + (i,s^2) - (j,s^2)$ with $(i,s^2) \in c_2^k(i)$, $(j,s^2) \in \mu_{2,k}^{\prime \prime}$ and $F_{s^2}(j)>F_{s^2}(l)$ for all $l\in \mu_{2,k}^{\prime \prime}(s^2)$, $l \neq j$. Next, student $i$ leaves school $s^2$ to become unmatched and guaranteeing a free slot at school $s^2$. We reach $\mu_{2,k}^{i2^{\prime}} = \mu_{2,k}^{i2} - (i,s^2)$. Next, if $\tau(i,c_2^k) \neq 2$ and $\#\mu_{2,k}^{\prime \prime}(s^3) < q_{s^3}$, then $\mu_{2,k}^{i3} = \mu_{2,k}^{i2^{\prime}} + (i,s^3)$ with $(i,s^3) \in c_2^k(i)$. If $\tau(i,c_2^k) \neq 2$ and $\#\mu_{2,k}^{\prime \prime}(s^3) = q_{s^3}$, then $\mu_{2,k}^{i3} = \mu_{2,k}^{i2^{\prime}} + (i,s^3) - (j,s^3)$ with $(i,s^3) \in c_2^k(i)$, $(j,s^3) \in \mu_{2,k}^{\prime \prime}$ and $F_{s^3}(j)>F_{s^3}(l)$ for all $l\in \mu_{2,k}^{\prime \prime}(s^3)$, $l \neq j$. Next, student $i$ leaves school $s^3$ to become unmatched and guaranteeing a free slot at school $s^3$. We reach $\mu_{2,k}^{i3^{\prime}} = \mu_{2,k}^{i3} - (i,s^3)$. We repeat this process until we reach $\mu_{2,k}^{i\tau(i,c_2^k)^{\prime}} = \mu_{2,k}^{i\tau(i,c_2^k)} - (i,s^{\tau(i,c_2^k)})$.

We repeat the process of Step 2.k.A with each student $i \in \{j \in \widehat{I}_{1} \mid (j,s) \in c_2^k\}$ such that $\tau(i,c_2^k) \neq 1$ to reach in the end the matching $\mu_{2,k}^{\prime \prime \prime}$ where all students involved in $c_2^k$ are unmatched and each school $s$ involved in $c_2^k$ has $\#\{(i,s^\prime) \in c_2^k \mid s^\prime = s \}$ free slots. If $k \neq L_2$, then go to Step 2.$k$+1. Otherwise, go to Step 2.End with $\mu_{2,L_2}^{\prime \prime \prime} = \mu_{2,k}^{\prime \prime \prime}$.

\item[Step 2.End.] We have reached $\mu_{2,L_2}^{\prime \prime \prime}$ where each student $i$ involved in $C_2$ is either matched to $\mu^E(i)$ or unmatched and each school $s$ involved in $C_2$ has $\#\{(i,s^\prime) \in \cup_{l=1}^{L_2}c_2^l \mid s^\prime = s \text{ and } \mu_{2,L_2}^{\prime \prime \prime}(i) \neq \mu^E(i) \}$ free slots. Next, those unmatched students join the school they point to in $C_2$ to form the matching $\widetilde{\mu}_2 = \mu_{2,L_2}^{\prime \prime \prime} + \{(i,s) \in M_2 \mid (i,s) \notin \mu_{2,L_2}^{\prime \prime \prime} \}$. Notice that $(M_1 \cup M_2)\subseteq \widetilde{\mu}_2$. If $\widetilde{\mu}_2 = \mu^E$ then the process ends. Otherwise, go to Step 3.1.

\item[End.]The process goes on until we reach $\widetilde{\mu}_{\bar{k}} = \cup_{k=1}^{\bar{k}}M_{k} = \mu^E$.

\end{itemize}
\begin{flushright}
\qedsymbol
\end{flushright}

\section*{References}

\begin{description}

\item[Abdulkadiro\u{g}lu, A., and T. Andersson (2022),] \textquotedblleft School choice\textquotedblright , Working Paper, National Bureau of Economic Research (No. W29822).

\item[Abdulkadiro\u{g}lu, A., Y.K. Che, P. A. Pathak, A.E. Roth, and O. Tercieux (2020),]
\textquotedblleft Efficiency, justified envy, and incentives in priority-based matching\textquotedblright , \textit{American Economic Review Insights} 2, 425-42.

\item[Abdulkadiro\u{g}lu, A., and T. Sönmez (2003),] \textquotedblleft School choice: A mechanism design approach\textquotedblright , \textit{American Economic Review} 93, 729-747.






\item[Che, Y.K., and O. Tercieux (2019),] \textquotedblleft Efficiency and stability in large matching markets\textquotedblright , \textit{Journal of Political Economy} 127, 2301-2342.

\item[Chwe, M. S.-Y. (1994),] \textquotedblleft Farsighted coalitional stability\textquotedblright , \textit{Journal of Economic Theory} 63, 299-325.



\item[Do\u{g}an, B., and L. Ehlers (2021),] \textquotedblleft Minimally unstable Pareto improvements over deferred acceptance\textquotedblright , \textit{Theoretical Economics} 16, 1249-1279.

\item[Do\u{g}an, B., and L. Ehlers (2022),] \textquotedblleft Robust minimal instability of the top trading cycles mechanism\textquotedblright , \textit{American Economic Journal: Microeconomics} 14, 556-582.


\item[Dutta, B., and R. Vohra (2017),] \textquotedblleft Rational expectations and farsighted stability\textquotedblright , \textit{Theoretical Economics} 12, 1191-1227.

\item[Ehlers, L. (2007),] \textquotedblleft Von Neumann-Morgenstern stable sets in matching problems\textquotedblright ,\ \textit{Journal of Economic Theory} 134, 537-547.


\item[Gale, D., and L.S. Shapley (1962),] \textquotedblleft College admissions and the stability of marriage\textquotedblright , \textit{American Mathematical Monthly} 69, 9-15.

\item[Haeringer, G., and F. Klijn (2009),] \textquotedblleft Constrained school choice\textquotedblright , \textit{Journal of Economic Theory} 144, 1921-1947.

\item[Haeringer, G. (2017),] Market design: auctions and matching, MIT Press, Cambridge, MA.

\item[Hakimov, R., and O. Kesten (2018),] \textquotedblleft The equitable top trading cycles mechanism for school choice\textquotedblright , \textit{International Economic Review} 59, 2219-2258.



\item[Herings, P.J.J., A. Mauleon, and V. Vannetelbosch (2017),] \textquotedblleft Stable sets in matching problems with coalitional sovereignty and path dominance\textquotedblright , \textit{Journal of Mathematical Economics} 71, 14-19.

\item[Herings, P.J.J., A. Mauleon, and V. Vannetelbosch (2020),] \textquotedblleft Matching with myopic and farsighted players\textquotedblright , \textit{Journal of Economic Theory} 190, 105125.

\item[Herings, P.J.J., A. Mauleon, and V. Vannetelbosch (2019),] \textquotedblleft Stability of networks under horizon-$K$ farsightedness\textquotedblright , \textit{Economic Theory} 68, 177-201.

\item[Kesten, O (2010),] \textquotedblleft School choice with consent \textquotedblright , \textit{The Quarterly Journal of Economics} 125, 1297-1348.


\item[Luo, C., A. Mauleon, and V. Vannetelbosch (2021),] \textquotedblleft Network formation with myopic and farsighted player\textquotedblright , \textit{Economic Theory} 71, 1283-1317.


\item[Mauleon, A., V. Vannetelbosch and W. Vergote (2011),] \textquotedblleft Von Neumann - Morgenstern farsightedly stable sets in two-sided matching\textquotedblright , \textit{Theoretical Economics} 6, 499-521.

\item[Morrill, T. (2015),] \textquotedblleft Two simple variations of top trading cycles\textquotedblright , \textit{Economic Theory} 60, 123-140.





\item[Ray, D. and R. Vohra (2015),] \textquotedblleft The farsighted stable set\textquotedblright , \textit{Econometrica} 83, 977-1011.

\item[Ray, D., and R. Vohra (2019),] \textquotedblleft Maximality in the farsighted stable set\textquotedblright , \textit{Econometrica} 87, 1763-1779.

\item[Reny, P.J. (2022),] \textquotedblleft Efficient matching in the school choice problem\textquotedblright , \textit{American Economic Review} 112, 2025-43.

\item[Roth, A.E. (1982),] \textquotedblleft The economics of matching: stability and incentives\textquotedblright , \textit{Mathematics of Operations Research} 7, 617-628.

\item[Roth, A.E. and M.A.O. Sotomayor (1990),] Two-sided matching, a study in game-theoretic modeling and analysis, Econometric Society Monographs No.18, Cambridge University Press, Cambridge, UK.


\item[Shapley, L.S., and H. Scarf (1974),] \textquotedblleft On cores and indivisibility\textquotedblright , \textit{Journal of Mathematical Economics} 1, 23-37.


\end{description}

\end{document}